\tikzset{
bicolor/.style 2 args={
  dashed,dash pattern=on 10pt off 10pt,#1,
  postaction={draw,dashed,dash pattern=on 10pt off 10pt,#2,dash phase=10pt}
  },
}
\newtheorem{theorem}{Theorem}
\newtheorem{lemma}{Lemma}
\newtheorem{corollary}{Corollary}
\newtheorem{definition}{Definition}
\newcommand{\defoptproblem}[3]{
 \vspace{1mm}
\noindent\fbox{
 \begin{minipage}{0.96\textwidth}
 #1 \\
 {\bf{Input:}} #2 \\
 {\bf{Output:}} #3
 \end{minipage}
 }
 \vspace{1mm}
}
\newcommand{\set}[1]{\left\{ #1 \right\}}
\newcommand{\ifend}{\textbf{endif}}
\newcommand{\diam}{\mbox{\textsf{diam}}}
\newcommand{\wecc}[2]{\mbox{\textsf{ecc}}\left(#1 ~\vert ~ #2\right)}
\newcommand{\ulog}{\mathcal{U}_{2\log}}
\newcommand{\ut}{\mathcal{U}_{3}}
\newcommand{\med}{\mbox{Med}}
\newcommand{\lad}{{\tt lad}}
\newcommand{\umax}{u_{\max}}
\newcommand{\algo}{\textsc{M$\Theta$rse}}
\newcommand{\stz}{\mbox{St}}
\newcommand{\lacla}{\Lambda_G^{\Theta}}
\newcommand{\laside}{\Lambda_G^{\mbox{\scriptsize{side}}}}
\newcommand{\ladist}{\Lambda_G^{\mbox{\scriptsize{dist}}}}
\newcommand{\ladiste}{\Lambda_G^{\emph{\scriptsize{dist}}}}
\newcommand{\lagate}{\Lambda_G^{\mbox{\scriptsize{gate}}}}
\newcommand{\larec}{\Lambda_G^{\mbox{\scriptsize{rec}}}}
\newcommand{\larece}{\Lambda_G^{\emph{\scriptsize{rec}}}}
\newcommand{\lafib}{\Lambda_G^{\mbox{\scriptsize{fib}}}}
\newcommand{\lafibe}{\Lambda_G^{\emph{\scriptsize{fib}}}}
\newcommand{\laarg}[1]{\Lambda_G^{#1}}
\title{Quasilinear-time eccentricities computation, and more, on median graphs}
\author[1]{Pierre Berg\'e}
\author[2,3]{Guillaume Ducoffe}
\author[4]{Michel Habib}
\affil[1]{Université Clermont-Auvergne, CNRS, Mines de Saint-Etienne,
	Clermont-Auvergne-INP, LIMOS, 63000 Clermont-Ferrand, France}
\affil[2]{National Institute for Research and Development in Informatics, Romania}
\affil[3]{University of Bucharest, Romania}
\affil[4]{IRIF, CNRS \& Universit\'e Paris Cité, France}
\date{}
\begin{document}

\maketitle

\begin{abstract}
Computing the diameter, and more generally, all eccentricities of an undirected  graph is an important problem in algorithmic graph theory and the challenge is to identify graph classes for which their computation can be achieved in subquadratic time.
Using a new recursive scheme based on the structural properties of median graphs, we provide a quasilinear-time algorithm to determine all eccentricities for this well-known family of graphs.
Our recursive technique manages specifically balanced and unbalanced parts of the $\Theta$-class decomposition of median graphs. The exact running time of our algorithm is $O(n\log^4 n)$. This outcome not only answers a question asked by B\'en\'eteau {\em et al.} (2020) but also greatly improves a recent result which presents a combinatorial algorithm running in time $O(n^{1.6408}\log^{O(1)} n)$ for the same problem. 

Furthermore we also propose a distance oracle for median graphs with both poly-logarithmic size and query time. Speaking formally, we provide a combinatorial algorithm which computes for any median graph $G$, in quasilinear time $O(n\log^4(n))$, vertex-labels of size $O(\log^3(n))$ such that any distance of $G$ can be retrieved in time $O(\log^4(n))$ thanks to these labels. 
\end{abstract}

\bigskip

\section{Presentation of the contributions: ideas and impact} \label{sec:intro}

A wide literature in graph theory, and more specifically in metric graph theory, is dedicated to median graphs. Given two vertices $u,v \in V(G)$, the interval $I(u,v)$ is the set containing all vertices $x$ metrically between $u$ and $v$, {\em i.e.} $x\in I(u,v)$ if $d(u,v) = d(u,x) + d(x,v)$. 
Formally, median graphs are the graphs $G$ such that for any triplet of distinct vertices $u,v,w \in V(G)$, the intersection $I(u,v) \cap I(v,w) \cap I(w,u)$ is a singleton. From the applications point of view, median graphs are crucial in the study of phylogenetic networks~\cite{BaFoSyRi95,PaBe12}. They are usually mentioned as ``median networks'' in this area of research. They are, in particular, widely considered in the visualization of sequence variations in human mitochondrial DNA~\cite{BaFoRo99,BaMaRi00,BaQuSaMa02,Zi11}.
From a more theoretical point of view, median graphs are in bijection with numerous and diverse notions from discrete mathematics. For example, they represent solutions of 2-SAT formulae~\cite{MuSc79,Sc78}. In geometric group theory, median graphs are exactly the 1-skeletons of CAT(0) cube complexes~\cite{BaCh08,Ch00}. In abstract models of concurrency, median graphs are in bijection with event structures~\cite{BaCo93,SaNiWi93}. Finally, median graphs form a natural subclass of \emph{partial cubes}, {\em i.e.} isometric subgraphs of hypercubes~\cite{Mu78}.

From a graph theory point of view now, median graphs admit many characterizations and structural properties. First, median graphs are bipartite (hence, triangle-free) and do not admit any induced $K_{2,3}$. They are exactly the retracts of hypercubes~\cite{Ba84}, and the Cartesian product of two median graphs is also median~\cite{BaCh08}. Median graphs are sparse: the number $m$ of edges of any median graph $G$ satisfies $m \le n\log_2 n$, where $n = \vert V(G) \vert$. Among the most famous subclasses of median graphs, we can cite: trees, hypercubes, grids (of any dimension), and cogwheels. A very practical characterization of median graphs come from the notion of $\Theta$-classes which provide us with natural edge separators satisfying convexity properties. This concept, used in most of the algorithmic literature cited below, will be a key instrument of our contributions in this article.

Recently, several efficient algorithms, dedicated to solving distance problems on median graphs, have been proposed~\cite{BeChChVa20,BeDuHa22,BeHa21,Ch12,ChLaRa19,GeCoMiNa22}. An important outcome is the algorithm proposed by B\'en\'eteau {\em et al.}~\cite{BeChChVa20} which computes both the median set and the Wiener index of median graphs $G$ in linear time $O(m)$, where $m = \vert E(G) \vert$. A second one~\cite{BeDuHa22} computes all eccentricities of median graphs in subquadratic time $\tilde{O}(n^{1.6408})$, where the $\tilde{O}$ notation neglects poly-logarithmic factors. Constant $1.6408$ comes from a slight improvement on a first version of the algorithm, which achieved $\tilde{O}(n^{\frac{5}{3}})$.

Other contributions focused on subclasses of median graphs too. There is a linear-time algorithm~\cite{BeHa21} which determines the diameter of median graphs for which the dimension of the largest induced hypercube is bounded. Furthermore, Chepoi {\em et al.}~\cite{ChLaRa19} designed distance and routing labeling schemes of $O(\log^3 n)$ bits for cube-free median graphs.

\textbf{Our contributions}. The main outcome of this paper consists in the proposal of a combinatorial algorithm which computes all eccentricites of a median graph in quasilinear time. Before stating formally this contribution, let us recall the problem we treat here. The distance $d(u,v)$ between two vertices $u,v \in V(G)$ is the length of a shortest $(u,v)$-path. Given a vertex $u \in V(G)$, its \textit{eccentricity} $\wecc{u}{G}$ is the maximum distance from $u$ to any other vertex of $G$. The \textit{diameter} and \textit{radius}, which are certainly the most studied metric parameters on graphs, correspond respectively to the maximum/minimum eccentricity of the graph.

\defoptproblem{\textsc{Eccentricities}}{A median graph $G = (V,E)$.}{All labels $\wecc{u}{G} = \max \{d(u,v) : v \in V\}$ for each vertex $u \in V$.}

In this article, we focus in fact on a more general problem, which is a weighted version of the \textsc{Eccentricities} problem. The input graph is a vertex-weighted median graph and the objective is to determine, for each vertex $u$, its \textit{weighted eccentricity} $\wecc{u}{(G,\omega)}$, which is the maximum value $d(u,v) + \omega(v)$. In other words, the weight of the arrival vertex is added to the standard distance. Obviously, it generalizes \textsc{Eccentricities} since fixing weights $0$ to each vertex is equivalent to the classical problem.

\defoptproblem{\textsc{Weighted Eccentricities}}{A weighted median graph $G = (V,E,\omega)$, with $\omega : V \rightarrow \mathbb{N}$.}{All labels $\wecc{u}{(G,\omega)} = \max \{d(u,v) + \omega(v) : v \in V\}$ for each $u \in V$.}

The main contribution of this article is thus stated below. Observe that it greatly enhances the literature as, for now, the best running time for computing all eccentricies of a median graph was in $\tilde{O}(n^{1.6408})$, from~\cite{BeDuHa22}. Moreover, we handle a more general problem.

\begin{restatable}{theorem}{mainEcc}
	\label{th:main_ecc}
	There exists a combinatorial algorithm which computes all weighted eccentricities of a weighted median graph $(G,\omega)$ in quasilinear time $O(n\log^4(n))$.
\end{restatable}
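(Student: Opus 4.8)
The plan is to solve \textsc{Weighted Eccentricities} by divide-and-conquer along $\Theta$-classes, with correctness resting on a ``weight transfer'' reduction and the running time controlled by a weighted centroid-type decomposition. Recall the relevant structure: each $\Theta$-class $E_i$ induces two \emph{convex} halfspaces $H_i^+,H_i^-$ with $V=H_i^+\cup H_i^-$ (disjoint), their boundaries $\partial H_i^+,\partial H_i^-$ are matched edge for edge by $E_i$ via a graph isomorphism $\iota_i:\partial H_i^+\to\partial H_i^-$, each boundary is convex, and for every $v\in H_i^+$ the gate $g_i(v)$ of $v$ onto $H_i^-$ lies in $\partial H_i^-$ and satisfies $d(u,v)=d(u,g_i(v))+d(g_i(v),v)$ for all $u\in H_i^-$. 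A short computation, using that the two boundaries are isometric and that the fibers of the gate map $g_i^+:H_i^+\to\partial H_i^+$ are convex (hence isometric) subgraphs, shows that $d(g_i(v),v)=1+d_{F}(w^+,v)$ where $w^+=\iota_i^{-1}(g_i(v))=g_i^+(v)$ and $F=F_{w^+}=(g_i^+)^{-1}(w^+)$; since $w^+\in F_{w^+}$ this gives, for every $w\in\partial H_i^-$,
\[
\max\{\,d(g_i(v),v)+\omega(v)\;:\;v\in H_i^+,\ g_i(v)=w\,\}\;=\;1+\wecc{\iota_i^{-1}(w)}{(F_{\iota_i^{-1}(w)},\omega)}.
\]
Define $\omega_i^-$ on $H_i^-$ by $\omega_i^-(x)=\omega(x)$ for $x\notin\partial H_i^-$ and $\omega_i^-(w)=\max\{\omega(w),\,1+\wecc{\iota_i^{-1}(w)}{(F_{\iota_i^{-1}(w)},\omega)}\}$ for $w\in\partial H_i^-$, and $\omega_i^+$ symmetrically; these are again integer weights. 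Splitting $\max_v(d(u,v)+\omega(v))$ according to the side of $v$ and using convexity of $H_i^-$ yields the key identity: for every $u\in H_i^-$, $\wecc{u}{(G,\omega)}=\wecc{u}{(H_i^-,\omega_i^-)}$, and symmetrically on $H_i^+$. Hence it suffices to compute $\omega_i^\pm$ and recurse independently on $(H_i^-,\omega_i^-)$ and $(H_i^+,\omega_i^+)$, reading each vertex's answer off the side containing it; the $\Theta$-class and halfspace data of a convex piece are simply the restrictions of those of $G$, so nothing is recomputed down the recursion.

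\textbf{Cost of a single split.} Given $E_i$, the sets $\partial H_i^\pm$, the isomorphism $\iota_i$, and the gate map $g_i^+$ are obtained by one multi-source BFS from $\partial H_i^+$ inside $H_i^+$ together with the rule ``propagate the gate along any edge getting strictly closer to the boundary'', in $O(|E(H_i^+)|)=O(|H_i^+|\log|H_i^+|)$ time (using $|E|\le |V|\log_2|V|$ for median graphs); this also produces the partition of $H_i^+$ into fibers. For each fiber $F_{w^+}$ the scalar $\wecc{w^+}{(F_{w^+},\omega)}$ is one BFS from $w^+$ inside $F_{w^+}$ plus a maximum, in $O(|F_{w^+}|\log|F_{w^+}|)$ time; since the fibers partition $H_i^+$ this sums to $O(|H_i^+|\log|H_i^+|)$. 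Doing the same on the other side, computing the transferred weights $\omega_i^\pm$ costs $O(n\log n)$ per split.

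\textbf{Choosing the split: the balanced/unbalanced dichotomy (the main obstacle).} If $G$ has a ``balanced'' $\Theta$-class, i.e.\ one with $n/3\le|H_i^+|\le 2n/3$, we split on it; since $|H_i^-|+|H_i^+|=n$ exactly, the recursion tree partitions $V$ at every level and has depth $O(\log n)$, so this case alone costs $O(n\log^2 n)$. The hard part is the \emph{unbalanced} regime, where every $\Theta$-class has one tiny side (think of a ``spider'' with many short legs): splitting on one $\Theta$-class strips off only a few vertices, so naive recursion is too deep. Here one proves a structural lemma playing the role of a tree centroid: there is a convex ``core'' $C$ — obtained by collapsing simultaneously several pairwise non-crossing tiny halfspaces toward a central vertex (or cube) — such that $C$ and every stripped piece have at most, say, $3n/4$ vertices. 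The transfer of weights into $C$ is realized by $O(\log n)$ successive single-$\Theta$-class transfers of the previous paragraph — $O(\log n)$ because in a median graph at most $\log_2 n$ $\Theta$-classes can pairwise cross (a set of $k$ pairwise-crossing $\Theta$-classes spans a $k$-cube, so $2^k\le n$) — costing $O(n\log^2 n)$ per recursion node. The bulk of the proof is establishing existence of such a core and that the resulting decomposition still has depth $O(\log n)$, via a size potential argument; a minor point is that the $\Theta$-class decomposition of $G$ is computed once, within the same budget. Combining the two regimes, every recursion node of size $s$ does $O(s\log^3 s)$ work and the recursion has depth $O(\log n)$, yielding total running time $O(n\log^4 n)$; taking $\omega\equiv 0$ recovers \textsc{Eccentricities}.

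I expect the unbalanced case — proving the ``median centroid'' structural lemma and bounding the recursion depth — to be the principal difficulty; the weight-transfer identity and the per-fiber BFS accounting are comparatively routine once the gate/fiber facts for median graphs are in hand.
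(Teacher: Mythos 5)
Your balanced-case argument is fine: the ``weight transfer'' identity $\wecc{u}{(G,\omega)}=\wecc{u}{(H_i^-,\omega_i^-)}$ is a correct (and slightly different) packaging of what the paper does after recursing on both halfspaces, and the per-split $O(n\log n)$ accounting via gates and fibers is sound. The problem is the unbalanced regime, which you yourself flag as ``the bulk of the proof'' and then do not prove; and the structural lemma you sketch in its place does not hold as stated. You ask for a convex core $C$ with $|C|\le 3n/4$ obtained by stripping tiny halfspaces, reachable by only $O(\log n)$ single-$\Theta$-class transfers, with every piece of size at most $3n/4$. Already for a star $K_{1,n-1}$ (every $\Theta$-class unbalanced, each minority halfspace a single leaf) no such core exists: stripping $O(\log n)$ halfspaces leaves a core of size $n-O(\log n)$, so the recursion depth would be $\Omega(n/\log n)$, and shrinking the core to $3n/4$ would require $\Omega(n)$ strips. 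Your sketch is also internally inconsistent: you require the stripped halfspaces to be pairwise \emph{non-crossing}, yet bound their number by $\log_2 n$ using the fact that pairwise \emph{crossing} classes span a hypercube; non-crossing unbalanced classes can number $\Theta(n)$.

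The paper resolves exactly this obstruction in a different way, and that resolution is what is missing from your proposal. In the unbalanced case one shows there is a unique median vertex $v_0$, picks a farthest (weighted) vertex $\umax$, and takes the wide ladder $L_{v_0,\umax}$, a POF of at most $\log_2 n$ classes; the ``core'' (the large set of vertices whose ladder set misses $L_{v_0,\umax}$) is \emph{not} recursed on at all — its eccentricities are computed directly, because for those vertices $v_0$ lies on a shortest path to $\umax$, so $\wecc{v}{(G,\omega)}=d(v,v_0)+d_\omega(v_0,\umax)$ (a convexity-of-boundaries argument). Recursion happens only on the at most $\log_2 n$ pairwise-disjoint slices, each with two weight functions, and the balance threshold must be taken as $n/(2\log n)$ rather than a constant like $n/3$ precisely so that the total size of the $2\ell$ recursive calls stays at most $n$ per level; the resulting depth is $O(\log^2 n)$ (not $O(\log n)$), giving $O(n\log^4 n)$ overall. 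None of these ingredients (unique median vertex, $\umax$, the direct formula for the large set, the doubled calls on slices, the $2\log$ threshold and the $O(\log^2 n)$ depth analysis) appears in your proposal, so the proof as written has a genuine gap at its central step.
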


This result is described in Sections~\ref{sec:first_step} and~\ref{sec:unbalanced}. The idea of our algorithm is in fact relatively simple: $\Theta$-classes, which are presented in the preliminary Section~\ref{sec:notation}, are separators of the input graph $G$ and admit several convexity properties. Hence, inspired by well-known divide-and-conquer methods on graphs~\cite{LiTa79}, we make a tradeoff on the balance of these separators: either the sizes of the two sides generated by the separator are comparable, or not. Thus, we distinguish two cases. 

When the median graph $G$ admits a balanced $\Theta$-class (this notion is defined in Section~\ref{subsec:balanced}), we retrieve all weighted eccentricities by recursively focusing on each side of the separator, with an extra procedure that runs in linear time. As the $\Theta$-class is balanced, there is a non-negligible decrease of the size of each side compared to $n = \vert V(G) \vert$. After obtaining the weighted eccentricities of each side recursively, our extra procedure consists in retrieving the weighted eccentricities of the whole graph by looking for large distances between vertices of different sides. This can be achieved thanks to the \textit{gatedness} of each side: this property of $\Theta$-classes will be recalled in Section~\ref{subsec:def}.

However, when no $\Theta$-class of the input graph is balanced, the previous technique is not efficient anymore since for any $\Theta$-class, the size of its sides does not decrease enough. We provide a list of characterizations of median graphs without balanced $\Theta$-classes. In particular, we observe that such graphs admit a unique median vertex $v_0$. Then, we prove how a BFS starting from $v_0$ together with a reasonable number of recursive calls on convex subgraphs help us in finding all weighted eccentricities of $(G,\omega)$. This second part of our algorithm is more technical than the first one.

As a second result, we propose a \textit{distance oracle} (DO) for median graphs. The objective beyond our DO is to locally store some information
so that one can retrieve the distance between any pair of vertices by inspecting the labels in a very short time. Concretely, we propose an algorithm which assigns a label to each vertex of the graph, and then we show how these labels can be used to compute fast any value $d(u,v)$. Observe however that our labeling is not a \textit{distance labeling scheme}, since we might need to look at more than two vertex labels in order to compute some distance $d(u,v)$. Indeed, the information given by the labels of $u$ and $v$ might not be sufficient to obtain $d(u,v)$, extra labels must be taken into account. Our result is described in Section~\ref{sec:do}.

\begin{restatable}{theorem}{mainDO}
	\label{th:main_do}
	There exists a combinatorial algorithm which computes in quasilinear time $O(n\log^4(n))$, for any median graph $G$, vertex-labels $(\Lambda_G(u))_{u \in V(G)}$ of size $O(\log^3(n))$ such that the distance $d(x,y)$ between a pair $x,y$ of vertices can be retrieved in time $O(\log^4(n))$ thanks to the labels.
\end{restatable}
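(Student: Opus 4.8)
The plan is to piggy-back the distance oracle on the very same balanced/unbalanced $\Theta$-class recursion that drives Theorem~\ref{th:main_ecc}, and to encode in each vertex-label a record of the recursion path together with enough local coordinates to reconstruct distances. Concretely, I would build a recursion tree $\mathcal{T}$ whose root is $G$: at each node we either pick a balanced $\Theta$-class and split into the two gated halves $H^+, H^-$ (recursing on each), or, if no balanced $\Theta$-class exists, we invoke the unbalanced machinery of Section~\ref{sec:unbalanced} (rooted BFS from the unique median $v_0$, plus the bounded number of convex recursive calls). Since each level of balanced splitting shrinks the side by a constant factor and $m\le n\log_2 n$, the tree has depth $O(\log n)$ along balanced branches; the unbalanced nodes contribute only a further poly-logarithmic blow-up because the recursive calls there are on convex subgraphs whose total size is controlled. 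The label $\Lambda_G(u)$ stores, for each ancestor node of $u$ in $\mathcal{T}$: which side $u$ lies in, the gate of $u$ onto the separating $\Theta$-class (equivalently, the coordinate of $u$ in the at most $O(\log n)$ relevant $\Theta$-classes crossed at that node), and — at unbalanced nodes — the BFS-distance from $v_0$ to the relevant gate. Each such record is $O(\log n)$ bits, there are $O(\log^2 n)$ of them (depth $O(\log n)$ times $O(\log n)$ crossed classes per node), giving total size $O(\log^3 n)$ as claimed.

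The query for $d(x,y)$ walks down $\mathcal{T}$ to the nearest common ancestor $w$ of $x$ and $y$ (identifiable by comparing the side-records in $\Lambda_G(x)$ and $\Lambda_G(y)$, a scan costing $O(\log^2 n)$). If $x$ and $y$ land in the same half we simply recurse into that child. When they first separate at node $w$, I would use the gatedness property recalled in Section~\ref{subsec:def}: the gate $g$ of $H^-$ nearest to $x$ (and symmetrically the gate of $x$ onto the separator) lies on a shortest $(x,y)$-path, so $d(x,y)=d(x,g)+d(g,y')$ for appropriate gates, and each of these sub-distances is again read off recursively from a strictly deeper node. At a balanced split, the separating $\Theta$-class is a single cut, so crossing it adds exactly $1$ to the coordinate distance; more carefully, the distance between the two gates on either side of the separator equals the Hamming distance between their stored coordinate vectors restricted to the crossed classes, which is $O(\log n)$ entries and hence computable in $O(\log n)$ time. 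Summing $O(\log n)$ per level over the $O(\log n)$ levels (balanced case) and folding in the bounded-depth unbalanced detours gives query time $O(\log^4 n)$. The preprocessing time is dominated by producing these labels while running the Theorem~\ref{th:main_ecc} algorithm: the $\Theta$-class decomposition and all gates at every recursion node are exactly the quantities that algorithm already computes, so the construction is $O(n\log^4 n)$.

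The main obstacle is the unbalanced case. Along a balanced branch the reconstruction of $d(x,y)$ from gates is the classical ``gate on a shortest path'' identity and is clean; but when the algorithm enters the unbalanced regime, the separator is no longer a single $\Theta$-class whose two sides are both small, and one must instead certify distances through the rooted-BFS structure from $v_0$ together with the several convex subgraphs the algorithm recurses on. The delicate point is to argue that (i) a shortest $(x,y)$-path either stays within one of those convex pieces or passes through a gate onto the $v_0$-rooted skeleton whose distance is recorded, and (ii) the number of nested unbalanced layers on any root-to-leaf path is $O(\log n)$ (not merely bounded per node), so that labels stay $O(\log^3 n)$. I expect both to follow from the characterizations of balanced-free median graphs and the convexity of the recursive pieces established in Section~\ref{sec:unbalanced}, but making the bookkeeping of ``which gate / which BFS-layer'' precise — so that the query knows exactly which $O(1)$ records to combine at each unbalanced node — is where the real work lies.
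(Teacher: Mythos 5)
There is a genuine gap, and it sits exactly where you say ``the real work lies'': the unbalanced case. Your plan is to re-use the unbalanced machinery of Section~\ref{sec:unbalanced}, i.e.\ the slice decomposition $V(G)=V(S_0)\cup\dots\cup V(S_{\ell-1})\cup V(G_\ell)$. That decomposition works for eccentricities only because the large set $G_\ell$ is never recursed on: for $z\in V(G_\ell)$ one shows that a \emph{farthest} vertex is reached through $v_0$ (Lemma~\ref{le:ecc_large}). For a distance oracle this collapses: $G_\ell$ contains at least half the vertices (and can contain $n-O(\log n)$ of them, e.g.\ in a star-like graph where each minority halfspace is a single vertex), a shortest path between two vertices of $G_\ell$ need not pass through $v_0$ or any recorded gate, and recursing on $G_\ell$ shrinks the instance only additively, so the recursion depth would not be poly-logarithmic. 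Your point (i) --- that a shortest $(x,y)$-path either stays in a convex piece or passes through the $v_0$-rooted skeleton --- is false for the slice decomposition, and your point (ii) fails for the same reason.

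The paper's proof uses a different decomposition in the unbalanced regime, and a different balance threshold. It takes $f$ to be the \emph{constant} function $3$ (so $G\in\ut$ means every $\Theta$-class has a minority halfspace of size $<n/3$), and partitions $V(G)\setminus\{v_0\}$ into the fibers $V_L$ of the star $\stz(v_0)$, indexed by \emph{all} nonempty POFs $L$ adjacent to $v_0$ (Lemma~\ref{le:ladder_gated}); each $V_L$ is gated and of size $<n/3$, so every recursive piece shrinks by a constant factor and the depth is $O(\log n)$, which is what makes the $O(\log^3 n)$ label size and the Master-theorem bounds go through. The second missing ingredient is the cross-fiber query mechanism: each vertex stores, via $\lafib$, its gate and gate-distance into every ``neighboring'' fiber $V_{L'}$ with $L'\subsetneq L$, $|L'|=|L|-1$, and a query for $u,v$ in distinct, non-disjoint fibers walks a \emph{ladder sequence} of $O(d)$ POFs from $L_{v_0,u}$ down to $L_{v_0,u}\cap L_{v_0,v}$ and up to $L_{v_0,v}$, chaining gates; correctness rests on showing that the median $m(u,v,v_0)$ lies in $V_{L_{v_0,u}\cap L_{v_0,v}}$ and that a shortest $(u,v)$-path threads through all these fibers (Claims 1--3 of Theorem~\ref{th:do_query}). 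None of this is supplied by, or derivable from, the eccentricity recursion, so the proposal as written does not establish the theorem. (Your balanced-case handling, by contrast, matches the paper's Theorem~\ref{th:balanced_do} in substance, although the ``Hamming distance between coordinate vectors'' step is unnecessary: one simply reads off $d(u,g(u))$ and recurses on $d(g(u),v)$ inside the opposite halfspace.)
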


The proof re-uses the splitting between the balanced and unbalanced cases. When a $\Theta$-class is balanced, we apply similar arguments to those used for the computation of eccentricities, with the difference that the necessary information is put into vertex labels. Concretely, any vertex $u$ is labeled with the identity of the balanced $\Theta$-class considered, the side of $u$, the closest-to-$u$ vertex on the other side (its \textit{gate}), the distance to its gate, and finally the label of $u$ in the graph induced by its side regarding the balanced $\Theta$-class. This whole package allows us to retrieve any distance between two vertices in poly-logarithmic time.

The unbalanced case requires more effort. Our idea consists in partitioning all vertices of the graph in function of their ``direction'' regarding the central vertex $v_0$. To do so, we launch a BFS from $v_0$ and take note of some information, that will be added to the label of any vertex $u\neq v_0$: the distance from $v_0$ to $u$, the $\Theta$-classes traversed to go from $v_0$ to $u$, etc. In addition, some gates of $u$ through certain small convex sets are also computed. With this information, we show again how to retrieve any distance in poly-logarithmic time.


\textbf{Perspectives}. First, we believe that the techniques proposed in this article offer not only tools for different problems on median graphs but also for more general classes of graphs. Observe that larger families of graphs are still impacted by the notion of $\Theta$-class, the main difference consisting in weaker convex characterizations: almost median graphs~\cite{Br07}, pseudo-median graphs~\cite{BaMu91,Ve02} or partial cubes~\cite{Wi84}. In our work, we exploit several times the fact that the \textit{boundary} of each $\Theta$-class is convex, which is a property specific to median graphs and not to these superclasses. Therefore, one should be able to get rid of this argument in order to handle larger families of graphs. However, a tradeoff on the balance of $\Theta$-class stays, in our opinion, a promising starting point for tackling them.

Coming back to median graphs, one can hope producing efficient algorithms by exploiting again this tradeoff technique. A future direction of research could be trying to design algorithms which improve the naive general method for computing other metric parameters, such as the \textit{hyperbolicity}~\cite{Gr87}, the \textit{betweenness} and \textit{reach centralities}~\cite{AbGrWi15},\dots The techniques proposed in our paper can be useful tools for such problems. Eventually, we mention a problem which was not studied yet on median graphs to the best of our knowledge: 
\textsc{Weighted Center}~\cite{BeBhShTa07}. Given a median graph $G$ with vertex weights $\omega : V \rightarrow \mathbb{N}$, the objective is to determine the \textit{weighted center} of $(G,\omega)$, {\em i.e.} the vertex $u$ which minimizes $\max_{v\in V(G)} \omega(v)d(u,v)$. Observe that, with Theorem~\ref{th:main_ecc}, we can deduce, from the weighted eccentricities, some kind of weighted center where weights stand as an additive term and not multiplicative. For this reason, we believe that the techniques we proposed can be fruitful for solving \textsc{Weighted Center} in quasilinear time. Note that \textsc{Weighted Center} admits exact quasilinear-time algorithms for trees and cactii~\cite{BeBhShTa07}, hence targeting median graphs is a natural challenge.

\section{Preliminaries} \label{sec:notation}

We begin with a reminder of some notions of graphs, and more particularly on median graphs. We emphasize on a very important tool in this area: $\Theta$-classes, which are equivalences classes over the edge set of median graphs, and especially the orthogonality between these $\Theta$-classes. From the mathematical point of view, notation $\log$ refers to the natural logarithm, {\em i.e.} $\log n = \log_e n$. When another base is considered, we mention it as a subscript, {\em e.g.} $\log_2$.

\subsection{Definitions and properties from metric graph theory} \label{subsec:def}

All graphs $G = (V,E)$ considered in this paper are undirected, simple (loopless and without multiple edges), finite and connected. To avoid confusions when several graphs are considered, we denote by $V(G)$ (resp. $E(G)$) the vertex set (resp. edge set) of $G$. Usually, we use $n$ to denote the size of the vertex set of $G$, {\em i.e.} $\vert V(G) \vert$, while $m$ denotes the size of the edge set: $m = \vert E(G) \vert$. To improve readibility, edges $(u,v) \in E$ are sometimes denoted by $uv$. Let $N(u)$ be the \textit{open neighborhood} of $u \in V$, {\em i.e.} the set of vertices adjacent to $u$ in $G$. We extend it naturally: for any set $A \subseteq V$, the neighborhood $N(A)$ of $A$ is the set of vertices outside $A$ adjacent to some $u \in A$.

A subgraph $G'$ of $G$ is a graph $G'=(V',E')$, where $V' \subseteq V$ and $E' \subseteq E\cap (V' \times V')$. 
For any $U \subseteq V$, let $E\left[U\right]$ be the set of edges of $G$ with two endpoints in $U$. We denote by $G\left[U\right]$ the subgraph of $G$ induced by $U$: $G\left[U\right] = \left(U,E\left[U\right]\right)$. 

Given two vertices $u,v \in V$, let $d(u,v)$ be the \textit{distance} between $u$ and $v$, {\em i.e.} the length of a shortest $(u,v)$-path.  When the context is not clear enough, we might notice, as a subscript, in which graph these distance are considered, {\em e.g.} $d_G(u,v)$. As a weighted version is defined in the remainder, we may refer to this notion as the \textit{unweighted distance}. The \textit{eccentricity} $\wecc{u}{G}$ of a vertex $u \in V$ in graph $G$ is the length of a longest shortest path starting from $u$. Put formally, $\wecc{u}{G}$ is the maximum value $d(u,v)$ for all $v \in V$: $\wecc{u}{G} = \max_{v \in V} d(u,v)$.
The diameter of graph $G$ is the maximum distance between two of its vertices: $\diam(G) = \max_{u \in V} \wecc{u}{G}$. 

We denote by $I(u,v)$ the \textit{interval} of pair $u,v \in V$. It contains exactly the vertices which are metrically between $u$ and $v$:
$I(u,v) = \set{x \in V: d(u,x) + d(x,v) = d(u,v)}$. The vertices of $I(u,v)$ are lying on at least one shortest $(u,v)$-path.

\begin{definition}[Convex and gated sets]
We say that a set $H\subseteq V$ (or equivalently the induced subgraph $G\left[H\right]$) is \emph{convex} if $I(u,v) \subseteq H$ for any pair $u,v \in H$. Moreover, we say that $H$ is \emph{gated} if any vertex $v \notin H$ admits a $H$-\emph{gate} $g_H(v) \in H$, {\em i.e.} a unique vertex that belongs to all intervals $I(v,x)$, $x\in H$. For any $x \in H$, we have $d(v,g_H(v)) + d(g_H(v),x) = d(v,x)$. 
\end{definition}

Observe that, if $v \in H$, then it admits a natural $H$-gate: itself, since $v$ belongs not only to $H$ but also to all intervals $I(v,x)$, $x \in H$. Gated sets are convex by definition. Indeed, by contradiction, for a gated set $H$, if a shortest path from $x\in H$ to $y\in H$ was containing a section outside $H$, it would imply the existence of two $H$-gates for the vertices of this section. Conversely, convex sets are not necessarily gated in general ({\em e.g.} any pair of vertices in the $3$-clique $K_3$). But, we will see in the remainder that, on median graphs, convexity and gatedness are equivalent notions. 

A well-known property is that the intersection of two gated sets is itself gated.

\begin{lemma}[Intersection of gated subgraphs~\cite{BaCh08}]
Given two gated sets $H_1,H_2$ of a graph $G$, the set $H_1\cap H_2$ is gated. 
\label{le:intersection}
\end{lemma}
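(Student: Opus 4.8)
The plan is to exhibit the gate of an arbitrary outside vertex explicitly, by composing the gate maps of $H_1$ and $H_2$ twice and then showing that a second return to $H_1$ changes nothing. We may assume $H := H_1 \cap H_2 \neq \emptyset$ (otherwise the statement is vacuous), and fix a vertex $x \notin H$; a vertex of $H$ is of course its own gate. The only tool needed is the defining property of a gate: if $H'$ is gated then for every vertex $u$ and every $z \in H'$ one has $d(u,z) = d(u, g_{H'}(u)) + d(g_{H'}(u), z)$; applying this with $z = g_{H'}(w)$ shows moreover that $g_{H'}(u) \in I(u, g_{H'}(w))$ for all vertices $u,w$.

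First I would set $a := g_{H_1}(x)$, $b := g_{H_2}(a)$ and $c := g_{H_1}(b)$, and pick any $p \in H$. Writing down the additivity relations above gives, on the one hand, $d(x,p) = d(x,a) + d(a,p)$ (gate $a$ of $x$ in $H_1$, with $p \in H_1$), $d(a,p) = d(a,b) + d(b,p)$ (gate $b$ of $a$ in $H_2$, with $p \in H_2$) and $d(b,p) = d(b,c) + d(c,p)$ (gate $c$ of $b$ in $H_1$, with $p \in H_1$); chaining these shows that $x, a, b, c, p$ lie on a common geodesic, so in particular $b \in I(a,c)$. On the other hand, $c = g_{H_1}(b)$ together with $a \in H_1$ gives $c \in I(a,b)$. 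Having both $b \in I(a,c)$ and $c \in I(a,b)$ forces $d(b,c) = 0$, i.e. $b = c = g_{H_1}(b) \in H_1$, and hence $b \in H$.

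It then remains to check that this $b$ is indeed the gate of $x$ in $H$. For any $y \in H$ we have $y \in H_1$, so $d(x,y) = d(x,a) + d(a,y)$; and $y \in H_2$, so $d(a,y) = d(a,b) + d(b,y)$; since also $b \in H_1$ we get $d(x,b) = d(x,a) + d(a,b)$, and combining these, $d(x,y) = d(x,b) + d(b,y)$ for every $y \in H$. Uniqueness of such a vertex is immediate (two vertices both lying in all intervals $I(x,y)$, $y \in H$, must coincide by taking $y$ to be each of them in turn). I expect the main obstacle to be precisely the step $b = c$: a naive approach would iterate the gate maps $g_{H_1}, g_{H_2}, g_{H_1}, \dots$ and need a monotonicity or termination argument, whereas the clean point is that the interval inclusions $b \in I(a,c)$ and $c \in I(a,b)$ already cancel after a single round trip, so no infinite descent is required. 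Nonemptiness of $H$ is used only to provide the witness $p$; everything else is purely metric.
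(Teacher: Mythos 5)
Your proof is correct. The paper does not prove this lemma at all --- it is stated with a citation to Bandelt and Chepoi's survey --- so there is no in-paper argument to compare against; your write-up is a valid, self-contained derivation of the cited fact. The key steps all check out: chaining the three gate additivity relations through a witness $p \in H_1 \cap H_2$ gives $d(a,p) = d(a,b)+d(b,c)+d(c,p)$, which combined with the triangle inequality forces $b \in I(a,c)$; the defining property of $c = g_{H_1}(b)$ with $a \in H_1$ gives $c \in I(a,b)$; and the two inclusions together yield $d(b,c)=0$, so $b \in H_1 \cap H_2$. The verification that $b$ lies in $I(x,y)$ for every $y \in H_1 \cap H_2$ and the uniqueness argument are both sound. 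The only cosmetic caveat is the empty-intersection case, which you dispose of as vacuous; under the paper's definition an empty set is not formally gated, but every application of the lemma in the paper is to intersections containing a known common vertex, so this is immaterial.
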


We naturally focus on the set of vertices which admit a given vertex as a gate.

\begin{definition}[Fibers~\cite{BaCh08}]
Given a gated set $H$ of $G$ and a vertex $x \in H$, the \textit{fiber} $F_H[x]$ is the set of vertices which admit $x \in H$ as a gate for $H$. \end{definition}

As each vertex in $H$ is its own gate, the fibers $F_H[x]$, $x \in H$, partition  $V(G)$. A fiber is thus a set of vertices which all admit the same gate for a given gated $H$. In fact, the $H$-gate of some $v \notin H$ is necessarily the vertex of $H$ minimizing the distance to $v$. Observe that, except $x$ itself, $F_H[x]$ contains vertices outside $H$: $F_H[x] \setminus \{x\} \subseteq V(G) \setminus H$. In the following sections, we often manipulate this set $F_H[x] \setminus \{x\}$ that we call the \emph{open fiber}.

\begin{definition}[Open fibers]
Given a gated set $H$ of $G$ and a vertex $x \in H$, we define the \emph{open fiber} as $F_H(x) = F_H[x] \setminus \{x\}$.
\label{def:open_fiber}
\end{definition}

There is a natural linear-time algorithm (in fact, a very slight variation of BFS) for computing the fibers (and open fibers) of a given gated set $H$. To keep this paper as self-contained as possible, we recall a sketch of this algorithm, proposed in~\cite{ChLaRa19}. The classical BFS uses a queue to store the visited vertices of the graph. Here, we initialize this queue by putting into it all the vertices of $H$ (instead of a single starting vertex). Each vertex $v \in V$ is labeled by three variables: $d(v)$ (distance to the gate), $f(v)$ (parent of $v$ in the BFS tree) and $\texttt{fib}(v)$ ($H$-gate of $v$). All vertices $x \in H$ are initialized with $d(x) = 0$, $f(x) = \texttt{NULL}$, and $\texttt{fib}(x) = x$. 
Then, we proceed the search as follows. Once a vertex $v$ is at the head of the queue, all not yet discovered neighbors $w$ of $v$ are inserted into the queue, and we fix $d(w) = d(v) + 1$, $f(w) = v$, $\texttt{fib}(w) = \texttt{fib}(v)$. At the end of the execution, each vertex $v$ is labeled by not only its gate $\texttt{fib}(v) = g_H(v)$ in $H$ but also the (unweighted) distance towards its gate $d(v)$. For the correctness of this BFS traversal, see Lemma 16 and Corollary 6 from~\cite{ChLaRa18}, the open access version of~\cite{ChLaRa19}.

\begin{lemma}[\cite{ChLaRa18,ChLaRa19}]
For any gated set $H$ of a graph $G$, one can compute, in linear time $O(m)=O(n\log n)$, all fibers $F_H[x]$, with $x\in H$, but also all distances from each $v \in V(G)$ to its $H$-gate.
\label{le:compute_gates}
\end{lemma}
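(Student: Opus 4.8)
The plan is to describe the linear-time fiber-computation algorithm explicitly and argue its correctness, since this is essentially the statement of Lemma~\ref{le:compute_gates}. First I would recall the sparsity bound for median graphs, namely $m \le n\log_2 n$, so that any $O(m)$-time procedure is automatically $O(n\log n)$; this immediately reduces the claimed running time to exhibiting an $O(m)$ algorithm. The natural candidate is the ``multi-source BFS'' already sketched in the excerpt: initialize a FIFO queue with all vertices of $H$ at distance $0$, each with $\texttt{fib}(x)=x$ and $f(x)=\texttt{NULL}$, then run the usual BFS relaxation, where a freshly discovered neighbor $w$ of a dequeued vertex $v$ inherits $d(w)=d(v)+1$, $f(w)=v$, $\texttt{fib}(w)=\texttt{fib}(v)$. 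Since every vertex and every edge is touched a constant number of times, the running time is $O(n+m)=O(m)$.

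The correctness argument splits into two claims. First, the multi-source BFS computes, for each $v\notin H$, the true distance $d(v,H)=\min_{x\in H} d(v,x)$ in its label $d(v)$: this is the standard fact that BFS from a super-source adjacent to all of $H$ computes distances to the set $H$, and a shortest $(v,H)$-path ends at a vertex of $H$ realizing that minimum. Second — and this is where the median (more precisely, gated) structure enters — I must show that the vertex of $H$ recorded as $\texttt{fib}(v)$ is precisely the gate $g_H(v)$. Here I would invoke the defining property of gated sets: $g_H(v)$ is the unique vertex of $H$ lying on every interval $I(v,x)$ with $x\in H$, and in particular it is the unique closest vertex of $H$ to $v$, with $d(v,x)=d(v,g_H(v))+d(g_H(v),x)$ for all $x\in H$. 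Uniqueness of the closest vertex is the key point: it guarantees that no matter which shortest-path tree the BFS happens to build, the endpoint of the $(v,H)$-path it discovers must be $g_H(v)$, and that this label propagates consistently down the BFS tree. For a self-contained version one can argue directly: if $w$ gets $\texttt{fib}(w)=\texttt{fib}(v)=g$ from its parent $v$, then $d(w,g)=d(v,g)+1=d(v,H)+1=d(w,H)$ by the first claim, so $g$ is a closest vertex of $H$ to $w$, hence (by gatedness) equals $g_H(w)$; a straightforward induction on BFS order closes this.

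The main obstacle I anticipate is not the BFS mechanics but pinning down the exact fact being cited from \cite{ChLaRa18,ChLaRa19} (their Lemma~16 and Corollary~6), namely that on gated sets the ``closest vertex of $H$'' is well-defined and coincides with the gate, and that this is enough for the label-propagation step to be consistent; all of this ultimately rests on gatedness, which in median graphs is equivalent to convexity. Once that structural input is granted, the remaining work — bounding the running time via $m\le n\log_2 n$, checking that every label is set exactly once, and reading off the open fibers $F_H(x)=F_H[x]\setminus\{x\}$ from the $\texttt{fib}$ labels — is routine. I would therefore present the proof as: (i) the algorithm, verbatim from the sketch; (ii) the $O(m)=O(n\log n)$ time bound; (iii) the two correctness claims with the short inductive argument for the second; (iv) a one-line remark that the distances $d(v)$ produced are exactly $d(v,g_H(v))$, which is what later sections need.
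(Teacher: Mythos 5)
Your proposal is correct and matches the paper's treatment: the paper gives exactly this multi-source BFS (queue initialized with all of $H$, propagating $d$, $f$, and $\texttt{fib}$) and delegates correctness to Lemma~16 and Corollary~6 of the cited reference, while you additionally spell out the key correctness step (the gate is the unique closest vertex of $H$, so the inherited label must be $g_H(w)$), which is a sound and welcome elaboration. No gaps.
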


Let us denote by   $(G,\omega)=((V,E),\omega)$ the \textit{weighted graph} $G$ when it is equipped with an  integer  non-negative weight function on its vertices   $\omega : V \rightarrow \mathbb{N}$. In brief, in this paper, expression \textit{weighted graph} refers to vertex-weighted graphs. Observe that the definitions of the structural notions we introduced above are independent from any weight consideration. Notions of interval, gate, convex and gated sets, and fiber only depend on unweighted distances.

Now, considering a weighted graph $(G,\omega)$, the distance is  generalized to a "\textit{weighted distance}" which is not symmetrical anymore: $d_{\omega}(u,v) = d(u,v) + \omega(v)$. In particular, $d_{\omega}(u,u) = \omega(u)$. When there is some ambiguity on the graph considered, we may write the weighted distance as $d_{(G,\omega)}(u,v)$. 
The \textit{weighted eccentricity} of a vertex $u \in V$ is the maximum weighted distance from this vertex, {\em i.e.} $\wecc{u}{(G,\omega)} = \max_{v \in V} d_{\omega}(u,v)$. In future sections, this notion of weighted distances/eccentricities will allow us to describe recursive algorithms on median graphs based on gated sets.

\subsection{Median graphs and $\Theta$-classes}

From now on, we focus on the family of graphs which is studied in this article: median graphs. 

\begin{definition}[Median graph]
A graph is \textit{median} if, for any triplet $(x,y,z)$ of distinct vertices, the set $I(x,y) \cap I(y,z) \cap I(z,x)$ contains exactly one vertex $m(x,y,z)$ called the \emph{median} of $(x,y,z)$.
\label{def:median}
\end{definition}

Trees, hypercubes, grids and squaregraphs~\cite{BaChEp10} are median graphs.
Median graphs are bipartite and do not contain any induced $K_{2,3}$~\cite{BaCh08,HaImKl11,Mu78}. The Cartesian product of two median graphs is also median~\cite{BrKlSk07}.

\begin{figure}[h]
\centering
\scalebox{0.7}{\begin{tikzpicture}


\node[draw, circle, minimum height=0.2cm, minimum width=0.2cm, fill=black] (P11) at (1,1) {};
\node[draw, circle, minimum height=0.2cm, minimum width=0.2cm, fill=black] (P12) at (1,2.5) {};

\node[draw, circle, minimum height=0.2cm, minimum width=0.2cm, fill=black] (P21) at (3,1) {};
\node[draw, circle, minimum height=0.2cm, minimum width=0.2cm, fill=black] (P22) at (3,2.5) {};
\node[draw, circle, minimum height=0.2cm, minimum width=0.2cm, fill=black] (P23) at (3,4) {};

\node[draw, circle, minimum height=0.2cm, minimum width=0.2cm, fill=black] (P31) at (5,1) {};
\node[draw, circle, minimum height=0.2cm, minimum width=0.2cm, fill=black] (P32) at (5,2.5) {};
\node[draw, circle, minimum height=0.2cm, minimum width=0.2cm, fill=black] (P33) at (5,4) {};

\node[draw, circle, minimum height=0.2cm, minimum width=0.2cm, fill=black] (P41) at (1.3,3.8) {};
\node[draw, circle, minimum height=0.2cm, minimum width=0.2cm, fill=black] (P42) at (1.3,5.3) {};
\node[draw, circle, minimum height=0.2cm, minimum width=0.2cm, fill=black] (P43) at (-0.7,3.8) {};

\node[draw, circle, minimum height=0.2cm, minimum width=0.2cm, fill=black] (P44) at (-0.7,2.0) {};

\node[draw, circle, minimum height=0.2cm, minimum width=0.2cm, fill=black] (P51) at (4.0,5.3) {};
\node[draw, circle, minimum height=0.2cm, minimum width=0.2cm, fill=black] (P52) at (6.0,5.3) {};

\node[draw, circle, minimum height=0.2cm, minimum width=0.2cm, fill=black] (P61) at (7,1) {};
\node[draw, circle, minimum height=0.2cm, minimum width=0.2cm, fill=black] (P62) at (7,2.5) {};

\node[draw, circle, minimum height=0.2cm, minimum width=0.2cm, fill=black] (P45) at (-0.7,5.3) {};

\node[draw, circle, minimum height=0.2cm, minimum width=0.2cm, fill=black] (P71) at (6.0,1.9) {};
\node[draw, circle, minimum height=0.2cm, minimum width=0.2cm, fill=black] (P72) at (6.0,3.4) {};
\node[draw, circle, minimum height=0.2cm, minimum width=0.2cm, fill=black] (P73) at (8.0,1.9) {};
\node[draw, circle, minimum height=0.2cm, minimum width=0.2cm, fill=black] (P74) at (8.0,3.4) {};


\draw[line width = 1.8pt, color = red, dotted] (P11) -- (P12);
\draw[line width = 1.4pt] (P11) -- (P21);
\draw[line width = 1.4pt] (P12) -- (P22);
\draw[line width = 1.8pt, color = red, dotted] (P21) -- (P22);

\draw[line width = 1.4pt] (P21) -- (P31);
\draw[line width = 1.4pt] (P22) -- (P32);
\draw[line width = 1.8pt, color = red, dotted] (P31) -- (P32);

\draw[line width = 1.4pt] (P22) -- (P23);
\draw[line width = 1.4pt] (P23) -- (P33);
\draw[line width = 1.4pt] (P32) -- (P33);

\draw[line width = 1.8pt, color = blue, loosely dashed] (P22) -- (P41);
\draw[line width = 1.8pt, color = blue, loosely dashed] (P12) -- (P43);
\draw[line width = 1.8pt, color = blue, loosely dashed] (P23) -- (P42);
\draw[line width = 1.4pt] (P41) -- (P43);
\draw[line width = 1.4pt] (P41) -- (P42);
\draw[line width = 1.4pt] (P12) -- (P44);

\draw[line width = 1.4pt] (P23) -- (P51);
\draw[line width = 1.4pt] (P33) -- (P52);
\draw[line width = 1.4pt] (P51) -- (P52);

\draw[line width = 1.4pt] (P31) -- (P61);
\draw[line width = 1.4pt] (P32) -- (P62);
\draw[line width = 1.8pt, color = red, dotted] (P61) -- (P62);
\draw[line width = 1.4pt] (P43) -- (P45);

\draw[line width = 1.8pt, color = red, dotted] (P71) -- (P72);
\draw[line width = 1.4pt] (P71) -- (P73);
\draw[line width = 1.4pt] (P72) -- (P74);
\draw[line width = 1.8pt, color = red, dotted] (P73) -- (P74);

\draw[line width = 1.4pt] (P31) -- (P71);
\draw[line width = 1.4pt] (P32) -- (P72);
\draw[line width = 1.4pt] (P61) -- (P73);
\draw[line width = 1.4pt] (P62) -- (P74);

\end{tikzpicture}}
\caption{Example of median graph $G$ with $d=3$. Two $\Theta$-classes are highlighted.}
\label{fig:median_example}
\end{figure}
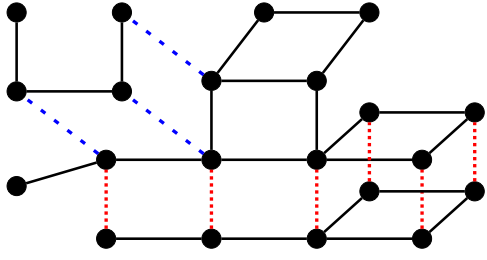

Given an integer $k \ge 1$, we denote by $Q_k$ the hypercube of dimension $k$. Inductively, graph $Q_1$ is the single-edge graph, and $Q_k$ is the Cartesian product of $Q_{k-1}$ and $Q_1$ for $k\ge 2$. In other words, $Q_k$ is obtained by taking graph $Q_{k-1}$ with a copy of it, and connecting each vertex with its own copy. For example, $Q_2$ is a square and $Q_3$ is the well-known 8-vertex cube of dimension $3$.

\begin{definition}[Dimension $d$]
The dimension $d = \mbox{dim}(G)$ of a median graph $G$ is the dimension of the largest hypercube contained in $G$ as an induced subgraph.
\label{def:dimension}
\end{definition}

In other words, $d = \mbox{dim}(G)$ means that $G$ admits $Q_d$ as an induced subgraph, but not $Q_{d+1}$. Median graphs with $d=1$ are exactly the trees. Median graphs with $d\le 2$ are called \textit{cube-free} median graphs. Figure~\ref{fig:median_example} provides us with an example of median graph with dimension $d = 3$. In general, since $Q_d$ contains $2^d$ vertices:
$$d = \dim(G) \le \log_2 n.$$



As stated previously, in general graphs, all gated subgraphs are convex. In median graphs, these two characterizations are equivalent.
\begin{lemma}[\cite{BaCh08,BeChChVa20}]
A subgraph of a median graph is gated if and only if  it is convex.
\end{lemma}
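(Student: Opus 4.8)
The plan is to prove the only nontrivial direction: in a median graph, every convex subgraph is gated. (The converse, that gated implies convex, already holds in arbitrary graphs and is recorded in the excerpt.) So let $H$ be a convex subgraph of a median graph $G$ and fix a vertex $v \notin H$. The natural candidate for the gate is a vertex $g \in H$ at minimum distance from $v$, and I would show that this $g$ has the gate property and is unique.

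First I would establish that \emph{any} closest vertex $g \in H$ to $v$ satisfies $g \in I(v,x)$ for every $x \in H$. Fix such an $x$; we may assume $x \neq g$, since otherwise $g = x \in I(v,x)$ trivially, and also $v \neq g$, $v \neq x$ because $v \notin H$. Thus $(v,g,x)$ is a triple of distinct vertices and its median $m = m(v,g,x)$ exists, with $m \in I(v,g) \cap I(g,x) \cap I(x,v)$. Since $m \in I(g,x)$ and $H$ is convex, $m \in H$. Since $m \in I(v,g)$ we get $d(v,m) \le d(v,g)$, and minimality of $d(v,g)$ over $H$ forces $d(v,m) = d(v,g)$; combined with $d(v,m) + d(m,g) = d(v,g)$ this gives $d(m,g) = 0$, i.e.\ $m = g$. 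Hence $g = m \in I(v,x)$, and in particular $d(v,g) + d(g,x) = d(v,x)$, which is exactly the distance identity required of a gate.

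Next I would deduce uniqueness: if $g_1, g_2 \in H$ are both at minimum distance from $v$, then applying the previous paragraph with $g = g_1$ and $x = g_2$ yields $g_1 \in I(v,g_2)$, so $d(v,g_1) + d(g_1,g_2) = d(v,g_2) = d(v,g_1)$ and thus $g_1 = g_2$. Therefore there is a unique closest vertex $g_H(v)$ of $H$ to $v$, it belongs to all intervals $I(v,x)$ with $x \in H$, and $H$ is gated.

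I do not expect a genuine obstacle: the only points needing care are verifying that $(v,g,x)$ is a triple of \emph{distinct} vertices so that the median is defined (the degenerate case $x = g$ being immediate), and invoking convexity of $H$ at precisely the step ``$m \in I(g,x) \subseteq H$''. The median property of $G$ is used exactly once, applied to the triple $(v,g,x)$.
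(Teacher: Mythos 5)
The paper does not prove this lemma; it is imported from the literature with citations only, so there is no in-paper argument to compare against. Your proof of the nontrivial direction is the standard one and is correct: a single application of the median axiom to the triple $(v,g,x)$, convexity of $H$ to place $m(v,g,x)$ inside $H$, and minimality of $d(v,g)$ to force $m(v,g,x)=g$, which is exactly how this is established in the cited references. One cosmetic remark: the paper's definition of gatedness asks for a \emph{unique vertex belonging to all intervals} $I(v,x)$, $x\in H$, not merely a unique closest vertex; but any $g'$ lying in all these intervals lies in particular in $I(v,g)$, so $d(v,g')+d(g',g)=d(v,g)\le d(v,g')$ forces $g'=g$, and your uniqueness claim upgrades to the required one in one line.
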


In fact, fibers are also gated.

\begin{lemma}[\cite{Ch01,ChLaRa19}]
For any gated set $H$ of a median graph $G$ and any vertex $x\in H$, the set $F_H[x]$ is gated.
\label{le:fibers_gated}
\end{lemma}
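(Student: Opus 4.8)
\textbf{Proof plan for Lemma~\ref{le:fibers_gated}.}

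The plan is to build directly on Lemmas~\ref{le:intersection} and the convex$=$gated equivalence for median graphs, together with the elementary structure of $\Theta$-classes as edge separators with convex sides (which the excerpt promises in Section~\ref{subsec:def}/\ref{sec:notation}). Fix a gated set $H$ and a vertex $x\in H$. The first step is to obtain a \emph{local} description of $F_H[x]$: I claim that $v\in F_H[x]$ if and only if, for \emph{every} neighbour $y$ of $x$ inside $H$, one has $d(v,x)<d(v,y)$, equivalently $x\in I(v,y)$ for all such $y$. One direction is immediate from the definition of a gate; for the converse I would argue that if $g_H(v)=x'\neq x$, then any shortest $(x',x)$-path inside the convex (hence connected) set $H$ starts with a neighbour $y$ of $x$ in $H$ lying on a shortest $(v,y)$-path through $x'$, contradicting $x\in I(v,y)$. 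This reduces gatedness of $F_H[x]$ to understanding, for a single edge $xy$ with $y\in H$, the set of vertices strictly closer to $x$ than to $y$.

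The second step identifies that set with a half-space of a $\Theta$-class. For the edge $xy$, let $\Theta_{xy}$ be its $\Theta$-class and let $W_{xy}=\{z : d(z,x)<d(z,y)\}$ be the side containing $x$; in a median graph (indeed any partial cube) the two sides $W_{xy}, W_{yx}$ partition $V(G)$, and in a median graph $W_{xy}$ is convex, hence gated. Therefore
\[
F_H[x] \;=\; H \cap \bigcap_{\substack{y\in H\\ xy\in E}} W_{xy},
\]
using the characterization from the first step (note $x$ itself lies in every $W_{xy}$ and in $H$, and no other vertex of $H$ does, consistently with $F_H[x]\cap H=\{x\}$). This exhibits $F_H[x]$ as a finite intersection of gated sets — $H$ is gated by hypothesis, each $W_{xy}$ is gated — so by repeated application of Lemma~\ref{le:intersection} the set $F_H[x]$ is gated.

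I expect the main obstacle to be the first step, i.e.\ the clean reduction from "being in the fiber of $x$" to the finitely many halfspace conditions coming from the neighbours of $x$ in $H$; one must be careful that $H$ is convex so that a shortest path between two of its vertices stays in $H$, and that the edge leaving $x$ along such a path is genuinely on a shortest path from $v$, which is where the gate property of $H$ (every $v$ has \emph{some} gate $g_H(v)$, and $d(v,\cdot)$ is "monotone toward the gate") gets used. The second step is essentially bookkeeping once the standard facts about $\Theta$-classes (the $W_{xy}$ partition $V$ and are convex) are in hand, and the conclusion is then just finitely many invocations of Lemma~\ref{le:intersection}. An alternative, if one prefers to avoid the explicit halfspace description, is to prove it by a direct contradiction argument: assume $v\notin F_H[x]$ but that every shortest $(v,w)$-path with $w\in F_H[x]$ avoids some common first vertex, then use the median of an appropriate triple $m(v,x,w)$ to locate a single vertex that must serve as the gate — but I find the intersection-of-gated-sets route shorter and more robust, so that is the one I would write up.
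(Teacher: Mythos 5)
The paper does not actually prove this lemma: it is imported from~\cite{Ch01,ChLaRa19} as a known fact, so there is no in-paper argument to compare yours against, and I can only assess your proof on its own terms. Your overall strategy --- reduce membership in $F_H[x]$ to finitely many halfspace conditions indexed by the neighbours of $x$ inside $H$, then invoke gatedness of halfspaces (Lemma~\ref{le:halfspaces}) and closure of gatedness under intersection (Lemma~\ref{le:intersection}) --- is sound, and your step-1 equivalence is correctly argued: the converse direction, routing a shortest $(g_H(v),x)$-path inside the convex set $H$ and inspecting its last edge, is exactly the right move, and bipartiteness of median graphs gives the equivalence between $d(v,x)<d(v,y)$ and $x\in I(v,y)$ for an edge $xy$.

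However, the displayed identity you conclude with is false as written. You claim $F_H[x] = H \cap \bigcap_{y} W_{xy}$, but $F_H[x]\cap H=\{x\}$ (as you yourself note in the parenthetical), whereas $F_H[x]$ in general contains many vertices outside $H$ --- the fibers partition all of $V(G)$. So the right-hand side of your display equals $\{x\}$, not $F_H[x]$. What your step 1 actually establishes is
\[
F_H[x] \;=\; \bigcap_{\substack{y\in H\\ xy\in E}} W_{xy},
\]
with no intersection with $H$: a vertex $v$ has $H$-gate $x$ if and only if $d(v,x)<d(v,y)$ for every neighbour $y$ of $x$ in $H$, and your argument shows both inclusions (in particular that any $v\in H\setminus\{x\}$, and any $v$ whose gate is some $x'\neq x$, violates the condition for a suitable $y$). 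With that one correction --- plus the degenerate case $N(x)\cap H=\emptyset$, where $H=\{x\}$ by connectivity of convex sets and $F_H[x]=V(G)$ --- the proof closes by finitely many applications of Lemma~\ref{le:intersection}. Intersecting additionally with $H$ is not merely redundant; it would collapse the set to $\{x\}$ and destroy the statement you are trying to prove.
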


We recall the notion of $\Theta$-class~\cite{Dj73} and its implications on median graphs. We say that the edges $uv$ and $xy$ are in relation $\Theta_0$ if there is a square $uvyx$, where $uv$ and $xy$ are opposite. Notation $\Theta$ refers to the reflexive and transitive closure of relation $\Theta_0$. The classes of the equivalence relation $\Theta$ are denoted by $E_1,\ldots,E_q$ and the set of $\Theta$-classes is $\mathcal{E}(G)$.

\begin{definition}[$\Theta$-classes]
Two edges $uv$ and $u'v'$ belong to the same $\Theta$-class if there is a sequence $uv, u_1v_1, \ldots, u_rv_r, u'v'$ such that $u_iv_i$ and $u_{i+1}v_{i+1}$ are opposite edges of a square.
\label{def:classes}
\end{definition}
The size $q$ of $\mathcal{E}(G)$ can fluctuate: hypercube $Q_k$ admits $q=k$ $\Theta$-classes but $2^k$ vertices, while a tree has $q=n-1$ $\Theta$-classes (as many as edges). In any case, the $\Theta$-classes can be assigned to edges of a median graph in linear time.

\begin{lemma}[$\Theta$-classes in linear time~\cite{BeChChVa20}]
There exists an algorithm which computes the $\Theta$-classes $E_1,\ldots,E_q$ of a median graph in linear time $O(m) = O(n\log n)$.
\label{le:linear_classes}
\end{lemma}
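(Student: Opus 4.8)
\textbf{Proof proposal for Lemma~\ref{le:linear_classes}.}
The plan is to reduce the computation of $\Theta$-classes to a single breadth-first search from an arbitrary root, using the fact that median graphs are partial cubes. First I would recall the Djokovi\'c--Winkler characterization: in a bipartite graph, two edges $uv$ and $xy$ are in relation $\Theta$ if and only if $d(u,x)+d(v,y)\neq d(u,y)+d(v,x)$, and in a partial cube this relation is already transitive, so $\Theta=\Theta_0^{*}$ coincides with the "same coordinate" relation of any isometric hypercube embedding. Hence each $\Theta$-class $E_i$ is exactly the set of edges whose removal splits $G$ into the two convex halfspaces (the two ``sides'' of the excerpt). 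The task is therefore to label each edge with the index of the coordinate it toggles.

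The algorithm is the standard partial-cube coordinatization sped up for median graphs. Fix a root $r$ and run BFS to get $d(r,\cdot)$ and the bipartition. For an edge $uv$ with $d(r,u)<d(r,v)$ (a ``downward'' edge toward $r$ from $v$), the $\Theta$-class of $uv$ is determined: two downward edges $uv$ and $u'v'$ are $\Theta$-equivalent iff they lie in a common square, iff $v,v'$ have a common neighbor $w$ with $d(r,w)=d(r,v)-1$ that is the median-type apex completing the square. Concretely I would process vertices in nondecreasing BFS order; when visiting $v$, look at its already-processed neighbors (those at distance $d(r,v)-1$); the key structural fact, which I would prove from the no-induced-$K_{2,3}$ property and the quadrangle/triangle conditions satisfied by median (indeed partial-cube) graphs, is that the subgraph induced on these neighbors together with $v$ is a single hypercube face, so the down-edges at $v$ are pairwise $\Theta$-equivalent, and moreover each such down-edge $vu$ is $\Theta$-equivalent to a uniquely determined down-edge at a strictly smaller level via the square $vu u'' u'$. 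Maintaining a union-find (or simply propagating class identifiers once and for all, since median graphs satisfy the stronger property that every $\Theta$-class is "seen" along any shortest path from $r$) lets each edge receive its label.

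The cost analysis is where the median (and more generally partial-cube) structure is essential: a general bipartite graph would force $\Omega(nm)$ work to test the Djokovi\'c--Winkler condition, but here $m\le n\log_2 n$ by sparsity, and the local test at each vertex $v$ touches only the edges incident to $v$ and, by the quadrangle condition, a number of squares proportional to the degree of $v$; summing incidences gives $O(m)=O(n\log n)$. The main obstacle I expect is precisely this amortization: one must argue that the squares needed to propagate class labels can be found in total linear time, i.e. that no vertex is charged more than $O(\deg(v))$ square completions, which rests on the fact that in a median graph the neighbors of $v$ at a given distance from $r$ induce a single convex cube and that the ``antipodal'' structure of that cube pairs up the down-edges without ambiguity. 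Granting that combinatorial lemma (which is exactly the content cited from~\cite{BeChChVa20}), the traversal is a linear-time post-processing of one BFS, giving the claimed $O(m)=O(n\log n)$ bound.
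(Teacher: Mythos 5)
First, note that the paper does not actually prove this lemma: it is imported wholesale from B\'en\'eteau et al.~\cite{BeChChVa20}, so there is no in-paper argument to compare yours against. Your sketch does follow the right general strategy for the cited algorithm: one BFS from a root, labels propagated through squares between consecutive BFS levels, and correctness resting on the local hypercube structure of the down-neighbourhood of each vertex.

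However, as written your central structural claim is false. You assert that the down-edges at $v$ (edges to neighbours at distance $d(r,v)-1$) ``are pairwise $\Theta$-equivalent.'' The truth is the exact opposite: the edges of a single $\Theta$-class form a matching (Lemma~\ref{le:halfspaces}), so two distinct edges sharing the endpoint $v$ are never $\Theta$-equivalent. What actually holds, and what makes the algorithm work, is that the down-edges at $v$ lie in pairwise \emph{orthogonal}, hence pairwise \emph{distinct}, classes spanning an induced hypercube with $v$ (Lemma~\ref{le:pof_hypercube}); each down-edge $vu_j$ other than a chosen first one is opposite, in a square provided by the quadrangle condition, to a down-edge of $u_1$ whose upper endpoint sits at a strictly lower level, which is how its label is inherited. (Beware also that the quadrangle condition is a median-graph property, not a partial-cube property: in $C_6$ the square relation $\Theta_0$ is empty while the Djokovi\'c--Winkler classes each contain two edges, so for general partial cubes the transitive closure of $\Theta_0$ is not the right relation.) A second gap is the seeding of classes: since your propagation only copies labels from lower levels, you must argue that each class $E_i$ receives a fresh identifier exactly once, namely at the unique vertex whose set of incoming classes is the singleton $\{E_i\}$; this uniqueness is exactly the POF/vertex bijection of Lemma~\ref{le:pof_hypercube}, and it is the point that your phrase ``granting that combinatorial lemma'' defers back to the very reference being proved. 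With these two points repaired, the amortized $O(m)=O(n\log n)$ bound follows as you describe.
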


In median graphs, each class $E_i$, $1\le i\le q$, is a perfect matching cutset and its two sides $H_i'$ and $H_i''$ admit convex characterizations. 

\begin{lemma}[Halfspaces of $E_i$~\cite{BeChChVa20,HaImKl99,Mu80}]
Let $G$ be any median graph and $E_i \in \mathcal{E}(G)$. For any $1\le i\le q$, the graph $G$ deprived of edges of $E_i$, {\em i.e.} $G\backslash E_i = (V,E\backslash E_i)$, has two connected components with respective vertex sets $H_i'$ and $H_i''$, called \emph{halfspaces}. Edges of $E_i$ form a matching. Halfspaces satisfy the following properties:
\begin{itemize}
\item Both $H_i'$ and $H_i''$ are convex/gated.
\item If $uv$ is an edge of $E_i$ with $u \in H_i'$ and $v \in H_i''$, then:\\ $H_i' = \set{x \in V: d(x,u) < d(x,v)}$\\$H_i'' = \set{x \in V: d(x,v) < d(x,u)}$.
\end{itemize}
\label{le:halfspaces}
\end{lemma}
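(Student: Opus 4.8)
The plan is to fix an arbitrary edge $uv \in E_i$, put $W_{uv} = \{x \in V : d(x,u) < d(x,v)\}$ and $W_{vu} = \{x \in V : d(x,v) < d(x,u)\}$, and show that $\{W_{uv},W_{vu}\}$ is exactly the pair of vertex sets of the two components of $G \backslash E_i$. Since $G$ is bipartite and $u \sim v$, the values $d(x,u)$ and $d(x,v)$ have opposite parities and differ by at most $1$, hence by exactly $1$; so $V = W_{uv} \sqcup W_{vu}$. It then remains to prove three things: that $G[W_{uv}]$ and $G[W_{vu}]$ are connected and that $E_i$ is precisely the set of edges between them (this identifies them with $H_i'$ and $H_i''$); that they are convex (gatedness is then free from the earlier equivalence of convex and gated subgraphs in median graphs); and that $E_i$ is a matching. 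The distance characterization in the last bullet is then immediate from the very definition of $W_{uv},W_{vu}$, once we know these sets do not depend on which edge of $E_i$ was chosen.

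First I would record an elementary stability property: if $x \in W_{uv}$ and $y$ lies on a shortest $(x,u)$-path, then $y \in W_{uv}$. Indeed, taking a step from $y'$ to $y$ toward $u$ gives $d(y,u) = d(y',u) - 1$, and if $y \in W_{vu}$, i.e.\ $d(y,v) = d(y,u) - 1$, then $d(y,v) = d(y',u) - 2$ while $d(y',v) = d(y',u) + 1$, contradicting $|d(y,v) - d(y',v)| \le 1$ (as $y' \sim y$). Hence every vertex of $W_{uv}$ is joined to $u$ by a path inside $W_{uv}$, so $G[W_{uv}]$ is connected, and likewise $G[W_{vu}]$. Now I would invoke the classical structural input: in a median graph — indeed in any partial cube — a whole $\Theta$-class is a minimal edge cut. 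Equivalently, using the isometric embedding $\chi : V \hookrightarrow \{0,1\}^D$ of $G$ into a hypercube (recalled in Section~\ref{sec:intro}, \cite{Mu78}), each $\Theta$-class $E_i$ coincides with the full set of edges that flip one fixed coordinate $\ell(i)$. Granting this, the components of $G \backslash E_i$ are $H_i' = \chi^{-1}(\{z : z_{\ell(i)} = 0\})$ and $H_i'' = \chi^{-1}(\{z : z_{\ell(i)} = 1\})$, and for any edge $uv \in E_i$ these are exactly $W_{uv}$ and $W_{vu}$, since $d(x,u)$ and $d(x,v)$ are Hamming distances that differ according to the $\ell(i)$-th coordinate of $x$. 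This shows in particular that $W_{uv},W_{vu}$ are independent of the chosen edge, giving the distance characterization; it gives convexity because a shortest $(x,y)$-path has length $\|\chi(x)-\chi(y)\|_1$ and hence flips each coordinate at most once, so if $x,y \in H_i'$ it never flips coordinate $\ell(i)$ and stays in $H_i'$; gatedness follows from convexity by the prior lemma on median graphs; and $E_i$ is a matching since $uv,uw \in E_i$ forces $\chi(v)$ and $\chi(w)$ to both equal $\chi(u)$ with its $\ell(i)$-th bit flipped, so $v=w$.

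The one non-routine ingredient, and the step I expect to be the real obstacle, is exactly the boxed classical fact: that an entire $\Theta$-class equals a single coordinate's flip-set, equivalently that removing $E_i$ disconnects $G$ into \emph{exactly} two pieces (it is clear that $\Theta$-related edges lie in the same cut, but not a priori that the cut contains no further edges). In a write-up I would cite this directly from \cite{BeChChVa20,HaImKl99,Mu80}, or prove it in two moves: (i) the easy direction, that $\Theta_0$-related edges flip the same coordinate, because a $4$-cycle embeds in a hypercube by flipping exactly two coordinates, each twice, so opposite sides of a square flip the same one; and (ii) the harder converse, that any two edges flipping a common coordinate $\ell$ are $\Theta$-related, which is where bipartiteness together with the $K_{2,3}$-free (median) condition is genuinely used, via Djoković's convexity argument on halfspaces. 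Everything else — connectivity of the two halves, convexity from the no-double-flip observation, gatedness, the matching property and the distance formula — are then the short elementary verifications above.
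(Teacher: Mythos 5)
The paper does not actually prove Lemma~\ref{le:halfspaces}: it is imported verbatim from the cited literature \cite{BeChChVa20,HaImKl99,Mu80}, so there is no in-paper argument to compare against. Your sketch is a correct reconstruction of the standard proof, and its internal logic is sound: the parity argument giving $V = W_{uv} \sqcup W_{vu}$, the one-step stability claim showing each $W$ is connected, the no-double-flip argument for convexity, the deduction of gatedness from the convex-iff-gated equivalence (which the paper states before this lemma, so there is no ordering problem), and the injectivity argument for the matching property are all fine. You have also correctly located the only genuinely non-elementary ingredient, namely that the $\Theta$-class generated by the square relation of Definition~\ref{def:classes} coincides with the \emph{entire} edge cut between $W_{uv}$ and $W_{vu}$ (equivalently, with a single coordinate flip-set of the isometric hypercube embedding of \cite{Mu78}). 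Be aware, though, that this ingredient is essentially the whole content of the lemma: once one knows $E_i$ is exactly that cut, everything else is routine, so ``citing it directly from \cite{BeChChVa20,HaImKl99,Mu80}'' would make your proof circular in substance, since those are precisely the references the lemma itself leans on. A self-contained write-up therefore has to carry out your step (ii) -- Djokovi\'c's argument that any two edges crossing the same cut are linked by a ladder of squares, which is where the median/$K_{2,3}$-free hypothesis and the quadrangle property enter -- rather than outsource it; as a proof \emph{plan} your answer is accurate and honest about where the work lies, but as written it proves the easy envelope around the classical core rather than the core itself.
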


We say a \textit{minority halfspace} is a halfspace, say $H_i'$ w.l.o.g, such that $\vert H_i' \vert < \vert H_i'' \vert$. The opposite halfspace $H_i''$ is thus naturally called a \textit{majority halfspace}. Moreover, we say that $H_i'$ and $H_i''$ are \textit{egalitarian halfspaces} if $\vert H_i' \vert = \vert H_i'' \vert$. In summary, a pair $(H_i',H_i'')$ is made up of either a minority-majority configuration or two egalitarian halfspaces.

A very nice and powerful characterization of median graphs is the construction by convex expansions proposed by Mulder~\cite{Mu78,Mu80}. Indeed, given a graph $G$ with two convex subgraphs $G_1$ and $G_2$ covering $G$, the \textit{convex expansion} $G'$ of $G$ is the graph obtained by adding both an isomorphic copy of $G_1 \cap G_2$ and a matching connecting the corresponding vertices between $G_1 \cap G_2$ and its copy. While $G_1 \cap G_2$ stays attached to $G_1$, its copy replaces it in $G_2$, and the matching connects both versions of $G_1 \cap G_2$. The apparition of this matching in the new graph $G'$ consists in the creation of a new $\Theta$-class. In summary, any median graph can be obtained by successive convex expansions starting from the single-vertex graph~\cite{Mu78}.

We denote by $\partial H_i'$ the subset of $H_i'$ containing the vertices which are adjacent to a vertex in $H_i''$: $\partial H_i' = N(H_i'')$. Put differently, the set $\partial H_i'$ is made up of vertices of $H_i'$ which are endpoints of edges in $E_i$. Symmetrically, set $\partial H_i''$ contains the vertices of $H_i''$ which are adjacent to $H_i'$. We say these sets are the \textit{boundaries} of halfspaces $H_i'$ and $H_i''$ respectively. A halfspace satisfying $H_i' = \partial H_i'$ is called a \textit{peripheral halfspace} and its associated $\Theta$-class is also called a \textit{peripheral $\Theta$-class}. As observed in~\cite{BeChChVa20} any median graph necessarily admits at least one peripheral $\Theta$-class. 

\begin{lemma}[Boundaries~\cite{BeChChVa20,HaImKl99,Mu80}]
Let $G$ be a median graph and $E_i \in \mathcal{E}(G)$. Both $\partial H_i'$ and $\partial H_i''$ are convex/gated. Also, the edges of $E_i$ define an isomorphism between $\partial H_i'$ and $\partial H_i''$.
\label{le:boundaries}
\end{lemma}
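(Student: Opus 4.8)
The plan is to argue directly from bipartiteness, the convexity/gatedness of the halfspaces, the existence/uniqueness of medians, and the fact (recorded right after the definition of fibers) that a gate of a gated set is its closest vertex. Throughout I drop the index $i$ and write $H' = H_i'$, $H'' = H_i''$, $E = E_i$, $\partial H' = N(H'')$, $\partial H'' = N(H')$. The first step is to set up the matching and its metric consequences. Any $x \in \partial H'$ has exactly one neighbour in $H''$: two distinct such neighbours $p,q$ would satisfy $d(p,q)=2$ (bipartiteness) with $x \in I(p,q) \subseteq H''$ (convexity of $H''$), contradicting $x \in H'$. This yields the bijection $\sigma : \partial H' \to \partial H''$ given by the edges of $E$. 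Since $d(x,H'')=1$ and a gate is the closest vertex of the gated set, the $H''$-gate of $x$ is $\sigma(x)$; symmetrically the $H'$-gate of $\sigma(y)$ is $y$. Plugging these into the gate identity one gets, for all $x,y \in \partial H'$,
$$d(x,\sigma(y)) = d(x,y)+1, \qquad d(\sigma(x),\sigma(y)) = d(x,y).$$

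Next, the isomorphism. Let $x,y \in \partial H'$ with $xy \in E(G)$ and consider $m := m(x,\sigma(x),\sigma(y))$; since $x \sim \sigma(x)$ we have $m \in I(x,\sigma(x)) = \{x,\sigma(x)\}$. If $m=x$, then $x \in I(\sigma(x),\sigma(y)) \subseteq H''$ (convexity of $H''$), impossible. Hence $m=\sigma(x)$, so $\sigma(x) \in I(x,\sigma(y))$ and $d(\sigma(x),\sigma(y)) = d(x,\sigma(y)) - 1 = d(x,y) = 1$, i.e. $\sigma(x) \sim \sigma(y)$. The same argument applied to $\sigma^{-1}$ gives the converse, so $\sigma$ is a graph isomorphism $\partial H' \cong \partial H''$ induced by $E$.

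The core of the proof is the convexity of $\partial H'$ (then $\partial H''$ follows by symmetry, and ``convex $\Rightarrow$ gated'' on median graphs closes both cases). I would isolate the following step: if $a,b \in \partial H'$, $c \in I(a,b)$, $c \sim a$ and $c \ne b$, then $c \in \partial H'$. Suppose not; then $c \in H'$ (convexity of $H'$) has no neighbour in $H''$, and $d(\sigma(a),c)=2$ (bipartiteness and $\sigma(a) \sim a \sim c$). Any length-$2$ path from $\sigma(a)$ to $c$ must be $\sigma(a),a,c$: its middle vertex cannot lie in $H''$ (that would make it a neighbour of $c$ in $H''$), so it lies in $H'$, is a neighbour of $\sigma(a)$ in $H'$, and therefore equals $a$. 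Thus $I(\sigma(a),c) = \{\sigma(a),a,c\}$, so $m(\sigma(a),\sigma(b),c)$ — which lies in $I(\sigma(a),c) \cap I(\sigma(a),\sigma(b)) \subseteq \{\sigma(a),a,c\} \cap H'' = \{\sigma(a)\}$ — equals $\sigma(a)$; hence $\sigma(a) \in I(\sigma(b),c)$ and $d(\sigma(b),c) = d(\sigma(a),\sigma(b)) + 2 = d(a,b)+2$. But from the gate identity $d(\sigma(b),c) = d(b,c)+1 = d(a,b)$, a contradiction. With this step in hand, for arbitrary $x,y \in \partial H'$ and $z \in I(x,y)$ I would fix a geodesic $x = v_0, v_1, \dots, v_k = y$ through $z$ and induct: $v_0 = x \in \partial H'$; if $v_j \in \partial H'$ with $j+1 < k$, the step with $(a,b,c) = (v_j, y, v_{j+1})$ gives $v_{j+1} \in \partial H'$; and $v_k = y \in \partial H'$ too. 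Hence every $v_j$, in particular $z$, lies in $\partial H'$, so $\partial H'$ is convex.

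I expect the delicate point to be exactly this inductive step: first pinning down that the only length-$2$ path from $\sigma(a)$ to $c$ is $\sigma(a),a,c$ (this is where ``$c \notin \partial H'$'' is consumed), and then squeezing out the contradiction with a median; everything else is bookkeeping with parities, the gate identities, and convexity of the halfspaces. An alternative would be to invoke Mulder's convex-expansion theorem — in the expansion that creates $E$, the two boundaries are by construction a convex set and an isomorphic copy of it — but this requires checking that the boundaries of all the other $\Theta$-classes survive convex expansions, which trades one piece of bookkeeping for another; I would stick with the direct argument above.
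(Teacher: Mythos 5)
Your proof is correct, but note that the paper never proves Lemma~\ref{le:boundaries} at all: it is imported as a known fact with citations to Mulder and to Hammack--Imrich--Klav\v{z}ar, where it is obtained from the general theory of the Djokovi\'c--Winkler relation and convex expansions. Your argument is therefore a genuinely different (and self-contained) route, using only ingredients already stated in the paper: convexity/gatedness of halfspaces, the fact that the gate is the closest vertex of a gated set, the matching property of $E_i$, and the median axiom. The steps check out: uniqueness of the $E_i$-neighbour via bipartiteness and convexity of $H_i''$ gives the bijection $\sigma$ and the two gate identities $d(x,\sigma(y))=d(x,y)+1$ and $d(\sigma(x),\sigma(y))=d(x,y)$; the convexity step is sound, since $c\notin\partial H_i'$ forces $I(\sigma(a),c)=\{\sigma(a),a,c\}$, the median $m(\sigma(a),\sigma(b),c)$ is then pinned to $\sigma(a)$, and the resulting $d(\sigma(b),c)=d(a,b)+2$ contradicts $d(\sigma(b),c)=1+d(b,c)=d(a,b)$ coming from the gate of $\sigma(b)$ in $H_i'$; and the edge-by-edge induction along a geodesic legitimately upgrades this local statement to convexity of $\partial H_i'$, after which gatedness follows from the convex-gated equivalence in median graphs. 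Two minor remarks: the median computation in your isomorphism step is redundant, since $d(\sigma(x),\sigma(y))=d(x,y)$ already shows that $\sigma$ preserves and reflects adjacency; and the hypothesis $c\neq b$ in your key claim is superfluous, as $c=b$ would put $c$ in $\partial H_i'$ outright. What your approach buys is independence from the convex-expansion machinery (whose use, as you note, would require tracking how the other boundaries behave under expansions); what the cited proofs buy is that the statement comes packaged with the rest of the structure theory of $\Theta$-classes, which the paper relies on anyway.
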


As a consequence, suppose $uv$ and $u'v'$ belong to $E_i$: if $uu'$ is an edge and belongs to class $E_j$, then $vv'$ is an edge too and it belongs to $E_j$. We pursue with another property related to the fact that median graphs $G$ are bipartite. The \textit{$v_0$-orientation} of the edges of $G$ according to some vertex $v_0 \in V(G)$ is such that, for any edge $uv$, the orientation is $\vv{uv}$ if $d(v_0,u) < d(v_0,v)$. Indeed, we cannot have $d(v_0,u) = d(v_0,v)$ as $G$ is bipartite. 

\begin{lemma}[Orientation~\cite{BeChChVa20}]
All edges of a median graph $G$ can be oriented according to any vertex $v_0 \in V(G)$.
\end{lemma}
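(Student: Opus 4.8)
The plan is to reduce the statement to the single claim that $d(v_0,u)\neq d(v_0,v)$ for every edge $uv\in E(G)$. Once this is proved, the $v_0$-orientation is well defined globally: orienting each edge from its endpoint closer to $v_0$ towards the one farther from $v_0$ is a purely local rule that depends only on the two intrinsic quantities $d(v_0,u)$ and $d(v_0,v)$, so it simultaneously and consistently orients every edge of $G$.

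To establish the claim, the key ingredient is that median graphs are bipartite, which was recalled in the preliminaries; consequently every closed walk of $G$ has even length. Given an edge $uv$, I would fix a shortest $(v_0,u)$-path $P$ and a shortest $(v_0,v)$-path $Q$, and form the closed walk obtained by following $P$, then the edge $uv$, then $Q$ traversed backwards. This walk has length $d(v_0,u)+1+d(v_0,v)$, which must be even, so $d(v_0,u)$ and $d(v_0,v)$ have opposite parities and in particular are distinct. Combining this with the elementary inequality $|d(v_0,u)-d(v_0,v)|\le 1$, valid because $uv$ is an edge, even yields the sharper fact $|d(v_0,u)-d(v_0,v)|=1$; hence exactly one of $d(v_0,u)<d(v_0,v)$ and $d(v_0,v)<d(v_0,u)$ holds, and the orientation $\vv{uv}$ (respectively $\vv{vu}$) is unambiguous.

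There is essentially no obstacle here: the argument uses only the bipartiteness of median graphs together with the standard fact that closed walks in bipartite graphs have even length. The one point worth making explicit in the write-up is that the construction produces a consistent orientation of \emph{every} edge at once, rather than requiring any compatibility check between edges, precisely because the defining rule is local.
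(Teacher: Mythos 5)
Your proof is correct and matches the paper's own justification: the paper defines the $v_0$-orientation by the rule $\vv{uv}$ whenever $d(v_0,u)<d(v_0,v)$ and notes that $d(v_0,u)=d(v_0,v)$ is impossible for an edge $uv$ precisely because median graphs are bipartite, which is the same parity argument you spell out via the closed walk $P\cdot uv\cdot Q^{-1}$. Your write-up just makes the standard bipartiteness/parity reasoning explicit; no further comment is needed.
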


Considering such orientation fixed, we might refer to the vertex $v_0$ as the \emph{canonical basepoint}.
Given two vertices $u,v \in V$, the set which contains the $\Theta$-classes separating $u$ from $v$ is called the \textit{signature} $\sigma_{u,v}$. For example, if $u \in H_i'$ and $v \in H_i''$, then $E_i \in \sigma_{u,v}$.

\begin{definition}[Signature $\sigma_{u,v}$~\cite{BeDuHa22}]
We say that the {\em signature} of the pair of vertices $u,v$, denoted by $\sigma_{u,v}$, is the set of classes $E_i$ such that $u$ and $v$ are separated in $G\backslash E_i$. In other words, $u$ and $v$ are in different halfspaces of $E_i$.
\label{def:signature}
\end{definition}

As stated in~\cite{BeHa21}, the signature of two vertices provides us with the composition, in terms of $\Theta$-classes, of any shortest $(u,v)$-path. 

\begin{lemma}[\cite{BeHa21}]
For any shortest $(u,v)$-path $P$, the edges in $P$ belong to classes in $\sigma_{u,v}$ and, for any $E_i \in \sigma_{u,v}$, there is exactly one edge of $E_i$ in path $P$. Conversely, a path containing at most one edge of each $\Theta$-class is a shortest path between its departure and its arrival.
\label{le:signature}
\end{lemma}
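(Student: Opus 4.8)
The final statement is Lemma~\ref{le:signature}, characterizing shortest paths in median graphs via $\Theta$-classes.

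\medskip

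\noindent\textbf{Proof plan.}
The plan is to prove the two implications separately, using only that $G$ is bipartite together with the halfspace structure of $\Theta$-classes (Lemma~\ref{le:halfspaces}). The core observation is that each class $E_i$ is a matching cutset separating $G$ into the two halfspaces $H_i'$ and $H_i''$, so a walk from $u$ to $v$ must cross $E_i$ an odd number of times precisely when $E_i \in \sigma_{u,v}$ (i.e., $u$ and $v$ lie on opposite sides), and an even number of times otherwise.

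\medskip

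\noindent\emph{Step 1: every edge of a shortest $(u,v)$-path lies in a class of $\sigma_{u,v}$, and each such class is used exactly once.}
First I would fix a shortest $(u,v)$-path $P$ of length $\ell = d(u,v)$ and, for each class $E_i\in\mathcal{E}(G)$, count how many edges of $P$ belong to $E_i$. Since removing $E_i$ disconnects $G$ into $H_i'$ and $H_i''$, and consecutive vertices of $P$ switch sides exactly when the traversed edge is in $E_i$, the number of $E_i$-edges on $P$ has the same parity as ``do $u$ and $v$ lie in different halfspaces of $E_i$'': it is odd iff $E_i\in\sigma_{u,v}$. In particular $P$ uses at least one edge from every class in $\sigma_{u,v}$, so $\ell \ge |\sigma_{u,v}|$. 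For the reverse inequality I would argue that $d(u,v) \le |\sigma_{u,v}|$: this follows because one can always walk from $u$ toward $v$ decreasing the set of separating classes by one at each step — concretely, if $u \ne v$ pick any class $E_i\in\sigma_{u,v}$ with $u\in\partial H_i'$ realizing this (such a class/edge exists since $u$ has a neighbor closer to $v$, as $G$ is connected and one may invoke Lemma~\ref{le:halfspaces} to see that moving across a separating class strictly shrinks the signature), move to the matched neighbor $u'\in H_i''$, and note $\sigma_{u',v} = \sigma_{u,v}\setminus\{E_i\}$. Induction gives $d(u,v)\le|\sigma_{u,v}|$. Combining, $\ell = d(u,v) = |\sigma_{u,v}|$, and since $P$ has $\ell$ edges while using at least one from each of the $|\sigma_{u,v}|$ classes of $\sigma_{u,v}$ and none from outside $\sigma_{u,v}$ (any edge of $P$ in a class $E_j\notin\sigma_{u,v}$ would force, by the parity count, a second $E_j$-edge on $P$, pushing the length above $|\sigma_{u,v}|$), it uses \emph{exactly} one edge from each class of $\sigma_{u,v}$ and no others.

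\medskip

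\noindent\emph{Step 2: a path using at most one edge per $\Theta$-class is shortest.}
Let $P$ be an $(u,v)$-path with at most one edge in each $\Theta$-class, of length $k$. Each class $E_i$ contributing an edge to $P$ must satisfy $E_i\in\sigma_{u,v}$: the parity count from Step~1 shows that a class used an odd (hence, here, exactly one) number of times on $P$ is a separating class. Conversely, every $E_i\in\sigma_{u,v}$ must be crossed by $P$ an odd number of times, hence at least once, hence (by hypothesis) exactly once. Thus the multiset of classes used by $P$ is exactly $\sigma_{u,v}$, so $k = |\sigma_{u,v}| = d(u,v)$ by Step~1, and $P$ is a shortest path.

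\medskip

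\noindent The step I expect to be the main obstacle is establishing $d(u,v) \le |\sigma_{u,v}|$ cleanly — that is, verifying that from any $u \ne v$ there is a neighbor $u'$ with $\sigma_{u',v}$ strictly contained in $\sigma_{u,v}$. One must make sure the chosen crossing edge genuinely belongs to a separating class and that crossing it removes that class from the signature without adding new ones; the latter is automatic because the signature is exactly the symmetric-difference-type set determined by halfspace membership, but the former requires knowing that at least one separating class has an edge incident to $u$, which I would justify via the convexity of halfspaces in Lemma~\ref{le:halfspaces} (a shortest $u$–$v$ path's first edge crosses some class, and that class separates $u$ from $v$). Everything else is a parity bookkeeping argument over the matching cutsets $E_i$.
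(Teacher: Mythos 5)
Your proof is correct, and it is genuinely different from the one the paper relies on. The paper does not reprove this lemma in full: it cites~\cite{BeHa21} and justifies it in one line via convexity of halfspaces (a shortest path using two edges of some $E_i$ would leave and re-enter the convex halfspace $H_i'$, contradicting Lemma~\ref{le:halfspaces}). You instead run a parity argument over the matching cutsets: the number of $E_i$-edges on any $(u,v)$-walk is odd iff $E_i\in\sigma_{u,v}$, which gives $d(u,v)\ge\vert\sigma_{u,v}\vert$; the reverse inequality comes from the observation that the first edge of a shortest $(u,v)$-path lies in a class of $\sigma_{u,v}$ incident to $u$ (via the distance characterization $H_i''=\{x:d(x,v)<d(x,u)\}$ of Lemma~\ref{le:halfspaces}) and that crossing that edge removes exactly that class from the signature, so induction yields $d(u,v)=\vert\sigma_{u,v}\vert$ and the exact-once count follows by counting. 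The converse direction then drops out of the same parity bookkeeping. What your route buys is that it never invokes convexity, only the cutset and distance properties of halfspaces, so the argument is more elementary and in fact valid in any partial cube; what the paper's route buys is brevity, since given convexity the forward direction is a one-liner (though the paper's sketch, unlike your write-up, does not spell out the ``exactly one edge per separating class'' count or the converse direction). The one point worth tightening in your write-up is the existence step for the descent: you should state explicitly that the class of the first edge $uw$ of a shortest $(u,v)$-path separates $u$ from $v$ because $d(w,v)=d(u,v)-1$ places $v$ in the halfspace of $w$, which is exactly the second bullet of Lemma~\ref{le:halfspaces}; as phrased, that justification is deferred to your closing remark rather than given where it is used.
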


This lemma is a consequence of the convexity of halfspaces. Indeed, a shortest path that would pass through two edges of some $\Theta$-class $E_i$ would escape temporarily from an halfspace. This is not possible since any halfspace is convex (Lemma~\ref{le:halfspaces}).


\subsection{Orthogonal $\Theta$-classes, POFs and ladder sets}

We now present other notions on median graphs related to the \textit{orthogonality} of $\Theta$-classes~\cite{Ko09}. 

\begin{definition}[Orthogonal $\Theta$-classes~\cite{Ko09}]
We say that classes $E_i$ and $E_j$ are {\em orthogonal} (denoted by $E_i \perp E_j$) if there is a square $uvyx$ in $G$, where $uv,xy \in E_i$ and $ux,vy \in E_j$.
\end{definition}



We focus on the set of $\Theta$-classes which are pairwise orthogonal.

\begin{definition}[Pairwise Orthogonal Family (POF)~\cite{BeDuHa22}]
We say that a set of classes $X \subseteq \mathcal{E}(G)$ is a {\em POF} if for any pair $E_j,E_h \in X$, we have $E_j \perp E_h$.
\end{definition}

The empty set is considered as a POF, such as the singletons of elements of $\mathcal{E}(G)$. 
The notion of POF has natural connections with induced hypercubes in median graphs. As an extreme case, the whole set $\mathcal{E}(G)$ is a POF if and only if graph $G$ is a hypercube of dimension $\log_2 n$~\cite{Ko09,MoMuRo98}. Conversely, if $G$ does not admit any POF of size at least $2$, then it is a tree, as it means that there is no square in $G$. The following lemma states an important observation linking POFs and hypercubes.

\begin{lemma}[POFs and hypercubes~\cite{BeHa21}]
Let $X$ be a POF, $v \in V(G)$, and assume that for each $E_i \in X$, there is an edge of $E_i$ adjacent to $v$. There exists a hypercube $Q$ containing vertex $v$ and all edges of $X$ adjacent to $v$. The $\Theta$-classes of the edges of $Q$ are the $\Theta$-classes of $X$.
\label{le:pof_adjacent}
\end{lemma}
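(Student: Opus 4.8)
\textbf{Proof plan for Lemma~\ref{le:pof_adjacent}.}

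The plan is to argue by induction on $|X|$, building the hypercube one $\Theta$-class at a time. The base cases $|X|=0$ and $|X|=1$ are immediate: the empty POF gives $Q=\{v\}$, and a single $\Theta$-class $E_i$ with an edge $vu$ adjacent to $v$ gives $Q$ the single edge $vu$. For the inductive step, suppose the claim holds for POFs of size $k$ and let $X=\{E_{i_1},\dots,E_{i_{k+1}}\}$ be a POF of size $k+1$ with, for each $j$, an edge $v u_j\in E_{i_j}$. Applying the hypothesis to $X'=\{E_{i_1},\dots,E_{i_k}\}$ yields an induced hypercube $Q'$ of dimension $k$ containing $v$, containing each $vu_j$ for $j\le k$, and whose edges carry exactly the classes of $X'$.

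The heart of the argument is to show that $Q'$ together with the edge $v u_{k+1}\in E_{i_{k+1}}$ extends to an induced $(k{+}1)$-cube $Q$ whose new edges all lie in $E_{i_{k+1}}$. First I would observe that for every vertex $w\in V(Q')$ there is an edge of $E_{i_{k+1}}$ incident to $w$: this is the key structural step. One way to see it is to walk from $v$ to $w$ along the edges of $Q'$ (a shortest path using each class of $X'$ at most once), and repeatedly use orthogonality of $E_{i_{k+1}}$ with each class $E_{i_j}\in X'$ traversed. Concretely, if $ab$ is an edge of $E_{i_j}$ on this path and $a$ has an incident edge $aa'\in E_{i_{k+1}}$, then since $E_{i_j}\perp E_{i_{k+1}}$ there is a square $aba''a'$ with $aa',bb''\in E_{i_{k+1}}$ and $ab,a'b''\in E_{i_j}$ (using Lemma~\ref{le:boundaries}: the edges of $E_{i_j}$ define an isomorphism between the two boundaries, so the $E_{i_{k+1}}$-edge at $a$ ``propagates'' across $E_{i_j}$ to an $E_{i_{k+1}}$-edge at $b$). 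Iterating along the path transports the incident $E_{i_{k+1}}$-edge from $v$ to $w$. Let $w'$ denote the other endpoint of this edge, i.e. the $E_{i_{k+1}}$-neighbor of $w$; set $Q'' = \{w' : w\in V(Q')\}$ and let $Q$ be the subgraph induced by $V(Q')\cup V(Q'')$.

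It then remains to check that $Q$ is an induced hypercube of dimension $k{+}1$. Since $E_{i_{k+1}}$ is a matching (Lemma~\ref{le:halfspaces}), the map $w\mapsto w'$ is a bijection $V(Q')\to V(Q'')$; it is a graph isomorphism because $E_{i_{k+1}}$ defines an isomorphism between $\partial H'$ and $\partial H''$ for its two halfspaces (Lemma~\ref{le:boundaries}), and $V(Q')\subseteq\partial H'$, $V(Q'')\subseteq\partial H''$ by construction. The $E_{i_{k+1}}$-matching connects each $w$ to its copy $w'$, and these are the only edges between $Q'$ and $Q''$ because $Q'$ and $Q''$ sit in opposite halfspaces of $E_{i_{k+1}}$, which are convex (Lemma~\ref{le:halfspaces}) — so no ``diagonal'' edge $ww''$ with $w''\neq w'$ can exist, as median graphs are $K_{2,3}$-free and triangle-free, forcing $Q$ to be exactly the prism over $Q'$, namely $Q_{k+1}$, and to be induced. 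Finally the new edges are exactly the matching $E_{i_{k+1}}$ restricted to $V(Q)$, so the multiset of $\Theta$-classes of $E(Q)$ is precisely $X'\cup\{E_{i_{k+1}}\}=X$, and $v\in V(Q)$ with all edges $vu_j$ of $X$ at $v$ contained in $Q$ (for $j\le k$ they lie in $Q'$; for $j=k{+}1$, $vu_{k+1}$ is the matching edge at $v$). The main obstacle I anticipate is the propagation step — showing cleanly that every vertex of $Q'$ has an incident $E_{i_{k+1}}$-edge and that the copies assemble into a genuine induced cube rather than merely a spanning subgraph of one; this is where the convexity of halfspaces and the boundary isomorphism of Lemma~\ref{le:boundaries} do the real work, and where one must be careful that the square produced by each application of orthogonality is compatible with the squares already built inside $Q'$.
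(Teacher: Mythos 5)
The paper does not prove this lemma; it is imported from~\cite{BeHa21} with a citation only, so there is no in-paper proof to compare against. Your inductive prism construction is the standard and correct strategy, and the assembly of $Q=Q'\cup Q''$ at the end (matching isomorphism of $E_{i_{k+1}}$ between its boundaries, no diagonal edges because $E_{i_{k+1}}$ is a matching and is the only class crossing between the two halfspaces) is sound once the propagation step is in place.

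However, the propagation step is where the entire content of the lemma lives, and as written it is asserted rather than proved. You write that since $E_{i_j}\perp E_{i_{k+1}}$ ``there is a square $aba''a'$'' at the specific location of the edge $ab$. But orthogonality, as defined, only guarantees that \emph{some} square with these two classes exists \emph{somewhere} in $G$; concluding that one exists at $a$ is exactly the nontrivial local-to-global claim, and it is already the whole difficulty in the step from $|X|=1$ to $|X|=2$. Your parenthetical appeal to Lemma~\ref{le:boundaries} does not close this: the boundary isomorphism of $E_{i_j}$ transports the edge $aa'\in E_{i_{k+1}}$ across $E_{i_j}$ only if \emph{both} endpoints $a$ and $a'$ lie in $\partial H_{i_j}'$, and membership of $a'$ in $\partial H_{i_j}'$ is precisely what must be established. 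The missing argument runs roughly as follows: take a witnessing square of $E_{i_j}\perp E_{i_{k+1}}$ with an edge $xx'\in E_{i_{k+1}}$ whose endpoints lie in $\partial H_{i_j}'$ (same side as $a$, after relabeling); since the halfspace of $E_{i_{k+1}}$ containing $a'$ is gated and $a'$ is the gate of $a$ there, $a'\in I(a,x')$; and since $\partial H_{i_j}'$ is convex (Lemma~\ref{le:boundaries}) and contains both $a$ and $x'$, it contains $I(a,x')$ and hence $a'$. Only then does the boundary isomorphism produce the square. You correctly flag this step as the main obstacle, but flagging it is not the same as resolving it, so the proposal as it stands has a genuine gap at its central step.
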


Furthermore, a natural bijection between the vertices of a median graph and its POFs was highlighted in the literature~\cite{BaChDrKo06,BaQuSaMa02}. As a consequence, the number of POFs is equal to $n = \vert V(G) \vert$.

\begin{lemma}[POF/vertex bijection~\cite{BaChDrKo06,BaQuSaMa02}]
Let $G$ be a median graph and $v_0 \in V(G)$ an arbitrary basepoint. We consider the $v_0$-orientation of $G$. Given a vertex $v \in V(G)$, let $N^-(v)$ be the set of edges going into $v$ and $\mathcal{E}^-(v) \subseteq \mathcal{E}(G)$ the $\Theta$-classes of the edges in $N^-(v)$. 
\begin{itemize}
\item For any vertex $v\in V(G)$, $\mathcal{E}^-(v)$ is a POF. Moreover, both $v$ and the edges of $N^-(v)$ belong to an induced hypercube whose edges are in the $\Theta$-classes of $\mathcal{E}^-(v)$. 
\item For any POF $X$, there is an unique vertex $v_X$ such that $\mathcal{E}^-(v_X) = X$. Vertex $v_X$ is the closest-to-$v_0$ vertex $v$ such that $X \subseteq \mathcal{E}^-(v)$. As $v_X$ and $N^-(v_X)$ belong to a common induced hypercube, any POF $X$ verifies $\vert X \vert \le d$.
\end{itemize}
\label{le:pof_hypercube}
\end{lemma}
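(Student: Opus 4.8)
\textbf{Proof plan for Lemma~\ref{le:pof_hypercube}.}

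The plan is to prove the two bullet points separately, both relying on the $v_0$-orientation and on the local structure of median graphs around a single vertex. For the first bullet, fix $v \in V(G)$ and let $X = \mathcal{E}^-(v)$. I first claim $X$ is a POF. Take two distinct classes $E_i, E_j \in X$ with incoming edges $\vv{u_iv}$ and $\vv{u_jv}$ at $v$. Since $G$ is bipartite and triangle-free, $u_i \neq u_j$ and $u_iu_j \notin E(G)$; the vertices $u_i, u_j$ have a common neighbor $v$, so by the absence of induced $K_{2,3}$ (more precisely, by the quadrangle/median property applied to $u_i, v, u_j$, whose median must be a fourth vertex $w$ completing a square $u_i v u_j w$), there is a square on $u_i, v, u_j, w$. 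The two edges $u_iv$ and $u_jw$ are opposite in this square, hence in the same $\Theta$-class $E_i$; similarly $u_jv$ and $u_iw$ are both in $E_j$. This is exactly $E_i \perp E_j$, so $X$ is a POF. Then Lemma~\ref{le:pof_adjacent}, applied with this POF $X$ and the vertex $v$ (each $E_i \in X$ does have an edge adjacent to $v$ by construction), yields an induced hypercube $Q$ containing $v$ and all the edges of $N^-(v)$, whose $\Theta$-classes are exactly those of $X = \mathcal{E}^-(v)$.

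For the second bullet, fix a POF $X$. I would define $v_X$ to be the closest-to-$v_0$ vertex $v$ with $X \subseteq \mathcal{E}^-(v)$, after first arguing such a vertex exists: starting from any vertex incident to edges of all classes of $X$ (which exists by Lemma~\ref{le:pof_adjacent} giving a hypercube realizing $X$, whose "top" vertex, the one farthest from $v_0$ within that cube, has all classes of $X$ incoming) and then, as long as some edge of a class in $X$ leaves $v$ rather than enters it, one can move along that edge toward $v_0$; a descent argument on $d(v_0, \cdot)$ terminates. To see that $v_X$ is unique and that $\mathcal{E}^-(v_X) = X$ exactly, I would argue: if $\mathcal{E}^-(v_X) \supsetneq X$ contained an extra class $E_k$, removing the constraint $E_k$ would still... actually the cleaner route is to use the signature machinery (Lemma~\ref{le:signature}) — if two vertices $v, v'$ both satisfy $X \subseteq \mathcal{E}^-(\cdot)$ and are both closest to $v_0$, consider the median $m = m(v_0, v, v')$ and show $X \subseteq \mathcal{E}^-(m)$ with $m$ no farther from $v_0$, forcing $v = v' = m$. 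Finally, $\vert X\vert \le d$ is immediate: by the first bullet (or by Lemma~\ref{le:pof_adjacent}) $v_X$ and $N^-(v_X) \supseteq$ the edges realizing $X$ lie in a common induced hypercube, which has dimension at most $d = \dim(G)$, and that hypercube has at least $\vert X\vert$ distinct $\Theta$-classes among its edges.

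The main obstacle I anticipate is the uniqueness part of the second bullet and the exact equality $\mathcal{E}^-(v_X) = X$ — the existence and the POF property are routine local arguments, but pinning down that the minimizer is unique requires carefully exploiting convexity of halfspaces: one must show that for each $E_i \in X$, the set $\{v : E_i \in \mathcal{E}^-(v)\}$ has a well-behaved intersection with the "downward" structure toward $v_0$, and that the boundary $\partial H_i$ of the majority side behaves like a gated set (Lemma~\ref{le:boundaries}) so that the median operation stays inside all the relevant regions. I would organize this as: (i) for each $E_i$, the incoming-edge condition at $v$ says $v \in \partial H_i^{+}$ where $H_i^{+}$ is the halfspace of $E_i$ farther from $v_0$; (ii) $\partial H_i^{+}$ is gated; (iii) an intersection of gated sets is gated (Lemma~\ref{le:intersection}), so $\bigcap_{E_i \in X} \partial H_i^{+}$ is gated and nonempty, hence has a unique gate/closest vertex to $v_0$, which is $v_X$. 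This also re-proves $X \subseteq \mathcal{E}^-(v_X)$, and the reverse inclusion follows because any further incoming class at $v_X$ would, by the hypercube of the first bullet, still be orthogonal to everything, but one checks it cannot be forced by $X$-membership, so minimality pushes it away — alternatively one simply defines $v_X$ via this gated intersection and notes the statement only claims $v_X$ is "the closest-to-$v_0$ vertex $v$ such that $X \subseteq \mathcal{E}^-(v)$", which the gatedness delivers directly.
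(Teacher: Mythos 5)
First, note that the paper does not prove this lemma at all: it is imported from the literature (\cite{BaChDrKo06,BaQuSaMa02}), so there is no internal proof to compare against. Judged on its own, your plan has the right skeleton for the second bullet (boundaries of the far halfspaces are gated, intersect them, take the gate of $v_0$), but it contains three concrete gaps. (1) In the first bullet, the median is applied to the wrong triple: the median of $(u_i,v,u_j)$ is $v$ itself (since $v\in I(u_i,u_j)$), not a new vertex $w$. The square you need comes from the quadrangle condition relative to $v_0$, i.e.\ from the median of $(u_i,u_j,v_0)$, and this is exactly the point where the hypothesis that \emph{both} edges are incoming is used --- without it the claim is false (in a star with center $v$ and $v_0$ a leaf, two neighbors of $v$ have no second common neighbor, and indeed one of the two edges is then outgoing). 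As written, your argument never invokes $v_0$, so it proves a false statement. (2) Your existence argument for $v_X$ is circular: Lemma~\ref{le:pof_adjacent} \emph{assumes} a vertex adjacent to an edge of every class of $X$ and only then produces the hypercube; it cannot be used to manufacture that vertex. Nonemptiness of $\bigcap_{E_i\in X}\partial H_i^{+}$ is a genuine step: one needs, e.g., that orthogonality of $E_i,E_j$ puts a corner of the witnessing square in $\partial H_i^{+}\cap\partial H_j^{+}$, and then the Helly property for convex/gated sets in median graphs (not stated anywhere in the paper) to pass from pairwise to global intersection.

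(3) The lemma claims the \emph{exact} equality $\mathcal{E}^-(v_X)=X$ and uniqueness of the vertex achieving it; your closing remark that ``the statement only claims $v_X$ is the closest-to-$v_0$ vertex with $X\subseteq\mathcal{E}^-(v)$'' is a misreading. Both missing halves are provable from your setup but must actually be carried out: if $E_k\in\mathcal{E}^-(v_X)\setminus X$, the hypercube of the first bullet at $v_X$ contains, for each $E_i\in X$, a square on $v_X$, its $E_k$-neighbor $u$, and its $E_i$-neighbor, whose fourth corner is at distance $d(v_0,v_X)-2$ from $v_0$; hence $X\subseteq\mathcal{E}^-(u)$ with $u$ strictly closer to $v_0$, contradicting minimality. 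And if some $w\neq v_X$ had $\mathcal{E}^-(w)=X$, then $v_X$, being the gate of $v_0$ in the gated set $\bigcap_{E_i\in X}\partial H_i^{+}$, lies on a shortest $(v_0,w)$-path; the last edge of that path into $w$ is incoming, so its class lies in $X$, but that path would then cross that class twice (once before $v_X$, which already sits in the far halfspace, and once at the last edge), contradicting Lemma~\ref{le:signature}. Without these two arguments the bijection asserted by the lemma is not established.
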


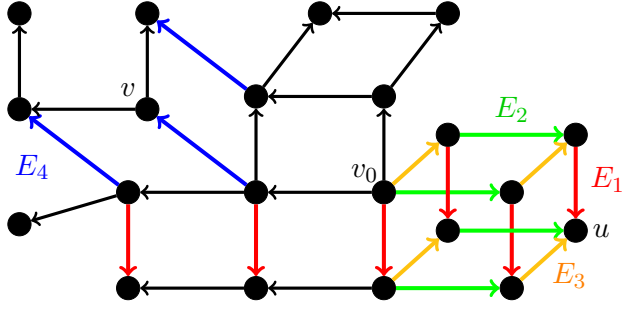
\begin{figure}[h]
\centering
\scalebox{0.85}{\begin{tikzpicture}


\node[draw, circle, minimum height=0.2cm, minimum width=0.2cm, fill=black] (P11) at (1,1) {};
\node[draw, circle, minimum height=0.2cm, minimum width=0.2cm, fill=black] (P12) at (1,2.5) {};

\node[draw, circle, minimum height=0.2cm, minimum width=0.2cm, fill=black] (P21) at (3,1) {};
\node[draw, circle, minimum height=0.2cm, minimum width=0.2cm, fill=black] (P22) at (3,2.5) {};
\node[draw, circle, minimum height=0.2cm, minimum width=0.2cm, fill=black] (P23) at (3,4) {};

\node[draw, circle, minimum height=0.2cm, minimum width=0.2cm, fill=black] (P31) at (5,1) {};
\node[draw, circle, minimum height=0.2cm, minimum width=0.2cm, fill=black] (P32) at (5,2.5) {};
\node[draw, circle, minimum height=0.2cm, minimum width=0.2cm, fill=black] (P33) at (5,4) {};

\node[draw, circle, minimum height=0.2cm, minimum width=0.2cm, fill=black] (P41) at (1.3,3.8) {};
\node[draw, circle, minimum height=0.2cm, minimum width=0.2cm, fill=black] (P42) at (1.3,5.3) {};
\node[draw, circle, minimum height=0.2cm, minimum width=0.2cm, fill=black] (P43) at (-0.7,3.8) {};

\node[draw, circle, minimum height=0.2cm, minimum width=0.2cm, fill=black] (P44) at (-0.7,2.0) {};

\node[draw, circle, minimum height=0.2cm, minimum width=0.2cm, fill=black] (P51) at (4.0,5.3) {};
\node[draw, circle, minimum height=0.2cm, minimum width=0.2cm, fill=black] (P52) at (6.0,5.3) {};

\node[draw, circle, minimum height=0.2cm, minimum width=0.2cm, fill=black] (P61) at (7,1) {};
\node[draw, circle, minimum height=0.2cm, minimum width=0.2cm, fill=black] (P62) at (7,2.5) {};

\node[draw, circle, minimum height=0.2cm, minimum width=0.2cm, fill=black] (P45) at (-0.7,5.3) {};

\node[draw, circle, minimum height=0.2cm, minimum width=0.2cm, fill=black] (P71) at (6.0,1.9) {};
\node[draw, circle, minimum height=0.2cm, minimum width=0.2cm, fill=black] (P72) at (6.0,3.4) {};
\node[draw, circle, minimum height=0.2cm, minimum width=0.2cm, fill=black] (P73) at (8.0,1.9) {};
\node[draw, circle, minimum height=0.2cm, minimum width=0.2cm, fill=black] (P74) at (8.0,3.4) {};

\node[scale=1.2,color = blue] at (-0.5,2.9) {$E_4$};
\node[scale=1.2,color = red] at (8.5,2.7) {$E_1$};
\node[scale=1.2,color = black!20!green] at (7.0,3.8) {$E_2$};
\node[scale=1.2,color = orange] at (7.9,1.2) {$E_3$};

\node[scale=1.2] at (4.7,2.8) {$v_0$};
\node[scale=1.2] at (1.0,4.1) {$v$};
\node[scale=1.2] at (8.4,1.9) {$u$};


\draw[line width = 1.8pt, color = red, <-] (P11) -- (P12);
\draw[line width = 1.4pt, <-] (P11) -- (P21);
\draw[line width = 1.4pt, <-] (P12) -- (P22);
\draw[line width = 1.8pt, color = red, <-] (P21) -- (P22);

\draw[line width = 1.4pt, <-] (P21) -- (P31);
\draw[line width = 1.4pt, <-] (P22) -- (P32);
\draw[line width = 1.8pt, color = red, <-] (P31) -- (P32);

\draw[line width = 1.4pt, ->] (P22) -- (P23);
\draw[line width = 1.4pt, <-] (P23) -- (P33);
\draw[line width = 1.4pt, ->] (P32) -- (P33);

\draw[line width = 1.8pt, color = blue, ->] (P22) -- (P41);
\draw[line width = 1.8pt, color = blue, ->] (P12) -- (P43);
\draw[line width = 1.8pt, color = blue, ->] (P23) -- (P42);
\draw[line width = 1.4pt, ->] (P41) -- (P43);
\draw[line width = 1.4pt, ->] (P41) -- (P42);
\draw[line width = 1.4pt, ->] (P12) -- (P44);

\draw[line width = 1.4pt, ->] (P23) -- (P51);
\draw[line width = 1.4pt, ->] (P33) -- (P52);
\draw[line width = 1.4pt, <-] (P51) -- (P52);

\draw[line width = 1.8pt, color = green, ->] (P31) -- (P61);
\draw[line width = 1.8pt, color = green, ->] (P32) -- (P62);
\draw[line width = 1.8pt, color = red, <-] (P61) -- (P62);
\draw[line width = 1.4pt, ->] (P43) -- (P45);

\draw[line width = 1.8pt, color = red, <-] (P71) -- (P72);
\draw[line width = 1.8pt, color = green, ->] (P71) -- (P73);
\draw[line width = 1.8pt, color = green, ->] (P72) -- (P74);
\draw[line width = 1.8pt, color = red, <-] (P73) -- (P74);

\draw[line width = 1.8pt, color = yellow!50!orange, ->] (P31) -- (P71);
\draw[line width = 1.8pt, color = yellow!50!orange, ->] (P32) -- (P72);
\draw[line width = 1.8pt, color = yellow!50!orange, ->] (P61) -- (P73);
\draw[line width = 1.8pt, color = yellow!50!orange, ->] (P62) -- (P74);

\end{tikzpicture}}
\caption{The $v_0$-orientation of some median graph $G$ and some of its $\Theta$-classes. For example, $\mathcal{E}^-(u) = \{ E_1,E_2,E_3 \}$.}
\label{fig:vertices_pof}
\end{figure}

An example is given in Figure~\ref{fig:vertices_pof} with the same median graph than in Figure~\ref{fig:median_example}. The $v_0$-orientation of this graph is represented, with a sample of four $\Theta$-classes. With the notation used in the previous lemma, we have: $\mathcal{E}^-(v_0) = \emptyset$,  $\mathcal{E}^-(v) = \{ E_4 \}$, and  $\mathcal{E}^-(u) = \{ E_1, E_2, E_3 \}$. One can check that each vertex admits its own incoming set of $\Theta$-classes. 
Due to this POF/vertex bijection, POFs of a median graph can be enumerated in linear time~\cite{BaQuSaMa02,Ko09}. Furthermore, as with the $v_0$-orientation, at most $d$ arcs enter in each vertex ($\mathcal{E}^-(u)$ is a POF), median graphs are relatively sparse: $m \le dn \le n\log_2 n$.


We focus on another notion strongly related to POFs, defined in~\cite{BeHa21}, called \textit{ladder set}.

\begin{definition}[Ladder set~\cite{BeHa21}]
Given two vertices $u\neq v$ of a median graph $G$, the ladder set $L_{u,v}$ is the set of $\Theta$-classes which are both adjacent to $u$ and belong to $\sigma_{u,v}$:
$$L_{u,v} = \left\{E_i \in \sigma_{u,v} : u \in \partial H_i' \cup \partial H_i''\right\}.$$
\label{def:ladder}
\end{definition}
For example, in Figure~\ref{fig:vertices_pof}, we have $L_{v,u} = L_{v,v_0} = \{E_4\}$ and $L_{u,v} = L_{v_0,u} = \{E_1,E_2,E_3\}$. In fact, the $\Theta$-classes of a given ladder set are pairwise orthogonal.

\begin{lemma}[\cite{BeHa21}]
Any ladder set $L_{u,v}$ is a POF.
\label{le:ladder_POF}
\end{lemma}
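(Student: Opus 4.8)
The plan is to fix $u,v$ with $u \neq v$ and show that any two distinct classes $E_i, E_j \in L_{u,v}$ are orthogonal. By Definition~\ref{def:ladder}, each such $E_i$ is both separating $u$ from $v$ (i.e.\ $E_i \in \sigma_{u,v}$) and adjacent to $u$, meaning $u \in \partial H_i' \cup \partial H_i''$; let $u_i$ denote the neighbor of $u$ reached by the edge of $E_i$ incident to $u$. Because $E_i \in \sigma_{u,v}$, crossing this edge moves us from the halfspace of $E_i$ containing $u$ to the one containing $v$; by Lemma~\ref{le:halfspaces} (the distance characterization of halfspaces) this means $d(u_i, v) = d(u,v) - 1$, so $u_i \in I(u,v)$. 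The same holds for $u_j \in I(u,v)$ via the edge of $E_j$ at $u$. So the first step is: every class in $L_{u,v}$ contributes an edge at $u$ that lies on a shortest $(u,v)$-path, hence $u_i, u_j \in I(u,v)$, and $u_i \neq u_j$ since they belong to different $\Theta$-classes.

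The second step is to produce the square certifying $E_i \perp E_j$. I would apply the median operation to the triple $(u_i, u_j, v)$: let $w = m(u_i, u_j, v)$. Since $u_i, u_j \in I(u,v)$ and, in a bipartite graph, $u_i$ and $u_j$ are both at distance $d(u,v)-1$ from $v$ and distance $1$ from $u$, the median $w$ should turn out to be adjacent to both $u_i$ and $u_j$ and at distance $2$ from $u$, giving a path $u - u_i - w$ and $u - u_j - w$. Then $u, u_i, w, u_j$ form a $4$-cycle, which in a triangle-free graph ($K_3$-free, as median graphs are bipartite) must be an induced square $u u_i w u_j$. The final step is to identify the $\Theta$-classes of its edges: the edge $u u_i$ is in $E_i$; the opposite edge $u_j w$ is then in $E_i$ as well by the definition of $\Theta$ (opposite edges of a square are $\Theta$-equivalent, Definition~\ref{def:classes}); symmetrically $u u_j$ and $u_i w$ are in $E_j$. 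Hence the square has the form required by the definition of orthogonality, so $E_i \perp E_j$.

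The main obstacle I expect is the second step: rigorously verifying that the median $w = m(u_i, u_j, v)$ is actually adjacent to $u_i$ and to $u_j$ (rather than coinciding with one of them or sitting further away). The key input is that $u_i, u_j$ are distinct, both adjacent to $u$, and both in $I(u,v)$; from $u_i, u_j \in I(u,v)$ one gets $w = m(u_i,u_j,v) \in I(u_i,v) \cap I(u_j,v) \subseteq I(u,v)$, and since $d(u_i,u_j) = 2$ (distinct common neighbors of $u$ in a triangle-free graph) with $u$ the unique common neighbor, $w$ lies on a shortest $u_i$–$u_j$ path, forcing $w \in \{u, \text{the other common neighbor}\}$; ruling out $w = u$ (which would contradict $w \in I(u_i,v)$ together with $d(u,v) = d(u_i,v)+1$) leaves $w$ adjacent to both. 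Once $w$ is pinned down, the rest is immediate from triangle-freeness and the definition of $\Theta$-classes. Alternatively, one could avoid the median-operation bookkeeping by invoking Lemma~\ref{le:pof_adjacent} in reverse, but the cleanest route is the direct square-construction just described.
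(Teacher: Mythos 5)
Your argument is correct. The paper does not prove Lemma~\ref{le:ladder_POF} at all --- it is imported from~\cite{BeHa21} without proof --- so there is no in-paper argument to compare against; what you give is a valid self-contained derivation. The two key steps both check out: (i) since each $E_i\in L_{u,v}$ lies in $\sigma_{u,v}$ and has an edge $uu_i$ at $u$, the distance characterization of halfspaces (Lemma~\ref{le:halfspaces}) forces $d(u_i,v)=d(u,v)-1$, so $u_i,u_j\in I(u,v)$ are distinct neighbors of $u$; and (ii) the median $w=m(u_i,u_j,v)$ lies in $I(u_i,u_j)$ with $d(u_i,u_j)=2$, and your three exclusions ($w\neq u_i$, $w\neq u_j$, $w\neq u$, each by the distance identities) leave $w$ a common neighbor of $u_i,u_j$ other than $u$; bipartiteness makes $uu_iwu_j$ an induced square, and Definition~\ref{def:classes} then places the opposite edges in $E_i$ and $E_j$ respectively, which is exactly the orthogonality witness. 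Two cosmetic points: the parenthetical claim that $u$ is the \emph{unique} common neighbor of $u_i$ and $u_j$ is both unnecessary and false once $w$ exists (by $K_{2,3}$-freeness there are at most two such neighbors, namely $u$ and $w$); and the degenerate case $\vert L_{u,v}\vert\le 1$ (in particular $d(u,v)=1$, where $u_i=v$ could occur) should be dispatched up front as trivially a POF so that $u_i,u_j,v$ are genuinely three distinct vertices when the median operation is invoked. Neither affects the validity of the proof.
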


Less formally, the ladder set $L_{u,v}$ provides us with the $\Theta$-classes of the induced hypercube containing $u$ covered by the set of all shortest $(u,v)$-paths.

Given a basepoint $v_0$, all ladder sets $L_{v_0,v}$, with $v \in V(G) \setminus \{v_0\}$ can be enumerated in quasilinear time thanks to a BFS starting at $v_0$. Such an algorithm is evoked in~\cite{BeDuHa22} but is not clearly stated, hence we do so.

\begin{lemma}[\cite{BeDuHa22}]
Given a basepoint $v_0$ of some median graph $G$, the list of all ladder sets $L_{v_0,v}$ with $v\neq v_0$ can be enumerated in quasilinear time $O(n\log^2 n)$.
\label{le:ladder_linear}
\end{lemma}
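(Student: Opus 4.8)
The plan is to run a single BFS from $v_0$ and, whenever we first discover a vertex $v$ along an edge $uv$ (with $u$ already discovered, so $d(v_0,u) = d(v_0,v)-1$), to compute $L_{v_0,v}$ incrementally from the ladder sets of the parents of $v$ that lie at distance $d(v_0,v)-1$. The starting observation is that, since $G$ is bipartite and edges are oriented from $v_0$, every edge incoming to $v$ belongs to a $\Theta$-class that separates $v_0$ from $v$ — indeed the $\Theta$-class of such an edge $uv$ has $u$ on the $v_0$-side and $v$ on the other side (Lemma~\ref{le:halfspaces}), and $v \in \partial H$ for that class, so $\mathcal{E}^-(v) \subseteq L_{v_0,v}$. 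The key structural claim I would isolate is the converse-flavoured statement that describes how $L_{v_0,v}$ grows along a BFS edge: if $uv \in E_j$ with $u$ a BFS-parent of $v$, then
\[
L_{v_0,v} \;=\; \{E_j\} \;\cup\; \bigl(L_{v_0,u} \setminus \{E_i : E_i \not\perp E_j\}\bigr),
\]
i.e.\ we add the new class $E_j$ and drop from $u$'s ladder set exactly those classes that are \emph{not} orthogonal to $E_j$. The reason is that $L_{v_0,v}$ is a POF (Lemma~\ref{le:ladder_POF}), so every class it contains is orthogonal to $E_j \in L_{v_0,v}$; and a class $E_i \in \sigma_{v_0,u}$ adjacent to $u$ stays in $\sigma_{v_0,v}$ and adjacent to $v$ precisely when $E_i \perp E_j$, which one checks using Lemma~\ref{le:boundaries} (the edges of $E_i$ at $u$ push across the square spanned with $E_j$) together with the convexity of halfspaces (Lemma~\ref{le:signature}) to see $E_i$ still separates $v_0$ from $v$ iff it did not get ``used up'' by the $E_j$-step.

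Given this claim, the algorithm is: maintain, for each discovered vertex, its ladder set as a list of $\Theta$-class identifiers (size at most $d \le \log_2 n$ by Lemma~\ref{le:ladder_POF} and Lemma~\ref{le:pof_hypercube}); when processing a tree-edge $uv \in E_j$, build $L_{v_0,v}$ by the formula above. To decide orthogonality $E_i \perp E_j$ quickly we do not need a global table: since all the $E_i$ in question lie in $L_{v_0,u}$, they are pairwise orthogonal and each is adjacent to $u$, so by Lemma~\ref{le:pof_adjacent} they span a hypercube $Q_u$ at $u$; the new edge $uv \in E_j$ is orthogonal to $E_i$ iff the square on $\{E_i,E_j\}$ at $u$ exists, which we can test in $O(1)$ amortised time by precomputing, for each vertex and each incident edge, the adjacency structure among incident edges — this is part of the $\Theta$-class computation of Lemma~\ref{le:linear_classes} and costs $O(m)=O(n\log n)$ overall. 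Each tree-edge thus costs $O(d)=O(\log n)$ to produce the new ladder set, but a vertex may be reached by several same-level edges; to get $L_{v_0,v}$ correctly we take the first such edge encountered and, for each further same-level edge $u'v \in E_{j'}$, simply verify $E_{j'} \in L_{v_0,v}$ (it will be, since it is an incoming edge and $\mathcal{E}^-(v)$ is a POF inside $L_{v_0,v}$), so no extra work beyond $O(d)$ per edge. Summing $O(d)$ over all $m \le n\log_2 n$ edges gives $O(n\log^2 n)$.

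The main obstacle I expect is the correctness of the incremental formula, specifically verifying both inclusions of the displayed identity: that \emph{every} class in $L_{v_0,v}$ is either $E_j$ or inherited from $L_{v_0,u}$ (why can no ``new'' orthogonal class appear at $v$ that was not adjacent to $u$?), and that the orthogonality filter is exactly the right one. For the first inclusion one argues that a class $E_i \in L_{v_0,v} \setminus \{E_j\}$ has an edge at $v$ going ``backwards'' toward $v_0$ whose other endpoint is at level $d(v_0,v)-1$; chasing the square structure between $E_i$ and $E_j$ via Lemma~\ref{le:boundaries} shows this endpoint is $u$ (or another same-level parent, handled identically), so $E_i$ was already adjacent to $u$ and in $\sigma_{v_0,u}$, hence in $L_{v_0,u}$. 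The secondary obstacle is bookkeeping the $O(d)$-size ladder sets as ordered lists so that the union-and-filter step is genuinely $O(d)$ and not $O(d^2)$; keeping each ladder set sorted by a fixed global ordering of $\Theta$-classes makes the filter a single linear pass. Everything else — the BFS scaffolding, the linear-time $\Theta$-class preprocessing, the $d \le \log_2 n$ bound — is already available from the lemmas cited above.
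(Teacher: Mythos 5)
There is a genuine gap here, and it stems from a misreading of Definition~\ref{def:ladder}. The ladder set $L_{v_0,v}=\{E_i\in\sigma_{v_0,v}: v_0\in\partial H_i'\cup\partial H_i''\}$ collects the classes of the signature that are adjacent to the \emph{basepoint} $v_0$ (the first argument), not those adjacent to $v$. Your opening claim $\mathcal{E}^-(v)\subseteq L_{v_0,v}$ is therefore false in general, and so is the assertion that the class $E_j$ of the BFS tree edge $uv$ belongs to $L_{v_0,v}$: it does only when $E_j$ happens to be adjacent to $v_0$. Your incremental formula already fails on a path $v_0, a, b$ with edge classes $E_1=\{v_0a\}$ and $E_2=\{ab\}$: it outputs $\{E_2\}$ for $b$ (it inserts $E_2$ and filters out $E_1$ since $E_1\not\perp E_2$), whereas $L_{v_0,b}=\{E_1\}$. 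What your recurrence is actually tracking is closer to the reversed ladder set $L_{v,v_0}$ (classes of $\sigma_{v_0,v}$ adjacent to $v$), which is not what the lemma --- nor its later use in the slice decomposition, where $L(G)=L_{v_0,\umax}$ must consist of classes adjacent to $v_0$ --- requires.

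Once the definition is read correctly, the orthogonality machinery you build is unnecessary and the paper's argument is much simpler than your proposal. The candidate classes are fixed once and for all: only the at most $d\le\log_2 n$ classes adjacent to $v_0$ (Lemma~\ref{le:pof_hypercube}) can appear in any $L_{v_0,v}$. During the BFS one sets $\lad(w)=\lad(v)\cup\{E_i\}$ when the tree edge $vw$ lies in a class $E_i$ adjacent to $v_0$, and $\lad(w)=\lad(v)$ otherwise. Correctness is immediate from Lemma~\ref{le:signature}: the BFS tree path from $v_0$ to $w$ is a shortest path, hence crosses exactly the classes of $\sigma_{v_0,w}$, once each, so the accumulated label is $\sigma_{v_0,w}$ intersected with the classes adjacent to $v_0$, i.e.\ exactly $L_{v_0,w}$. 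In particular the labels only grow along the BFS tree --- no class is ever removed --- so no orthogonality test, no square-chasing via Lemma~\ref{le:boundaries}, and no sorted-list filtering are needed; the $O(n\log^2 n)$ bound comes simply from writing a label of size $O(\log^2 n)$ at each of the $n$ vertices on top of an $O(m)=O(n\log n)$ traversal.
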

\begin{proof}
Initialize the queue with the starting vertex $v_0$. Label all vertices with a set $\lad(x) = \emptyset$. Once a vertex $v$ is at the head of the queue, all its not yet discovered neighbors $w$ of $v$ are inserted into the queue, and we fix $\lad(w) = \lad(v) \cup E_i$ if $vw \in E_i$ and $v_0$ is adjacent to $E_i$, otherwise $\lad(w) = \lad(v)$. Labels $\lad$ exactly compute ladder sets from Definition~\ref{def:ladder}: $\lad(v)$ contains the $\Theta$-classes adjacent to $v_0$ which separates $v$ from $v_0$. As ladder sets are POFs which contain the identity of at most $\log_2 n$ $\Theta$-classes, the size needed to store each label is at most $(\log_2 n)^2$. BFS execution together with the writing of labels gives a total running time $O(n\log^2 n)$ since $m = O(n\log n)$.
\end{proof}

\subsection{Median set and majority rule}

It was  recently proposed in ~\cite{BeChChVa20} a linear-time algorithm which computes the median set of median graphs. We recall here some key observations of this article but also previous works that will be useful for us. We begin with the definition of a median set.

\begin{definition}[Median vertex and set]
Given a graph $G$, a \emph{median vertex} of $G$ is a vertex $u$ which minimizes $\Gamma(u) = \sum_{v\in V(G)} d(u,v)$. The median set $\med(G)$ is the set containing all median vertices.
\end{definition}

In median graphs, the median set admits an interesting characterization related to the $\Theta$-classes. Indeed, any vertex which belongs to at least one minority halfspace is not median.

\begin{lemma}[Majority rule~\cite{BaBa84}]
$\med(G)$ is the intersection of all majority halfspaces. It coincides with the interval of a diametral pair of its vertices.
\label{le:majority}
\end{lemma}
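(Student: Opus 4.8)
The plan is to prove the two assertions separately, using the halfspace structure provided by Lemma~\ref{le:halfspaces} and the fact that $\Theta$-classes form a perfect matching cutset.

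First I would establish that $\med(G)$ is the intersection of all majority halfspaces. The key local move is to compare $\Gamma(u)$ for two neighbours $u,v$ joined by an edge $uv$ of some $\Theta$-class $E_i$, say with $u\in H_i'$ and $v\in H_i''$. By Lemma~\ref{le:halfspaces}, for every vertex $x$ we have $d(x,u)<d(x,v)$ exactly when $x\in H_i'$, and then $d(x,v)=d(x,u)+1$; symmetrically for $x\in H_i''$; and there is no vertex equidistant from $u$ and $v$ since $G$ is bipartite. Summing, $\Gamma(v)-\Gamma(u)=\sum_{x\in H_i''}(d(x,v)-d(x,u)) + \sum_{x\in H_i'}(d(x,v)-d(x,u)) = -|H_i''| + |H_i'| = |H_i'|-|H_i''|$. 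Hence crossing $E_i$ from the smaller side to the larger side strictly decreases $\Gamma$, and crossing between two egalitarian halfspaces leaves $\Gamma$ unchanged. This shows that no vertex lying in a minority halfspace can be median: if $u\in H_i'$ with $|H_i'|<|H_i''|$, its matched neighbour $v\in H_i''$ (which exists because $E_i$ is a perfect matching of the cutset, and every vertex of $\partial H_i'$ has such a neighbour — but one must first move $u$ to $\partial H_i'$ inside $H_i'$ without increasing $\Gamma$, see below) satisfies $\Gamma(v)<\Gamma(u)$. Conversely, I would argue that any vertex $w$ in the intersection $M$ of all majority halfspaces is median: starting from an arbitrary median vertex $u_0$ (which lies in $M$ by the previous paragraph) and walking along a shortest $(u_0,w)$-path, by Lemma~\ref{le:signature} every edge of this path belongs to a distinct $\Theta$-class of $\sigma_{u_0,w}$, and since both endpoints lie in $M$, each such $E_i$ must be egalitarian (otherwise one of $u_0,w$ would be on the minority side); by the computation above, $\Gamma$ is constant along this path, so $\Gamma(w)=\Gamma(u_0)$ and $w$ is median. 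Together these give $\med(G)=M$.

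The technical gap to fill carefully is the claim that from any $u\in H_i'$ one can reach $\partial H_i'$ without increasing $\Gamma$: take a shortest path from $u$ to its $H_i''$-gate and stop at the last vertex inside $H_i'$; that vertex is in $\partial H_i'$, is reached without crossing $E_i$, and by minimality of $\Gamma(u_0)\le\Gamma(u)$ the value does not increase — actually the cleanest route is simply to pick $u$ to be a median vertex from the start and derive the contradiction directly, so I would phrase the argument as: if some median vertex $u$ lay in a minority halfspace $H_i'$, move along a shortest path toward $H_i''$; the first edge crossing $E_i$ goes from $\partial H_i'$ to $\partial H_i''$ and, combined with the fact that the prefix inside $H_i'$ keeps $\Gamma$ minimal, yields a neighbour with strictly smaller $\Gamma$, a contradiction. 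This reorganization avoids the need for an auxiliary monotonicity lemma.

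For the second assertion, that $\med(G)$ equals $I(a,b)$ for a diametral pair $a,b\in\med(G)$: I would first show $M=\med(G)$ is convex — it is an intersection of halfspaces, each convex by Lemma~\ref{le:halfspaces}, and intersections of convex sets are convex — hence $G[M]$ is itself a median graph, and in particular connected. Then pick $a,b\in M$ realizing $\diam(G[M])$. Clearly $I(a,b)\subseteq M$ by convexity. For the reverse inclusion, take $w\in M$; as shown above every $\Theta$-class separating two vertices of $M$ is egalitarian, so $\sigma_{a,w}$, $\sigma_{w,b}$ and $\sigma_{a,b}$ all consist solely of egalitarian classes, and one checks via signatures that $w\in I(a,b)$ iff $\sigma_{a,w}\cap\sigma_{w,b}=\emptyset$ and $\sigma_{a,b}=\sigma_{a,w}\triangle\sigma_{w,b}$; the diametrality of $(a,b)$ inside the median graph $G[M]$ forces exactly this (if some class of $\sigma_{a,w}\cap\sigma_{w,b}$ existed, one could extend $a$ or $b$ across it to increase $\mathrm{diam}(G[M])$). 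The main obstacle is this last step — nailing down, purely from the diametrality of $(a,b)$ in $G[M]$ together with the egalitarian-class observation, that $M\subseteq I(a,b)$; everything else is a routine application of Lemmas~\ref{le:halfspaces} and~\ref{le:signature}. I expect the cited reference~\cite{BaBa84} handles precisely this point, and I would follow its treatment.
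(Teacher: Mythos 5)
The paper itself gives no proof of this lemma --- it is imported wholesale from~\cite{BaBa84} --- so your proposal can only be judged on its own merits, and on those merits it has two genuine gaps. The first is in the claim that no median vertex lies in a minority halfspace. Your edge computation $\Gamma(v)-\Gamma(u)=\vert H_i'\vert-\vert H_i''\vert$ for $uv\in E_i$ is correct, but it only yields a contradiction if the median vertex $u$ sits on the boundary $\partial H_i'$ of its minority halfspace. To move $u$ to that boundary you assert that ``the prefix inside $H_i'$ keeps $\Gamma$ minimal'', justified by ``minimality of $\Gamma(u_0)\le\Gamma(u)$, the value does not increase'' --- but minimality of $\Gamma$ at a median vertex gives exactly the opposite control: $\Gamma$ cannot \emph{decrease} as you walk away from $u$, and nothing prevents it from strictly increasing at the very first step of the prefix (that step crosses some other class $E_j$, whose contribution $\vert H_j(u)\vert-\vert H_j(w_1)\vert$ has no a priori sign). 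The standard way to close this is not a local walk but the cut-counting identity that follows from Lemma~\ref{le:signature}, namely $\Gamma(u)=\sum_{E_i}\bigl(n-\vert H_i(u)\vert\bigr)$ where $H_i(u)$ denotes the halfspace of $E_i$ containing $u$: this shows $\Gamma$ is minimized exactly at the vertices lying in a largest halfspace of every class, after which one invokes the Helly property of halfspaces in median graphs (strict majority halfspaces pairwise intersect by counting) to guarantee that the intersection of all majority halfspaces is nonempty. Neither ingredient appears in your plan, and without one of them the forward inclusion $\med(G)\subseteq\bigcap_i H_i''$ is not established; your converse inclusion also leans on it, since it starts from ``an arbitrary median vertex, which lies in $M$ by the previous paragraph''.

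The second gap is one you flag yourself: for the assertion that $\med(G)=I(a,b)$ for a diametral pair $a,b$ of $\med(G)$, the inclusion $I(a,b)\subseteq\med(G)$ does follow from convexity, but the reverse inclusion is precisely the substance of the statement. Convexity of $\med(G)$ alone cannot suffice --- $K_{1,3}$ is a median graph that is not the interval of any diametral pair of its own vertices --- so the argument must genuinely exploit that every $\Theta$-class meeting $\med(G)$ is egalitarian (in fact the median set of a median graph is a hypercube), and you explicitly defer this step to the reference. A proof plan that hands the hardest step of each half back to~\cite{BaBa84} does not establish the lemma.
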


The majority rule is a key tool for the algorithm presented in~\cite{BeChChVa20}. The first part consists in computing the cardinality of each halfspace of the input median graph $G$. This is based on a \textit{peripheral peeling} that we describe briefly. Consider $G$ where all vertex weights are fixed to $1$. Pick up some peripheral $\Theta$-class $E_i$ : first retrieve the cardinality of its peripheral halfspace (say $H_i' = \partial H_i'$) by summing up all weights of $H_i'$, second transfer the weights of vertices in $H_i'$ to their $E_i$-neighbors. Remove $H_i'$ from the current graph and recurse the process on another peripheral $\Theta$-class.

\begin{lemma}[Halfspaces sizes~\cite{BeChChVa20}]
    Given a median graph $G$, there is a combinatorial algorithm computing in linear time $O(m) = O(n\log n)$ all triplets $(E_i,\vert H_i'\vert, \vert H_i'' \vert)$ where $E_i \in \mathcal{E}(G)$.
    \label{le:enum_halfspaces}
\end{lemma}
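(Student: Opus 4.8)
The plan is to prove this by the \emph{peripheral peeling} procedure sketched just before the statement. First I would compute all $\Theta$-classes of $G$ in time $O(m)$ via Lemma~\ref{le:linear_classes} and set every vertex weight to $1$. The algorithm then maintains a weighted median graph $(G_t,\omega_t)$, starting from $(G_0,\omega_0)$ with $\omega_0\equiv 1$. At step $t$, as long as $G_t$ has an edge, it picks a peripheral $\Theta$-class $E$ of $G_t$ with peripheral side $A$ (so $A=\partial_{G_t}A$) — such a class exists because every median graph admits one — and lets $B=V(G_t)\setminus A$ be the surviving side. It records, for the $\Theta$-class of $G$ that $E$ comes from, the pair $\bigl(\sum_{v\in A}\omega_t(v),\ \sum_{v\in B}\omega_t(v)\bigr)$; then it sets $G_{t+1}=G_t[B]$, which is median as a convex subgraph of a median graph, and defines $\omega_{t+1}$ by folding the weight of each $v\in A$ onto its $E$-matched partner $\mu(v)\in\partial_{G_t}B$, leaving the weights of $B\setminus\partial_{G_t}B$ unchanged. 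Since all edges of $E$ join $A$ to $B$, the class $E$ disappears in $G_{t+1}$, whereas every other surviving class keeps at least one edge inside $B$ — this uses the $\Theta$-class-respecting isomorphism $\partial_{G_t}A\cong\partial_{G_t}B$ of Lemma~\ref{le:boundaries}, so that an $E_i$-edge with both endpoints in $A=\partial_{G_t}A$ forces a parallel $E_i$-edge with both endpoints in $\partial_{G_t}B$. Hence exactly one $\Theta$-class is eliminated per step, the process halts after $q=|\mathcal{E}(G)|$ steps on the single-vertex graph, and exactly one pair has been recorded per $\Theta$-class of $G$ (identifying each class of $G_t$ with the class of $G$ containing it, which is well defined since a square of $G_t$ is a square of $G$).

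The correctness reduces to the following invariant, proved by induction on $t$: \emph{for every $\Theta$-class $E_i$ of $G$ still present in $G_t$, the two halfspaces of $E_i$ in $G_t$ have weighted cardinalities $|H_i'|$ and $|H_i''|$.} The base case is immediate from $\omega_0\equiv 1$. For the inductive step, when $E$ is peeled on side $A$, the pair recorded for $E$ is correct by the invariant applied to $G_t$. For a surviving class $E_i\neq E$, its halfspaces in $G_{t+1}$ are exactly the restrictions to $B$ of its halfspaces in $G_t$, and the only weights that change lie in $\partial_{G_t}B$; writing $H^{new}=H^{old}\cap B$ for one of these sides, $\sum_{v\in H^{new}}\omega_{t+1}(v)=\sum_{v\in H^{old}\cap B}\omega_t(v)+\sum_{v\in H^{old}\cap\partial_{G_t}B}\omega_t(\mu(v))$. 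The key point is that $\mu$ maps $H^{old}\cap\partial_{G_t}B$ bijectively onto $H^{old}\cap A$: a vertex $v$ and $\mu(v)$ are joined by a single edge, whose $\Theta$-class is $E$, hence $v$ and $\mu(v)$ lie on the same side of every $\Theta$-class other than $E$ (Lemma~\ref{le:signature}), and $\mu$ sends $\partial_{G_t}B$ onto $A=\partial_{G_t}A$. Therefore the second sum equals $\sum_{w\in H^{old}\cap A}\omega_t(w)$, and the whole expression telescopes to $\sum_{v\in H^{old}}\omega_t(v)$, which is $|H_i'|$ or $|H_i''|$ by the inductive hypothesis. This proves the invariant, and with it the correctness of every recorded triple.

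For the running time I would argue $O(m)=O(n\log n)$ by amortization. Since $G_{t+1}=G_t[B_t]$, the deleted vertex sets $A_t$ are pairwise disjoint, so $\sum_t|A_t|\le n-1$, and every edge of $G$ is incident to at most one deleted side, so the total number of edges ever removed is $m$. The cost of one peeling step — reading the weights of $A_t$, performing the weight transfers along the matching $E$ (there are $|A_t|$ such edges, as $A_t=\partial_{G_t}A_t$), and deleting $A_t$ with its incident edges — is $O(|A_t|+\#\{\text{edges incident to }A_t\})$, which sums to $O(n+m)=O(m)$. The one genuinely delicate point, which I expect to be the main obstacle, is to make ``pick a peripheral $\Theta$-class of $G_t$'' cost $O(1)$ amortized: a class $E_i$ becomes peripheral on a side precisely when that side loses its last vertex that is not an endpoint of an $E_i$-edge, but a single deleted vertex is a non-endpoint for many classes at once, so a naive per-class update costs $\Theta(q)$ per deletion. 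Resolving this requires the careful incremental maintenance of the queue of currently-peripheral classes carried out in~\cite{BeChChVa20}, charging each queue update to the deletion that caused it; granting that bookkeeping, everything else follows directly from the structural lemmas above.
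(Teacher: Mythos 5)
Your proposal follows essentially the same route as the paper, which does not prove this lemma itself but cites~\cite{BeChChVa20} and sketches exactly this peripheral peeling with unit weights, weight transfer along the peeled matching, and recursion on the surviving halfspace; your invariant argument (weights stay on the correct side of every surviving class because an $E$-edge separates only $E$, by Lemma~\ref{le:signature}, and surviving classes restrict to halfspaces of the convex subgraph by Lemma~\ref{le:isometric}) is a sound fleshing-out of that sketch. The one point you defer --- amortized-constant-time detection of the next peripheral class --- is precisely the bookkeeping handled in~\cite{BeChChVa20}, so deferring it to the cited source is consistent with how the paper itself treats this lemma.
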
 

Once the cardinality of each halfspace is known, the majority rule allows them to retrieve the median set $\med(G)$. The second part consists simply in orienting edges $uv$ (say $uv \in E_j$ and $u \in H_j'$ w.l.o.g)  such that $v$ is the head of the arc iff $\vert H_j'' \vert > \vert H_j' \vert$. From Lemma~\ref{le:majority}, the median set coincides with the sinks of this partially directed graph.

\begin{corollary}[\cite{BeChChVa20}]
    $\med(G)$ can be computed in linear time $O(m) = O(n\log n)$.
    \label{co:median_linear}
\end{corollary}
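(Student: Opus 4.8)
The plan is to reduce the corollary to the majority rule (Lemma~\ref{le:majority}) through a single linear-time post-processing of the halfspace sizes. First I would compute the $\Theta$-classes with Lemma~\ref{le:linear_classes}, labelling every edge with its class, and then run the algorithm of Lemma~\ref{le:enum_halfspaces} to obtain all triples $(E_i,|H_i'|,|H_i''|)$; both steps cost $O(m)=O(n\log n)$. Next, in one pass over the edge list, I would orient each edge $uv\in E_i$ (with $u\in H_i'$) by placing the head at $v$ when $|H_i'|<|H_i''|$, at $u$ when $|H_i''|<|H_i'|$, and leaving $uv$ unoriented when $|H_i'|=|H_i''|$ (the egalitarian case). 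Finally I scan the vertices, keeping those with no out-going arc, and return this set as $\med(G)$. This is three linear passes, so the total running time is $O(m)=O(n\log n)$, matching the claim; it only remains to justify that the returned set is $\med(G)$.

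By construction a vertex $v$ has an out-going arc iff $v$ lies on the strictly smaller side of some $\Theta$-class incident to it, i.e. iff $v$ belongs to a minority halfspace $H_i'$ with $v\in\partial H_i'$; hence the set of sinks is exactly the set of vertices contained in no incident minority halfspace, and I must check this equals $\med(G)$. One inclusion is immediate from Lemma~\ref{le:majority}: a median vertex lies in every majority halfspace, hence in no minority halfspace at all, so a fortiori it is a sink. For the converse I would use that $\med(G)$ is convex — it is an interval of a diametral pair by Lemma~\ref{le:majority} — hence gated. Take $v\notin\med(G)$ and its gate $g=g_{\med(G)}(v)$. Every $\Theta$-class separating $v$ from $g$ has $g$ on its majority-or-egalitarian side (because $g$ lies in every majority halfspace), hence $v$ on its minority-or-egalitarian side; writing the distance-sum $\Gamma(x)=\sum_{w}d(x,w)$ and using that across an edge $ab$ whose class has halfspaces $H_a\ni a$ and $H_b\ni b$ one has $\Gamma(b)-\Gamma(a)=|H_a|-|H_b|$, every step along a geodesic from $v$ towards $g$ is $\Gamma$-nonincreasing, whereas $\Gamma(v)>\Gamma(g)=\min_x\Gamma(x)$ since $v\notin\med(G)$. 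Therefore some step strictly decreases $\Gamma$ — this is the standard statement that $\Gamma$ has no strict local minimum outside $\med(G)$ on a median graph — and such a step forces $v$ to sit on the strict minority side of an incident $\Theta$-class, i.e. $v$ is not a sink.

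The delicate point — and the step I expect to cost the most work — is precisely that $v$ itself, and not merely some vertex further along the geodesic to $g$, must lie on the strict minority side of an \emph{incident} $\Theta$-class, since the first step towards $g$ could a priori use an egalitarian class and be $\Gamma$-neutral. The route I would try is to pick, among all minority halfspaces containing $v$, an inclusion-minimal one $H=H_i'$ and prove that necessarily $v\in\partial H$ (so $v$ has an out-going arc along $E_i$): if $v$ were interior to $H$, I would take the gate $p$ of $v$ in the gated boundary $\partial H$ and the first edge $uv\in E_j$ of a geodesic from $v$ to $p$, and argue that $E_j\neq E_i$ and that the $v$-side of $E_j$ is again a minority halfspace strictly contained in $H$, contradicting minimality; the bookkeeping here is exactly where convexity of halfspaces and orthogonality/laminarity of $\Theta$-classes must be invoked carefully. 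Granting this, sinks $=\med(G)$, and together with the three linear passes above we obtain Corollary~\ref{co:median_linear}.
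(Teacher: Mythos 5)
Your proposal is correct and follows essentially the same route as the paper: compute all halfspace sizes via Lemma~\ref{le:enum_halfspaces}, orient each edge of a non-egalitarian $\Theta$-class towards its majority halfspace, and return the sinks, which is exactly the algorithm of~\cite{BeChChVa20} sketched just before the corollary, with correctness resting on the majority rule (Lemma~\ref{le:majority}). The additional verification you give that the sinks coincide with $\med(G)$ — in particular the inclusion-minimal minority-halfspace argument showing that a non-median vertex lies on the boundary of some strict minority halfspace — does go through using the convexity/gatedness facts you invoke, so the ``delicate point'' you flag is not an actual gap.
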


\section{Exploiting halfspaces for the computation of eccentricities} \label{sec:first_step}

Our objective is to  determine the weighted eccentricities of a weighted median graph $(G,\omega)$. Please note that notation $G$ naturally refers to the same graph without any weight consideration. To achieve our goal, we introduce the notion of balanced $\Theta$-classes. Our algorithm will distinguish two cases: either $G$ contains a balanced $\Theta$-class or not. If it is the case, then we will pick up such a $\Theta$-class and retrieve the eccentricities of $(G,\omega)$ by computing recursively the eccentricities of the two halfspaces $(G[H_i'],\omega)$ and $(G[H_i''],\omega)$. 

\subsection{Balanced $\Theta$-classes} \label{subsec:balanced}

As stated in Section~\ref{sec:notation}, the $\Theta$-classes are natural separators for a median graph. Consequently, the ratio between the size of the two halfspaces is a potential tool to design divide-and-conquer procedures on this family of graphs. We begin with the definition of $f$-balanced $\Theta$-classes.

\begin{definition}[Balanced and unbalanced $\Theta$-classes]
Let $\mathbb{N}^* = \mathbb{N}\setminus \{0\}$ be the set of positive integers. Given some median graph $G$ and a function $f : \mathbb{N}^* \rightarrow \mathbb{N}^*$, a $\Theta$-class $E_i$ of $G$ is $f$-balanced if:
\begin{equation}\min\{\vert H_i' \vert, \vert H_i'' \vert\} \ge \frac{\vert V(G) \vert}{f(\vert V(G) \vert)}.
\label{eq:balanced}
\end{equation}
Conversely, a $\Theta$-class is $f$-unbalanced if it is not $f$-balanced, so formally either $\vert H_i' \vert < \frac{\vert V(G) \vert}{f(\vert V(G) \vert)}$ or $\vert H_i'' \vert < \frac{\vert V(G) \vert}{f(\vert V(G) \vert)}$.
\label{def:balanced}
\end{definition}

The halfspaces of a $\Theta$-class $E_i$ form a bipartition of $V(G)$. Hence, being $f$-unbalanced means that one halfspace has size less than $\frac{n}{f(n)}$, with $n = \vert V(G) \vert$, while the other halfspace is large with cardinality at least $n(1-\frac{1}{f(n)})$. Star graphs are the most natural examples of median graphs without balanced $\Theta$-classes: for each $\Theta$-class, its minority halfspace has size $1$. Observe that the case $f$ being a constant function (for example, $f(n) = 3$ for all integers $n$) is the classical sense given to a \textit{balanced separator} in graphs, as defined originally by~\cite{LiTa79}. In the remainder of this article, we will focus more particularly on $f$-balanceness, with $f$ being a logarithmic function. 

It should be noticed that certain median graphs do not admit any $f$-balanced $\Theta$-classes, for some given function $f$. 

\begin{definition}[Unbalanced median graphs]
Let $\mathcal{U}_f$ be the family of all median graphs $G$ satisfying the following property: all $\Theta$-classes of $G$ are $f$-unbalanced. 
\label{def:unbalanced_median}
\end{definition}

For Sections~\ref{sec:first_step} and~\ref{sec:unbalanced}, we fix a specific function $f$ and the notion of \textit{balanced}/\textit{unbalanced} $\Theta$-class will be naturally associated with this function : let $f = 2\log$, {\em i.e.} the function $f : n \rightarrow 2\log(n)$. We denote by $\ulog$ the family of median graphs which do not admit any $(2\log)$-balanced $\Theta$-classes. Fixing $f = 2\log$, which is less restrictive than a constant function, is crucial to make our algorithm work: this choice will be justified in the proof of Theorem~\ref{th:main_ecc} (as presented in section \ref{sec:intro}).

Checking whether a median graph admits (or not) a balanced $\Theta$-class can be achieved in linear time $O(n\log n)$, according to Lemma~\ref{le:enum_halfspaces}. 
As a consequence, we are able in linear time either to pick up a balanced $\Theta$-class (for any balance criterion $f$) or answer that the input median graph contains only unbalanced ones. It consists simply in applying the algorithm of Lemma~\ref{le:enum_halfspaces} and returning a balanced $\Theta$-class when Equation~\eqref{eq:balanced} is satisfied.

The idea of our future recursive scheme is to identify whether a balanced $\Theta$-class of the input median graph $G$ exists. If it is the case, we compute recursively the weighted eccentricities of its halfspaces in order to retrieve the weighted eccentricities of the whole graph. Otherwise, we fall into the case $G \in \ulog$ which will be treated in Section~\ref{sec:unbalanced}.

We state an analytical result on some integer sequences which will be a key win-win argument for this global recursive process. The consequence of the following lemma is that the depth of the recursive tree, built upon balanced $\Theta$-classes separation, is at most poly-logarithmic.

\begin{lemma}
Let $(s_n)$ be a sequence of integers such that $s_0$ is a positive integer, and let $\lambda \ge 2$ be a positive real number such that: $$s_{n+1} = \left\{ \begin{array}{ll}
        \lfloor s_n\left(1-\frac{1}{\lambda\log(s_n)}\right) \rfloor & \mbox{ if } s_n > 2\\
        s_n & \mbox{ if } s_n \le 2
\end{array}
\right.$$
The total stopping time $\tau$ of sequence $s_n$, {\em i.e.} the minimum positive integer $\tau$ such that $s_{\tau} = s_{\tau+1}$, verifies $\tau \le \lambda(\log(s_0))^2$. Moreover, $s_{\tau}\in \{1,2\}$.
\label{le:log_decrease}
\end{lemma}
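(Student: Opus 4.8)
The plan is to analyze the sequence $(s_n)$ in two phases according to the magnitude of $s_n$, and to bound the number of steps spent in each phase. Intuitively, when $s_n$ is large the multiplicative factor $1-\frac{1}{\lambda\log(s_n)}$ is close to $1$, so a single step only shrinks $s_n$ by a bit; but the relative decrease accumulates, and a standard potential argument using $\log s_n$ controls the total. The key inequality I would establish first is that, as long as $s_n > 2$, we have $\log(s_{n+1}) \le \log(s_n) - c/\log(s_n)$ for a suitable constant $c$ depending on $\lambda$ (morally $c = 1/(2\lambda)$ or so, once the floor and the $\log(1-x)\le -x$ estimate are accounted for). Telescoping this over $n = 0,1,\dots,\tau-1$ and using $\log(s_n) \le \log(s_0)$ throughout gives $\tau \cdot \frac{c}{\log(s_0)} \le \log(s_0) - \log(s_\tau) \le \log(s_0)$, hence $\tau \le \log(s_0)^2/c$, which is of the desired order $\lambda(\log s_0)^2$ after tracking constants.

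First I would set up the elementary estimates. From $s_{n+1} = \lfloor s_n(1 - \frac{1}{\lambda\log s_n})\rfloor$ we get $s_{n+1} \le s_n(1-\frac{1}{\lambda\log s_n})$, so $\log s_{n+1} \le \log s_n + \log(1-\frac{1}{\lambda\log s_n}) \le \log s_n - \frac{1}{\lambda\log s_n}$, using $\log(1-x) \le -x$ for $x \in [0,1)$. This already handles the regime where $s_n$ is moderately large. Some care is needed near the bottom: once $s_n \le 2$ the sequence is constant by definition, so $\tau$ is exactly the first index with $s_\tau \le 2$ — but I should double-check that the sequence cannot ``skip past'' or get stuck at $s_n = 3$ forever; since $\lfloor 3(1-\frac{1}{\lambda\log 3})\rfloor = 2$ whenever $\frac{3}{\lambda\log 3} \ge 1$, i.e. whenever $\lambda \le 3/\log 3 \approx 2.73$, this needs a separate check for larger $\lambda$. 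For general $\lambda \ge 2$ I would instead argue that from any $s_n \in \{3\}$ the telescoped bound still applies (the step from $3$ still strictly decreases $\log$), and that strict decrease of the integer sequence while $s_n > 2$ forces termination; then $s_\tau \le 2$, and since $s_n \ge 1$ always (the floor of a positive number times a factor in $(0,1)$, starting from a positive integer — one must verify $s_n(1-\frac{1}{\lambda\log s_n}) \ge 1$ when $s_n \ge 3$, which holds since $s_n(1-\frac{1}{\lambda\log s_n}) \ge 3(1 - \frac{1}{2\log 3}) > 1$), we conclude $s_\tau \in \{1,2\}$.

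The telescoping step is then routine: summing $\frac{1}{\lambda\log s_n} \le \log s_n - \log s_{n+1}$ over $n=0,\dots,\tau-1$, and bounding each $\log s_n \le \log s_0$ on the left, yields $\frac{\tau}{\lambda\log s_0} \le \log s_0 - \log s_\tau \le \log s_0$, hence $\tau \le \lambda(\log s_0)^2$. This matches the claimed bound exactly, so the $\log(1-x)\le -x$ estimate must be applied without further slack — I would be slightly careful that the floor operation does not cost an extra additive constant that ruins the clean bound; if it does, the cleanest fix is to observe that the floor only helps (it makes $s_{n+1}$ smaller, hence $\log s_{n+1}$ smaller, strengthening the inequality $\log s_{n+1} \le \log s_n - \frac{1}{\lambda \log s_n}$), so no correction is needed.

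\textbf{Main obstacle.} The analytic heart — the telescoping of $\sum 1/\log s_n$ against $\log s_0$ — is short. The fiddly part, and the one I expect to require the most attention, is the boundary behavior near $s_n \in \{3,4\}$: ensuring the sequence actually reaches $\{1,2\}$ rather than stalling, checking that $s_n$ never drops below $1$, and confirming that the stopping time $\tau$ defined via $s_\tau = s_{\tau+1}$ coincides with ``first time $s_n \le 2$'' (one must rule out a coincidental fixed point $s_n = s_{n+1}$ with $s_n > 2$, which is impossible because the inequality $\log s_{n+1} \le \log s_n - \frac{1}{\lambda\log s_n} < \log s_n$ is strict whenever $s_n > 2$). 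Once these edge cases are dispatched, the constant $\lambda(\log s_0)^2$ falls out immediately from the telescoped sum.
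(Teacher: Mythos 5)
Your proof is correct and is essentially the paper's argument written in logarithmic form: the paper bounds $s_n \le s_0\prod_{i=0}^{n-1}\bigl(1-\frac{1}{\lambda\log(s_i)}\bigr)$ and concludes by contradiction via $(1-\frac{1}{x})^x\le \frac{1}{e}$, which after taking logarithms is exactly your telescoped inequality $\sum_{n=0}^{\tau-1}\frac{1}{\lambda\log(s_n)}\le\log(s_0)-\log(s_\tau)\le\log(s_0)$, and the boundary checks ($s_\tau\ge 1$ via $\lfloor 3(1-\frac{1}{2\log 3})\rfloor\ge 1$, strict integer decrease while $s_n>2$ so that $\tau$ is the first index with $s_\tau\le 2$) are the same in both. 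If anything your rendering is slightly cleaner: you bound each summand from below by $\frac{1}{\lambda\log(s_0)}$ using $s_n\le s_0$, which is the correct direction of monotonicity, whereas the paper's displayed step replaces each factor $1-\frac{1}{\lambda\log(s_i)}$ by $1-\frac{1}{\lambda\log(s_n)}$, which (since $s_i\ge s_n$) over- rather than under-estimates the product; the paper's chain goes through once $s_n$ is replaced by $s_0$ there, which is precisely what your telescoping does.
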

\begin{proof}
Consider a finite subsequence $s_0,s_1,\ldots,s_{n}$ and assume that $s_n > 2$.
By definition, it is monotonically decreasing since $\lambda \log(s_i) > 1$ for any $0\le i\le n$.

Then, $s_n \le s_0 \prod_{i=0}^{n-1}\left(1-\frac{1}{\lambda\log(s_i)}\right) \le s_0 \left(1-\frac{1}{\lambda\log(s_n)}\right)^n$.
Assume by contradiction that $n \ge \lambda(\log(s_0))^2$, then by exploiting the fact that $(1-\frac{1}{x})^x \le \frac{1}{e}$ for $x \ge 2$,
\begin{flalign*}
s_n \le s_0 \left(1-\frac{1}{\lambda\log(s_n)}\right)^{\lambda(\log(s_0))^2} 
\le s_0 \left(1-\frac{1}{\lambda\log(s_n)}\right)^{\lambda\log(s_n)\log(s_0)}
\le s_0 e^{-\log(s_0)} = 1.
\end{flalign*}
This yields a contradiction since we assumed $s_n > 2$. Hence, for $n \ge \lambda(\log(s_0))^2$, $s_n \le 2$, so the total stopping time verifies $\tau \le \lambda(\log(s_0))^2$. Now, we verify that $s_{\tau} > 0$. As $s_{\tau -1} \ge 3$, we have $s_{\tau} = \lfloor s_{\tau -1}\left(1-\frac{1}{\lambda\log(s_{\tau -1})}\right)\rfloor \ge \lfloor 3(1-\frac{1}{2\log(3))}\rfloor \ge 1$. Therefore, $s_{\tau} \in \{1,2\}$.
\end{proof}

\subsection{Retrieving all eccentricities thanks to $\Theta$-classes}

The following theorem is crucial for our algorithm: it states that, given a $\Theta$-class and the weighted eccentricity of each vertex inside its induced halfspace, we can re-assemble all weighted eccentricities of $(G,\omega)$ in linear time.

\begin{theorem}
Let $(G,\omega)$ be a weighted median graph and $E_i$ one of its $\Theta$-classes. Assume that:
\begin{itemize}
\item all weighted eccentricities of $(G[H_i'],\omega)$ are known,
\item all weighted eccentricities of $(G[H_i''],\omega)$ are known.
\end{itemize}
Then, one can compute all weighted eccentricities of $(G,\omega)$ in linear time $O(\vert E(G)\vert) = O(n\log n)$.
\label{th:balanced_recursion}
\end{theorem}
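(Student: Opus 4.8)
The plan is to use the gatedness of the two halfspaces together with the fiber decomposition of Lemma~\ref{le:compute_gates}. Fix $E_i$ and write $H' = H_i'$, $H'' = H_i''$. Any vertex $u \in V(G)$ lies in exactly one of the two halfspaces, say $u \in H'$ without loss of generality. Then
$$\wecc{u}{(G,\omega)} = \max\left\{\max_{v \in H'} d_\omega(u,v),\ \max_{w \in H''} d_\omega(u,w)\right\}.$$
The first term is exactly $\wecc{u}{(G[H'],\omega)}$, which is given, because $H'$ is convex, hence isometric, so distances inside $H'$ agree with distances in $G$. The whole difficulty is therefore to compute, for every $u \in H'$, the quantity $M''(u) := \max_{w \in H''} \bigl(d_G(u,w) + \omega(w)\bigr)$ (and symmetrically $M'(w)$ for $w \in H''$), and to do so in total linear time.

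The key structural fact is that since $H''$ is gated, every $u \in H'$ has a gate $g(u) = g_{H''}(u) \in \partial H''$, and $d_G(u,w) = d_G(u,g(u)) + d_G(g(u),w)$ for all $w \in H''$; moreover $d_G(u,g(u)) = d_G(u,\partial H') + 1$ since crossing $E_i$ costs one edge and the boundaries are matched (Lemma~\ref{le:boundaries}). Consequently
$$M''(u) = d_G(u,g(u)) + \max_{w \in H''}\bigl(d_G(g(u),w) + \omega(w)\bigr) = d_G(u,g(u)) + \wecc{g(u)}{(G[H''],\omega)},$$
where the last equality again uses that $H''$ is convex/isometric, so the inner maximum is precisely the weighted eccentricity of $g(u)$ inside $(G[H''],\omega)$, a value we are given. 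Symmetrically, $M'(w) = d_G(w,g_{H'}(w)) + \wecc{g_{H'}(w)}{(G[H'],\omega)}$ for $w \in H''$. So the algorithm is: run the fiber-computing BFS of Lemma~\ref{le:compute_gates} on $H''$ to obtain, for every $u \notin H''$, both $g_{H''}(u)$ and $d_G(u,g_{H''}(u))$ — this is $O(m)$ — then for each $u \in H'$ set $\wecc{u}{(G,\omega)} = \max\{\wecc{u}{(G[H'],\omega)},\, d_G(u,g_{H''}(u)) + \wecc{g_{H''}(u)}{(G[H''],\omega)}\}$; do the symmetric computation with a second fiber-BFS on $H'$. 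Total time $O(m) = O(n\log n)$.

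The step I expect to require the most care is the justification that the cross-halfspace maximum collapses through the gate, i.e. the identity $\max_{w\in H''}(d_G(u,w)+\omega(w)) = d_G(u,g(u)) + \wecc{g(u)}{(G[H''],\omega)}$. One direction ($\le$) is the gate property applied termwise; the other direction ($\ge$) needs that the inner maximum over $w$ is attained and equals the \emph{weighted} eccentricity of $g(u)$ \emph{within the induced subgraph} $G[H'']$ — this is where convexity of $H''$ (Lemma~\ref{le:halfspaces}) is essential, guaranteeing $d_{G[H'']}(g(u),w) = d_G(g(u),w)$, so that the given recursive values are the correct ones to plug in. A minor additional point to check is that $\omega$ restricted to $H''$ is the same weight function used in the recursive call, which holds by definition since we never modified the weights. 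Everything else is bookkeeping over the two parts of the partition $V(G) = H' \sqcup H''$.
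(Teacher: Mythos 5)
Your proposal is correct and follows essentially the same route as the paper's proof: decompose the eccentricity of $u \in H_i'$ into the within-halfspace maximum (handled by convexity and the given recursive values) and the cross-halfspace maximum (collapsed through the $H_i''$-gate via gatedness), with gates and gate-distances computed by the two fiber BFSs of Lemma~\ref{le:compute_gates} in $O(m)$ total time. The final formula $\wecc{u}{(G,\omega)} = \max\left\{\wecc{u}{(G[H_i'],\omega)},\ d(u,g(u)) + \wecc{g(u)}{(G[H_i''],\omega)}\right\}$ is exactly the one used in the paper.
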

\begin{proof}
At the beginning of our computation, every vertex of $V(G)$ is labeled with a  weighted distance: the vertices $u' \in H_i'$ are labeled with their weighted eccentricity in $G[H_i']$, formally $\wecc{u'}{(H_i',\omega)}$. Similarly, each vertex $u'' \in H_i''$ is labeled with $\wecc{u''}{(H_i'',\omega)}$. 

For any vertex $u' \in H_i'$, we determine its $H_i''$-gate $g(u') \in H_i''$ and, conversely, for any vertex $u'' \in H_i''$, we determine its $H_i'$-gate $g(u'')$. This operation can be achieved in total $O(n\log n)$ time, as recalled in Lemma~\ref{le:compute_gates}, by launching two BFSs: one with a starting queue made up of $H_i'$, and one with a starting queue $H_i''$. We compute, for any $v' \in \partial H_i'$ (resp. $v'' \in \partial H_i''$) its open fiber $F_{H_i'}(v')$ in $H_i''$ (resp. $F_{H_i''}(v'')$ in $H_i'$). Through this BFS, we store the unweighted distance from any vertex to its gate: we obtain all values $d(u',g(u'))$ (resp. $d(u'',g(u''))$ also in linear time.

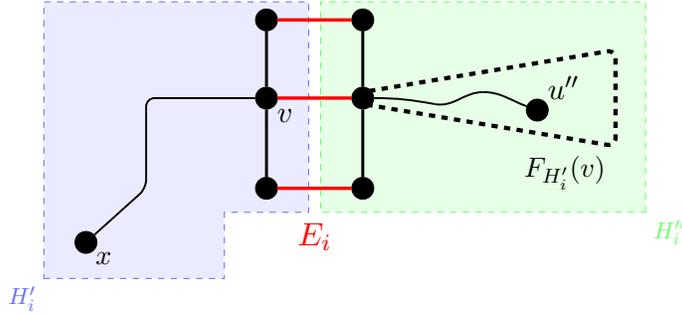
\begin{figure}[h]
\centering
\scalebox{0.8}{\begin{tikzpicture}


\draw [dashed, color = white!40!blue, fill = white!92!blue] (0,1.0) -- (3.0,1.0) -- (3.0,2.1) -- (4.4,2.1) -- (4.4,5.6) -- (0,5.6) -- (0,1.0) node[below left] {$H_i'$};
\draw [dashed, color = white!40!green, fill = white!90!green] (10.0,2.1) -- (10.0,5.6) -- (4.6,5.6) -- (4.6,2.1) -- (10.0,2.1) node[below right] {$H_i''$};


\node[draw, circle, minimum height=0.2cm, minimum width=0.2cm, fill=black] (P11) at (0.7,1.6) {};
\draw (P11) node[below right,scale =1.2] {$x$};

\node[draw, circle, minimum height=0.2cm, minimum width=0.2cm, fill=black] (P22) at (3.7,2.5) {};
\node[draw, circle, minimum height=0.2cm, minimum width=0.2cm, fill=black] (P23) at (3.7,4) {};
\draw (P23) node[below right,scale = 1.3] {$v$};
\node[draw, circle, minimum height=0.2cm, minimum width=0.2cm, fill=black] (P24) at (3.7,5.3) {};

\node[draw, circle, minimum height=0.2cm, minimum width=0.2cm, fill=black] (P32) at (5.3,2.5) {};
\node[draw, circle, minimum height=0.2cm, minimum width=0.2cm, fill=black] (P33) at (5.3,4) {};
\draw [rounded corners, line width = 2pt, dashed] (5.3,4.1) -- (5.3,3.9) -- (9.5,3.2) -- (9.5,4.8) -- (5.3,4.1);
\draw (9.5,3.2) node[below left,scale =1.2] {$F_{H_i'}(v)$};
\node[draw, circle, minimum height=0.2cm, minimum width=0.2cm, fill=black] (P34) at (5.3,5.3) {};

\node[draw, circle, minimum height=0.2cm, minimum width=0.2cm, fill=black] (P4) at (8.2,3.8) {};
\draw (P4) node[above right,scale = 1.3] {$u''$};
\draw[line width = 0.9pt,out=0,in=170] (P33)to(6.5,3.9);
\draw[line width = 0.9pt,out=-10,in=180] (6.5,3.9)to(7.3,4.1);
\draw[line width = 0.9pt,out=0,in=160] (7.3,4.1)to(P4);

\draw[rounded corners, line width = 0.9pt] (P23) -- (1.7,4) -- (1.7,2.5) -- (P11);


\draw[line width = 1.6pt, color = red] (P22) -- (P32);
\draw[line width = 1.6pt, color = red] (P23) -- (P33);
\draw[line width = 1.6pt, color=red] (P24) -- (P34);

\draw[line width = 1.4pt] (P22) -- (P23);
\draw[line width = 1.4pt] (P32) -- (P33);
\draw[line width = 1.4pt] (P23) -- (P24);
\draw[line width = 1.4pt] (P33) -- (P34);


\node[scale=1.4, color = red] at (4.5,1.7) {$E_i$};

\end{tikzpicture}}
\caption{The largest weighted distance from $u'' \in H_i''$ to some vertex $x \in H_i'$ is given by the (unweighted) distance from $u''$ to its gate $v = g_{H_i'}(u'')$ in addition with the weighted eccentricity of $v$: $\wecc{v}{(H_i',\omega)} = d(v,x)$.}
\label{fig:fibers}
\end{figure}

One can retrieve at this moment the weighted eccentricity of $u' \in H_i'$ (resp. $u'' \in H_i''$ with the same arguments). Indeed, the farthest vertex from $u'$ (in the weighted sense) is either in $H_i'$ or in $H_i''$. If it belongs to $H_i'$, as this halfspace is convex, the weighted eccentricity of $u'$ is its label $\wecc{u'}{(H_i',\omega)}$. Else, as $H_i''$ is gated, for any vertex $v'' \in H_i''$, there is a shortest path from $u'$ to $v''$ passing through the $H_i''$-gate $g(u')$. Conversely, any shortest path induced in $H_i''$ from $g(u')$ to some $v''$, concatenated with a shortest $(u',g(u'))$-path, produces a shortest $(u',v'')$-path, since $d(u',g(u')) + d(g(u'),v'') = d(u',v'')$
. Hence, the weighted eccentricity of $u'$ can be decomposed as the sum of $d(u',g(u'))$ (distance to the gate) with the weighted eccentricity of $g(u')$ in $H_i''$. 
Formally, $$\wecc{u'}{(G,\omega)} = \max \left\{\wecc{u'}{(H_i',\omega)}, ~d(u',g(u')) + \wecc{g(u')}{(H_i'',\omega)}\right\}.$$ 
Figure~\ref{fig:fibers} illustrates this formula with a vertex $u'' \in H_i''$ instead of $u' \in H_i'$.

The whole procedure, consisting first in computing the gates $g(u)$ and second in applying the latter formula, takes time $O(n\log n)$.
\end{proof}

Theorem \ref{th:balanced_recursion} immediately yields a recursive algorithm scheme based on $\Theta$-classes.
But its complexity is not necessarily subquadratic.

When our input graph $G$ has a balanced $\Theta$-class $E_i$, one can recursively compute the weighted eccentricities of the two induced halfspaces $G[H_i']$ and $G[H_i'']$ and then retrieve the weighted eccentricities of $G$ with an extra linear time. With some ``ideal'' instance, we could always find such a balanced $\Theta$-class and use this recursive scheme until falling into the trivial case of a single weighted vertex. Observe that, with such a utopian situation, the total running time would be quasilinear: the depth of the recursive tree is poly-logarithmic thanks to Lemma~\ref{le:log_decrease}. Indeed, fixing $s_0=n=\vert V(G) \vert$ and $\lambda  = 2$, value $s_n$ gives an upper bound of the number of vertices remaining in each branch of the recursive tree after $n$ calls. 

Nevertheless, we might find at some moment an induced subgraph of $G$ which has no balanced $\Theta$-class, in other words which belongs to $\ulog$. Obviously, even the input $G$ could belong to this family. Then the time complexity of the recursive scheme could be quadratic. 

As a conclusion of this section, we observe that, to pursue the description of our recursive algorithm, we need to focus on the case of weighted median graphs which admit only unbalanced $\Theta$-classes.

\section{Computation of eccentricities on median graphs without balanced $\Theta$-classes} \label{sec:unbalanced}

The study of median graphs with only unbalanced $\Theta$-classes is of interest since, as we prove it now, the existence of a quasilinear-time algorithm finding all weighted eccentricities for the subfamily $\ulog$ of median graphs will imply the same result for the whole family of median graphs. In Sections~\ref{subsec:slices} and~\ref{subsec:peeling}, we fix $(G,\omega) \in \ulog$.

\subsection{Slice decomposition} \label{subsec:slices}

We now pursue with new concepts about $\Theta$-classes. As all $\Theta$-classes $E_i$ of $G \in \ulog$ are unbalanced, all of them admit both a minority halfspace denoted from now on by $H_i'$ (the halfspace with the smaller size), and a majority halfspace denoted by $H_i''$. From Definition~\ref{def:balanced}, we have $\vert H_i' \vert \le \frac{n}{2\log(n)}$ with $n = \vert V(G) \vert$. Also, we assume that graph $G \in \ulog$ has at least three vertices: $n\ge 3$ and $\log n \ge 1$ (case $n\le 2$ will be treated trivially).

No graph of $\ulog$ contains an egalitarian halfspace since all its $\Theta$-classes are unbalanced. Consequently, these graphs admit a unique median vertex.

\begin{lemma}[Unique median vertex in $G\in \mathcal{U}_f$]
For any $n$-vertex median graph $G\in \mathcal{U}_f$ where $f(n) \ge 2$, there exists a unique vertex $v_0$ such that, for any $\Theta$-class $E_i$, this vertex $v_0$ belongs to the majority halfspace of $E_i$.
\label{le:unique_median}
\end{lemma}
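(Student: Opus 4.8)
The plan is to exploit the majority rule (Lemma~\ref{le:majority}), which states that $\med(G)$ equals the intersection of all majority halfspaces. First I would observe that existence is immediate: by Lemma~\ref{le:majority}, $\med(G)$ is nonempty, and every vertex of $\med(G)$ lies in the intersection of all majority halfspaces $\bigcap_i H_i''$, hence lies in the majority halfspace of each $\Theta$-class. So it remains only to prove uniqueness, i.e. that $\bigcap_i H_i''$ (equivalently $\med(G)$) is a singleton when $G \in \mathcal{U}_f$ with $f(n)\ge 2$.

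For uniqueness, the key point is that $G \in \mathcal{U}_f$ with $f(n) \ge 2$ forbids egalitarian halfspaces: if some $\Theta$-class $E_i$ had $|H_i'| = |H_i''| = n/2$, then $\min\{|H_i'|,|H_i''|\} = n/2 \ge n/f(n)$, contradicting $f$-unbalancedness. Thus for every $\Theta$-class exactly one side is the strict minority halfspace $H_i'$ and the other is the strict majority halfspace $H_i''$. Now suppose for contradiction that $\med(G)$ contains two distinct vertices $u \neq w$. Since $\med(G)$ is convex (it is gated, being an intersection of gated halfspaces, or directly by Lemma~\ref{le:majority} as an interval), take a neighbour $u'$ of $u$ on a shortest $(u,w)$-path; then $u' \in \med(G)$ as well, and the edge $uu'$ belongs to some $\Theta$-class $E_j$ with, say, $u \in H_j'$ and $u' \in H_j''$. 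But then $u \in \med(G) \subseteq H_j''$ forces $u \in H_j' \cap H_j''$, which is impossible since $H_j'$ and $H_j''$ partition $V(G)$. Hence $\med(G)$ is a single vertex $v_0$, and by the existence argument $v_0$ lies in the majority halfspace of every $\Theta$-class.

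The only subtlety — and the step I would be most careful about — is the logical direction connecting ``$\med(G) = \bigcap_i H_i''$'' to ``the unique vertex in the majority halfspace of every $E_i$''. One must check both that any vertex in all majority halfspaces is median (which is exactly the majority rule) and that $\med(G)$ cannot be larger than a point, which is precisely where the no-egalitarian-halfspace consequence of $\mathcal{U}_f$ is used. I would also note that the alternative, perhaps cleaner, route for uniqueness is: if $x \neq y$ both lie in $\bigcap_i H_i''$, then the signature $\sigma_{x,y}$ is nonempty, so there is a $\Theta$-class $E_i$ separating $x$ and $y$, putting one of them in $H_i'$ — contradiction. This makes the uniqueness argument self-contained modulo Lemma~\ref{le:signature}, and is what I would present, reserving the majority rule only for the existence half of the statement.
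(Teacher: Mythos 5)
Your proposal is correct and follows essentially the same route as the paper: both use the majority rule to identify $\med(G)$ with the intersection of all majority halfspaces, then derive a contradiction from two adjacent median vertices being separated by a $\Theta$-class that, since $G\in\mathcal{U}_f$ with $f(n)\ge 2$ forbids egalitarian halfspaces, would place one of them in a strict minority halfspace. Your alternative uniqueness argument via a nonempty signature $\sigma_{x,y}$ is a slight streamlining but not a materially different approach.
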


\begin{proof}
From Lemma~\ref{le:majority}, the median set of a graph $G \in \mathcal{U}_f$ is the intersection of all majority halfspaces and there exists $u,v \in V(G)$ such that $\med(G) = I(u,v)$. Assume that $u\neq v$. As $\med(G)$ is connected, consider two adjacent vertices $x,y \in \med(G)$ and let $xy \in E_i$. The $\Theta$-class $E_i$ admits two egalitarian halfspaces since, otherwise, either $x$ or $y$ would not be a median vertex. We have a contradiction: $E_i$ is clearly balanced, but it should admit a halfspace of size strictly smaller than $\frac{n}{f(n)} \le \frac{n}{2}$.
\end{proof}

From now on, this unique median vertex $v_0$  will be taken as the basis of the orientation of $G$. We consider a second vertex $\umax$ (potentially equal to $v_0$) which is a farthest vertex from $v_0$.

\begin{definition}
Let $\umax$ be a vertex such that $d_{\omega}(v_0,\umax) = \wecc{v_0}{(G,\omega)}$. If several candidates for $\umax$ exist, then select one arbitrarily, except in one case: when $v_0$ itself is a candidate for $\umax$, then fix $\umax = v_0$.
\label{def:umax}
\end{definition}

We begin with a straightforward observation: if $\umax = v_0$, then $v_0$ is the farthest vertex from any $u \neq v_0$.

\begin{lemma}
If $v_0 = \umax$, then for any vertex $u \in V(G)\setminus \{v_0\}$, $\wecc{u}{(G,\omega)} = d_{\omega}(u,v_0)$.
\label{le:v0farthest}
\end{lemma}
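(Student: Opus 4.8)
The statement to prove is Lemma~\ref{le:v0farthest}: if $v_0 = \umax$, then every vertex $u \neq v_0$ has its weighted eccentricity realized by $v_0$ itself, i.e. $\wecc{u}{(G,\omega)} = d_{\omega}(u,v_0) = d(u,v_0) + \omega(v_0)$. The plan is to argue by contradiction: suppose some vertex $u \neq v_0$ has a strictly farther vertex $w$, i.e. $d(u,w) + \omega(w) > d(u,v_0) + \omega(v_0)$. The goal is to deduce that $w$ (or some related vertex) is strictly farther from $v_0$ than $\umax = v_0$ is from itself, contradicting the maximality in Definition~\ref{def:umax}.

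First I would set up the key inequality. Since $v_0 = \umax$ is a farthest vertex from $v_0$ in the weighted sense, for every vertex $x$ we have $d(v_0,x) + \omega(x) = d_\omega(v_0,x) \le \wecc{v_0}{(G,\omega)} = d_\omega(v_0,v_0) = \omega(v_0)$; hence $d(v_0,x) + \omega(x) \le \omega(v_0)$ for all $x \in V(G)$. Now consider the hypothetical witness $w$ for $u$: $d(u,w) + \omega(w) > d(u,v_0) + \omega(v_0)$. Combining this with the inequality applied to $x = w$, namely $\omega(w) \le \omega(v_0) - d(v_0,w)$, we get $d(u,w) + \omega(v_0) - d(v_0,w) \ge d(u,w) + \omega(w) > d(u,v_0) + \omega(v_0)$, which simplifies to $d(u,w) - d(v_0,w) > d(u,v_0)$, i.e. $d(u,w) > d(u,v_0) + d(v_0,w)$. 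This directly violates the triangle inequality $d(u,w) \le d(u,v_0) + d(v_0,w)$, giving the contradiction.

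The only subtle point — and the place where I expect the argument really needs the hypothesis $v_0 = \umax$ rather than an arbitrary basepoint — is the very first step: the fact that $v_0$ being a farthest vertex from itself forces $\wecc{v_0}{(G,\omega)} = \omega(v_0)$ and hence the bound $d(v_0,x) + \omega(x) \le \omega(v_0)$ for all $x$. One should note that $\umax$ is defined in Definition~\ref{def:umax} as a vertex maximizing $d_\omega(v_0,\cdot)$, and that the convention there picks $\umax = v_0$ precisely when $v_0$ itself attains the maximum; so the assumption $v_0 = \umax$ is equivalent to saying $\omega(v_0) = \wecc{v_0}{(G,\omega)}$, which is exactly what the bound above records. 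After that, the derivation is just the triangle inequality, so there is no real obstacle; the lemma is essentially a one-line consequence once the hypothesis is unpacked correctly. I would close by noting that the same computation shows the farthest vertex from $u$ is always $v_0$, not merely that its distance equals $d_\omega(u,v_0)$.
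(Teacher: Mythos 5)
Your proof is correct and uses exactly the same two ingredients as the paper's argument — the bound $d_{\omega}(v_0,x) \le \omega(v_0)$ extracted from $\umax = v_0$, combined with the triangle inequality — merely rearranged into a proof by contradiction instead of the paper's direct chain $d_{\omega}(u,v) \le d(u,v_0) + d_{\omega}(v_0,v) \le d(u,v_0) + \omega(v_0) = d_{\omega}(u,v_0)$. This is essentially the same proof.
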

\begin{proof}
By definition of $\umax$, we have $\omega(v_0) \ge d_{\omega}(v_0,u)$. Consider any vertex $v \neq v_0$. By triangular inequality, $d(u,v) \le d(u,v_0) + d(v_0,v)$, so $d_{\omega}(u,v) \le d(u,v_0) + d(v_0,v) + \omega(v) = d(u,v_0) + d_{\omega}(v_0,v)$. But $d_{\omega}(v_0,v) \le \omega(v_0)$ which implies that $d_{\omega}(u,v) \le d_{\omega}(u,v_0)$. As a consequence, $v_0$ is the farthest vertex from any $u \neq v_0$.
\end{proof}

A natural consequence of the previous lemma is that, if $\umax = v_0$, then the eccentricities of $(G,\omega) \in \ulog$ can be computed in linear time in a very simple way.

\begin{corollary}
    Let $(G,\omega) \in \ulog$ with $\umax = v_0$. All eccentricities of $(G,\omega)$ can be computed in linear time $O(m) = O(n\log n)$ thanks to a BFS starting at $v_0$.
    \label{co:casev0}
\end{corollary}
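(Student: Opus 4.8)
\textbf{Proof plan for Corollary~\ref{co:casev0}.} The plan is to observe that, by Lemma~\ref{le:v0farthest}, when $\umax = v_0$ every vertex $u \neq v_0$ has its weighted eccentricity equal to $d_\omega(u,v_0) = d(u,v_0) + \omega(v_0)$, so the only quantities we actually need are the unweighted distances $d(u,v_0)$ for all $u \in V(G)$, plus the single value $\omega(v_0)$. First I would run a standard BFS from $v_0$: this takes time $O(m) = O(n\log n)$ since median graphs are sparse (the bound $m \le n\log_2 n$ was recalled in Section~\ref{subsec:def}), and it produces $d(u,v_0)$ for every vertex $u$. Then for each $u \neq v_0$ I set $\wecc{u}{(G,\omega)} = d(u,v_0) + \omega(v_0)$, which is correct by Lemma~\ref{le:v0farthest}.

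The only remaining value is $\wecc{v_0}{(G,\omega)}$ itself, but this is immediate from the setup: by Definition~\ref{def:umax} we have $\wecc{v_0}{(G,\omega)} = d_\omega(v_0,\umax) = d_\omega(v_0,v_0) = \omega(v_0)$. (One can also recover this from the BFS output as $\max_{v} \{ d(v_0,v) + \omega(v) \}$, which is fine too and still linear; but the hypothesis $\umax = v_0$ makes it free.) So the whole computation consists of one BFS plus $O(n)$ arithmetic operations, for a total of $O(m) = O(n\log n)$, which is exactly the claimed bound.

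There is essentially no obstacle here — the corollary is a direct consequence of Lemma~\ref{le:v0farthest} together with the linear-time BFS and the sparsity bound $m = O(n\log n)$ for median graphs. The only point that needs a word of care is making sure we account for $v_0$'s own eccentricity (handled above via Definition~\ref{def:umax}) and confirming that reading/writing the $n$ labels and the weights stays within the linear budget, which it clearly does. Hence all weighted eccentricities of $(G,\omega)$ are obtained in time $O(m) = O(n\log n)$.
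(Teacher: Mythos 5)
Your proposal is correct and follows essentially the same route as the paper's proof: a single BFS from $v_0$ to get all unweighted distances, Lemma~\ref{le:v0farthest} to conclude $\wecc{u}{(G,\omega)} = d(u,v_0) + \omega(v_0)$ for $u \neq v_0$, and the observation that $\wecc{v_0}{(G,\omega)} = \omega(v_0)$ since $\umax = v_0$. Your extra remark on accounting for the cost of reading and writing labels is a harmless refinement of the same argument.
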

\begin{proof}
  Executing a BFS with departure vertex $v_0$ allows us to obtain all unweighted distances between $v_0$ and all other vertices. We know from Lemma~\ref{le:v0farthest} that, in our case, for any $u \neq v_0$, $\wecc{u}{(G,\omega)} = d_{\omega}(u,v_0) = d(u,v_0) + \omega(v_0)$. Moreover, $\wecc{v_0}{(G,\omega)} = \omega(v_0)$ since $\umax = v_0$. In summary, after computing all unweighted distances from $v_0$ in linear time, one can retrieve all eccentricities of $(G,\omega)$ also in linear time as a second step.
\end{proof}

From now on, we can assume that $\umax \neq v_0$ as otherwise the weighted eccentricities can be computed trivially. We define $L(G)$ as a ladder set $L_{v_0,\umax}$ between $v_0$ and one farthest-to-$v_0$ vertex $\umax$.

\begin{definition}[Wide ladder]
We denote by $L(G)$ the \emph{wide ladder} of $G$, which is the POF containing the $\Theta$-classes $E_i$ adjacent to $v_0$ such that $\umax$ belongs to their minority halfspace, {\em i.e.} $\umax \in H_i'$. Formally, $L(G) = L_{v_0,\umax}$ and $\ell = \vert L(G) \vert$ is its size.
\label{def:wide_ladder}
\end{definition}

For sake of simplicity, we modify the indices of the $\Theta$-classes such that: $$L(G) = \{E_1,E_2,\ldots,E_{\ell}\}.$$ Observe that this change has no impact on the previous results since the indices of $\Theta$-classes played no role yet.

Our technique to handle median graphs without balanced $\Theta$-classes relies on the following crucial observation. All vertices $v$ of $G$ such that $L_{v_0,v} \cap L(G) = \emptyset$ form, due to the unbalancedness of the $\Theta$-classes, a large set of vertices. Moreover, we can directly deduce their weighted eccentricity which is $d_{\omega}(v,\umax)$. As a consequence, the eccentricity of at least half of the vertices of the graph are already known. We will explain how to handle the remaining vertices afterwards, in Section~\ref{subsec:peeling}.

\begin{lemma}
Let $G \in \ulog$ and $v_0$ its median vertex. Any vertex $v$ such that $L_{v_0,v} \cap L(G) = \emptyset$ has a weighted eccentricity $\wecc{v}{(G,\omega)} = d(v,v_0) + d_{\omega}(v_0,\umax) = d_{\omega}(v,\umax)$.
\label{le:ecc_large}
\end{lemma}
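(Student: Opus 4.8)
The claim has two parts: first that $\wecc{v}{(G,\omega)} = d_\omega(v,\umax)$, i.e. $\umax$ is a weighted-farthest vertex from $v$; and second that this quantity equals $d(v,v_0) + d_\omega(v_0,\umax)$, i.e. that $v_0$ lies on a shortest $(v,\umax)$-path. I will prove the second equality first, since it is the more structural statement and it will feed into the first.

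\textbf{Step 1: $v_0 \in I(v,\umax)$.} I want to show $d(v,\umax) = d(v,v_0) + d(v_0,\umax)$, which by Lemma~\ref{le:signature} is equivalent to $\sigma_{v,\umax} = \sigma_{v,v_0} \sqcup \sigma_{v_0,\umax}$ (a disjoint union, since signatures are symmetric and $\sigma$ obeys a ``triangle'' identity: $\sigma_{v,\umax} = \sigma_{v,v_0} \,\triangle\, \sigma_{v_0,\umax}$ always, so equality of $d$'s is equivalent to $\sigma_{v,v_0} \cap \sigma_{v_0,\umax} = \emptyset$). So suppose for contradiction that some $E_i$ separates $v$ from $v_0$ and also separates $v_0$ from $\umax$. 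Since $v_0$ is the median vertex, by Lemma~\ref{le:unique_median} $v_0$ lies in the majority halfspace $H_i''$ of $E_i$; hence both $v$ and $\umax$ lie in the minority halfspace $H_i'$. Now I claim $E_i \in L_{v_0,v} \cap L(G)$, contradicting the hypothesis $L_{v_0,v}\cap L(G)=\emptyset$. For this I must check that $E_i$ is adjacent to $v_0$, i.e. $v_0 \in \partial H_i''$. This is where the orientation/ladder structure enters: consider a shortest $(v_0,v)$-path; it uses exactly one edge $ab$ of $E_i$ with $a \in \partial H_i''$, $b \in \partial H_i'$, and the subpath from $v_0$ to $a$ stays in $H_i''$ while the subpath from $b$ to $v$ stays in $H_i'$. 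Similarly a shortest $(v_0,\umax)$-path crosses $E_i$ at an edge $a'b'$ with $a' \in \partial H_i''$. I need to argue $a = v_0$, equivalently that $v_0$ itself is adjacent to $E_i$; the honest way is: among all $\Theta$-classes separating $v_0$ from $v$, the ones \emph{not} adjacent to $v_0$ are precisely those ``behind'' an adjacent one, and a class separating $v_0$ from $v$ that is \emph{not} in $L_{v_0,v}$ is separated from $v_0$ by some class of $L_{v_0,v}$ — but here I have the stronger fact that $E_i$ separates $v_0$ from $\umax$ too. I will instead derive a direct contradiction: since $L_{v_0,v} \cap L(G) = \emptyset$, no class adjacent to $v_0$ separates both $v_0$ from $v$ \emph{and} lies in $L(G)=L_{v_0,\umax}$; combined with the observation that any class separating $v_0$ from both $v$ and $\umax$ must (being crossed by shortest paths from $v_0$ to each) have a shortest-path-reachable adjacent-to-$v_0$ class on the $v_0$-side that \emph{also} separates $v_0$ from both, I can descend along the $v_0$-orientation until I reach an adjacent class $E_j$ with $E_j \in L_{v_0,v} \cap L_{v_0,\umax} = L_{v_0,v}\cap L(G)$, the desired contradiction. (The clean statement to isolate as a sub-lemma: if $E_i$ separates $v_0$ from $w$ and $v_0 \notin \partial H_i''$, then there is $E_j$ adjacent to $v_0$, $E_j \in L_{v_0,w}$, with $H_j''\supsetneq$ ... — essentially a minimal-distance-from-$v_0$ choice among separating classes.)

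\textbf{Step 2: $\umax$ is weighted-farthest from $v$.} Given Step 1, $d_\omega(v,\umax) = d(v,v_0) + d(v_0,\umax) + \omega(\umax) = d(v,v_0) + d_\omega(v_0,\umax)$. Now take any $w \in V(G)$. By the triangle inequality $d(v,w) \le d(v,v_0) + d(v_0,w)$, so $d_\omega(v,w) \le d(v,v_0) + d(v_0,w) + \omega(w) = d(v,v_0) + d_\omega(v_0,w)$. By Definition~\ref{def:umax}, $\umax$ maximizes $d_\omega(v_0,\cdot)$, so $d_\omega(v_0,w) \le d_\omega(v_0,\umax)$, giving $d_\omega(v,w) \le d(v,v_0) + d_\omega(v_0,\umax) = d_\omega(v,\umax)$. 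Hence $\wecc{v}{(G,\omega)} = \max_w d_\omega(v,w) = d_\omega(v,\umax) = d(v,v_0) + d_\omega(v_0,\umax)$, which is exactly the claim.

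\textbf{Main obstacle.} Step 2 is essentially a one-line triangle-inequality argument and carries no difficulty; the entire weight of the proof is in Step 1, specifically in converting ``$E_i$ separates $v_0$ from both $v$ and $\umax$'' into ``$L_{v_0,v}$ and $L(G)=L_{v_0,\umax}$ share a class''. The subtlety is that $L_{v_0,v}$ only contains the separating classes \emph{adjacent} to $v_0$, whereas $E_i$ a priori is not adjacent to $v_0$ — so one genuinely needs the structural descent (pick the separating class closest to $v_0$, or use the $v_0$-orientation and convexity of boundaries from Lemma~\ref{le:boundaries}) to produce an adjacent witness that separates $v_0$ from \emph{both} $v$ and $\umax$. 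I expect the cleanest route is: among all $E_k$ that separate $v_0$ from $v$, pick one, call it $E_j$, minimizing $d(v_0, \partial H_k'')$ — or simply minimizing the distance from $v_0$ to the nearest endpoint of $E_k$; show this $E_j$ is adjacent to $v_0$ (else a still-closer separating class exists, by taking one more step toward $v_0$ along a shortest path and using convexity of the majority halfspace), and show that if the originally assumed $E_i$ also separates $v_0$ from $\umax$ then so does this $E_j$ (because $\partial H_j''$, being convex/gated and containing $v_0$, forces shortest paths from $v_0$ toward anything in $H_j'$ to cross $E_j$ — and both $v,\umax$ wind up in $H_j'$). That $E_j \in L_{v_0,v} \cap L(G)$ contradicts the hypothesis, closing the proof.
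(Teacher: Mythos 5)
Your overall strategy coincides with the paper's: reduce the statement to $v_0 \in I(v,\umax)$, and establish that by showing $\sigma_{v_0,v} \cap \sigma_{v_0,\umax} = \emptyset$ under the hypothesis $L_{v_0,v}\cap L(G)=\emptyset$. Your Step~2 is complete and correct (the paper leaves exactly this triangle-inequality computation implicit behind ``it suffices to show $v_0\in I(v,\umax)$''), and your reduction of the interval condition to disjointness of signatures via the symmetric-difference identity is sound. The problem is that the entire content of the lemma sits in the step you defer to a ``sub-lemma'': converting a common separating class $E_i \in \sigma_{v_0,v}\cap\sigma_{v_0,\umax}$ that is \emph{not} adjacent to $v_0$ into one that \emph{is} adjacent to $v_0$ and still separates $v_0$ from both $v$ and $\umax$. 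You assert twice that such a descent exists but never prove it, and the more concrete of your two sketches is flawed: if you minimize the distance to $v_0$ over all classes of $\sigma_{v_0,v}$ (not over the intersection $\sigma_{v_0,v}\cap\sigma_{v_0,\umax}$), the minimizer is indeed adjacent to $v_0$, but there is no reason whatsoever for it to separate $v_0$ from $\umax$ — your parenthetical justification (``shortest paths from $v_0$ toward anything in $H_j'$ cross $E_j$'') is just the definition of a halfspace and does not address why $\umax\in H_j'$.

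The paper closes this gap as follows. It takes $E_j\in\sigma_{v_0,v}\cap\sigma_{v_0,\umax}$ minimal in the sense that no other common class has its edge closer to $v_0$ on \emph{both} the shortest $(v,v_0)$-path $Q$ and the shortest $(v_0,\umax)$-path $R$; then the subpath $P_j$ of $Q\cdot R$ lying between the two $E_j$-edges and containing $v_0$ repeats no $\Theta$-class (a repeat would violate minimality), hence is a shortest path by Lemma~\ref{le:signature}; its two endpoints lie in $\partial H_j''$, so convexity of the boundary (Lemma~\ref{le:boundaries}) forces $v_0\in\partial H_j''$, i.e.\ $E_j\in L_{v_0,v}\cap L(G)$. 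If you prefer to repair your own route, the cleanest fix is: let $g'$ be the gate of $v_0$ in the gated set $H_i'$ (which contains both $v$ and $\umax$); since $g'\neq v_0$, the ladder set $L_{v_0,g'}$ is nonempty, and any $E_j\in L_{v_0,g'}$ is adjacent to $v_0$ and satisfies $\sigma_{v_0,g'}\subseteq\sigma_{v_0,x}$ for every $x\in H_i'$ (because $g'\in I(v_0,x)$), so $E_j$ separates $v_0$ from both $v$ and $\umax$ and yields the same contradiction. Either way, some argument of this kind must actually be written; as it stands your Step~1 is a statement of the difficulty rather than a resolution of it.
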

\begin{proof}
It suffices to show that $v_0 \in I(v,\umax)$. Let $P$ be a $(v,\umax)$-path consisting in the concatenation of a shortest $(v,v_0)$-path (denoted by $Q$) with a shortest $(v_0,\umax)$-path (denoted by $R$). It suffices to show that $P$ is made up of edges belonging to pairwise different $\Theta$-classes (Lemma~\ref{le:signature}). Assume by contradiction that there exists a $\Theta$-class $E_j \in \sigma_{v_0,v} \cap \sigma_{v_0,\umax}$ and that there is no other $\Theta$-class $E_k \in \sigma_{v_0,v} \cap \sigma_{v_0,\umax}$ such that:
\begin{itemize}
\item the edge of $E_k$ on path $Q$ is closer to $v_0$ than the edge of $E_j$ on $Q$,
\item the edge of $E_k$ on path $R$ is closer to $v_0$ than the edge of $E_j$ on $R$.
\end{itemize}
In brief, if we assume $\sigma_{v_0,v} \cap \sigma_{v_0,\umax}$ nonempty, we fix $E_j$ as a $\Theta$-class of this set which is minimal by distance towards the median vertex $v_0$. Such a minimal class necessarily exists.

By definition of $E_j$, there is no $\Theta$-class that appears twice in the path $P_j$, defined as the connected sub-path containing $v_0$ obtained after removing edges of $E_j$ in the path $G[P]$. Hence, $P_j$ is a shortest path. Moreover, its endpoints both belong to the same boundary $\partial H_j''$ of $E_j$. As $\partial H_j''$ is convex (Lemma~\ref{le:boundaries}) and $v_0 \in P_j$, then $v_0 \in \partial H_j''$. This yields a contradiction since $E_j$ is adjacent to $v_0$ and should be part of $L_{v_0,v} \cap L(G)$.
\end{proof}

\begin{definition}[Large sets]
We define recursively a finite sequence $\left(G_i\right)_{0\le i\le \ell}$ of induced subgraphs of $G$ called {\em large sets}. Let $G_0 = G$ and, for any integer $0\le i\le \ell - 1$,
$$G_{i+1} = G_i \setminus H_{i+1}'.$$
In other words, $G_{i+1}$ is incrementally obtained from $G_i$ by removing all vertices in the minority halfspace of $E_{i+1} \in L(G)$.
\label{def:large_sets}
\end{definition}

A first trivial remark is that the number of vertices of the graphs of the sequence is decreasing: $\vert V(G_{i+1}) \vert \le \vert V(G_i) \vert$. A second fact is that each $G_i$ is a gated subgraph of $G$.

\begin{lemma}
For each $0\le i\le \ell$, $G_i$ is a gated subgraph of $G$.
\label{le:gated_large}
\end{lemma}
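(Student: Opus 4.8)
The plan is to prove by induction on $i$ that each $G_i$ is gated, using the fact (Lemma~\ref{le:fibers_gated}) that fibers of gated sets are gated, together with Lemma~\ref{le:intersection} on intersections of gated sets. The base case $G_0 = G$ is trivial since $G$ is gated in itself. For the inductive step, I would assume $G_i$ is gated and show $G_{i+1} = G_i \setminus H_{i+1}'$ is gated as well.

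The key observation is that removing the minority halfspace $H_{i+1}'$ from $G_i$ should be expressible as an intersection of gated sets. First I would note that both halfspaces $H_{i+1}'$ and $H_{i+1}''$ of the $\Theta$-class $E_{i+1}$ are gated in $G$ (Lemma~\ref{le:halfspaces}). The natural candidate is $G_{i+1} = G_i \cap H_{i+1}''$: indeed, since $E_{i+1} \in L(G)$ means $E_{i+1}$ separates $v_0$ from $\umax$ with $\umax \in H_{i+1}'$, and $v_0 \in H_{i+1}''$ is the median vertex lying in every majority halfspace; moreover all the classes $E_1,\dots,E_\ell$ of $L(G)$ are pairwise orthogonal (Lemma~\ref{le:ladder_POF}), so removing $H_1',\dots,H_i'$ one by one does not interfere with the halfspace structure of $E_{i+1}$. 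Hence $V(G_i) \setminus H_{i+1}' = V(G_i) \cap H_{i+1}''$, and by Lemma~\ref{le:intersection} the intersection of the two gated sets $V(G_i)$ and $H_{i+1}''$ is gated in $G$. One subtlety to check is that a subset of $V(G_i)$ that is gated in $G$ is also gated as a subgraph of $G_i$ with the inherited metric; this follows because $G_i$ is itself gated (hence isometric) in $G$, so distances and intervals computed inside $G_i$ agree with those in $G$.

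I expect the main obstacle to be verifying cleanly that $V(G_i) \setminus H_{i+1}'$ really equals $V(G_i) \cap H_{i+1}''$, i.e.\ that no vertex of the already-removed minority halfspaces $H_1',\dots,H_i'$ coincidentally lies in $H_{i+1}'$ in a way that breaks the bookkeeping, and conversely that every vertex of $G_i$ lies in one of the two halfspaces of $E_{i+1}$. The latter is immediate since $\{H_{i+1}', H_{i+1}''\}$ partitions $V(G)$. For the former, the pairwise orthogonality of the classes in $L(G)$ is what I would lean on: orthogonal $\Theta$-classes cross, so the four ``quadrants'' they define are all nonempty and removing one minority halfspace of $E_j$ leaves the halfspace decomposition of $E_{i+1}$ intact on the remainder. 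If one prefers to avoid orthogonality entirely, an alternative is a direct argument: take $x, y \in V(G_{i+1})$ and a vertex $z \in I(x,y)$; since $G_i$ is convex, $z \in V(G_i)$, and since $H_{i+1}''$ is convex and $x,y \in H_{i+1}''$, we get $z \in H_{i+1}''$, hence $z \notin H_{i+1}'$ and $z \in V(G_{i+1})$, giving convexity, which equals gatedness on median graphs. This convexity-only route is arguably the cleanest and sidesteps the intersection lemma altogether, so I would present that as the primary argument and mention the intersection-of-gated-sets viewpoint as a remark.
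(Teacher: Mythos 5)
Your proposal is correct and follows essentially the same route as the paper: induction on $i$, the set identity $V(G_{i+1}) = V(G_i) \cap H_{i+1}''$ (which is immediate because the two halfspaces of $E_{i+1}$ partition $V(G)$, so no appeal to orthogonality of the ladder set is needed), the gatedness of $H_{i+1}''$ from Lemma~\ref{le:halfspaces}, and closure of gated sets under intersection (Lemma~\ref{le:intersection}). Your alternative convexity-only argument is a harmless rephrasing of the same idea, since convexity and gatedness coincide on median graphs.
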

\begin{proof}
We proceed by induction. Obviously, as the base case, $G_0 = G$ is trivially gated. 

We assume that $G_i$ is a gated subgraph of $G$: $V(G_i)$ is gated in $G$. Moreover, the halfspace $H_{i+1}''$ is also gated in $G$ (Lemma~\ref{le:halfspaces}). As we know from Definition~\ref{def:large_sets}, $V(G_{i+1}) = V(G_i) \setminus H_{i+1}' = V(G_i) \cap H_{i+1}''$. The intersection of two gated sets is gated (Lemma~\ref{le:intersection}). Hence, $V(G_{i+1})$ is gated and our induction step holds.
\end{proof}

In fact, we can rewrite, for any $1\le i\le \ell$, the definition of $G_i$ as the following one:
\begin{equation}
G_i = G[H_1'' \cap H_2'' \cap \ldots \cap H_i''].
\label{eq:large_sets}
\end{equation}

As all large sets $G_i$ are convex/gated subgraphs of $G$, they are median graphs. So, they admit $\Theta$-classes which can in fact be retrieved from the $\Theta$-classes of the original graph $G$.

\begin{lemma}[$\Theta$-classes of isometric subgraphs~\cite{BeDuHa22}]
Let $G$ be a median graph. If $H$ is an isometric\footnote{a subgraph which preserves the distances of the original graph} subgraph of $G$, then the $\Theta$-classes of $H$ are exactly the nonempty subsets among $E_i \cap E(H)$, for $E_i \in \mathcal{E}(G)$.
\label{le:isometric}
\end{lemma}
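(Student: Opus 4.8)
The plan is to show that the two partitions of $E(H)$ — into $\Theta$-classes of $H$ on one hand, and into the nonempty sets $E_i \cap E(H)$ with $E_i \in \mathcal{E}(G)$ on the other — refine each other. For the first refinement I would argue directly from Definition~\ref{def:classes}: every square of $H$ is a square of $G$ on the same four vertices, with the same two pairs of opposite edges, so the relation $\Theta_0$ of $H$ is exactly the restriction to $E(H)$ of the relation $\Theta_0$ of $G$. Passing to reflexive–transitive closures, any $\Theta$-chain that lives inside $H$ is also a $\Theta$-chain of $G$; hence each $\Theta$-class $F$ of $H$ satisfies $F \subseteq E_i \cap E(H)$, where $E_i$ is the unique $\Theta$-class of $G$ meeting $F$. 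This step is immediate and uses neither isometry nor the fact that $H$ is a median graph.

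For the reverse refinement I would use the halfspace description of $\Theta$-classes. Since $H$ is a convex (equivalently gated, as $G_i$ is by Lemma~\ref{le:gated_large}) subgraph of a median graph, it is itself a median graph, so Lemma~\ref{le:halfspaces} applies both to $H$ and to each of its $\Theta$-classes. Fix $E_i$ with $E_i \cap E(H) \neq \emptyset$, pick an edge $uv \in E_i \cap E(H)$ with $u \in H_i'$ and $v \in H_i''$, and let $F$ be the $\Theta$-class of $H$ containing $uv$. Lemma~\ref{le:halfspaces} applied inside $H$ gives that $H \setminus F$ has two components with vertex sets $A \ni u$ and $B \ni v$, that $A = \{z \in V(H) : d_H(z,u) < d_H(z,v)\}$, and that $F$ is precisely the set of edges of $H$ with one endpoint in $A$ and one in $B$. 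Because $H$ is isometric in $G$, we may replace $d_H$ by $d_G$ throughout this description, and Lemma~\ref{le:halfspaces} applied inside $G$ identifies $\{z : d_G(z,u) < d_G(z,v)\}$ with $H_i'$; hence $A = V(H) \cap H_i'$ and $B = V(H) \cap H_i''$. Every edge of $E_i \cap E(H)$ joins $H_i'$ to $H_i''$ (again Lemma~\ref{le:halfspaces}), hence joins $A$ to $B$, hence lies in $F$. So $E_i \cap E(H) \subseteq F$, and combined with the first refinement $F = E_i \cap E(H)$; letting $E_i$ and $F$ range over all admissible choices yields the stated identification. An equivalent route would be to embed $G$, and thus $H$, isometrically into the hypercube $Q_q$ with coordinate $i$ corresponding to $E_i$, and to use that the $\Theta$-classes of a partial cube are exactly the sets of edges pointing in a common coordinate direction.

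The point that requires care — and the only substantive one — is the implicit assumption that $H$ carries a median (not merely partial-cube) structure. An isometric subgraph of a median graph is always a partial cube but need not be median; for a non-median partial cube the square-generated relation of Definition~\ref{def:classes} can be strictly finer than the Djoković–Winkler relation, so the halfspace argument above does not run verbatim in full generality. In every application of this lemma, however, $H$ is one of the gated sets $G_i$, which are median, so Lemma~\ref{le:halfspaces} legitimately applies and the proof is complete. In short, everything reduces to the single fact that a $\Theta$-class of a median graph is the cut separating $\{d(\cdot,u) < d(\cdot,v)\}$ from $\{d(\cdot,v) < d(\cdot,u)\}$ for any of its edges $uv$, and that isometry transfers this cut faithfully from $G$ to $H$.
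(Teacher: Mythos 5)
The paper does not actually prove this lemma --- it is imported wholesale from~\cite{BeDuHa22} --- so there is no in-paper argument to compare against; your proposal must stand on its own, and it essentially does. The two-refinements structure is sound: the inclusion of each $\Theta$-class of $H$ into a single trace $E_i\cap E(H)$ is immediate from Definition~\ref{def:classes}, and your halfspace argument for the reverse inclusion (transporting the cut $\{d(\cdot,u)<d(\cdot,v)\}$ from $H$ to $G$ via isometry and invoking Lemma~\ref{le:halfspaces} on both sides) is correct \emph{whenever $H$ is itself a median graph}. Since every application in the paper is to the gated subgraphs $G_i$, which are convex and hence median, your proof covers everything that is actually used.

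Your closing caveat is not a mere scruple but a genuine defect of the lemma as literally stated, and it is worth making concrete. Take $G=Q_3$ and let $H$ be an isometric $6$-cycle in it (e.g.\ $000,100,110,111,011,001$). Then $H$ contains no square, so under Definition~\ref{def:classes} every edge of $H$ is its own $\Theta$-class, giving six singletons, whereas the nonempty traces $E_i\cap E(H)$ are three sets of two edges each. So for general isometric subgraphs the square-generated relation is strictly finer than the claimed partition, and the lemma fails verbatim. It is rescued by reading ``$\Theta$-class'' in the Djokovi\'c--Winkler sense ($uv\mathbin{\Theta}xy$ iff $d(u,x)+d(v,y)\neq d(u,y)+d(v,x)$), which coincides with the square-generated relation on median graphs but not on all partial cubes; with that definition the full statement follows in one line from $d_H=d_G$ on $V(H)$ together with transitivity of $\Theta$ on partial cubes. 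Either restrict the hypothesis to convex subgraphs (which is all the paper needs, and is what your proof establishes) or switch to the Djokovi\'c--Winkler formulation; as written, your argument plus this one clarification is complete.
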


Since gated subgraphs are also isometric (the converse is false), then the $\Theta$-classes of $G_i$ are inherited from the $\Theta$-classes of $G$. 

We define \textit{slices} as the subgraphs withdrawn from each transition between $G_i$ and $G_{i+1}$.

\begin{definition}[Slices]
We define a finite sequence $(S_i)_{0\le i\le \ell - 1}$ of graphs called {\em slices}:
$$S_i = G_i \setminus V(G_{i+1}).$$
The vertex set of each $G_i$ can be thus partitioned in 2 parts  $V(S_i)$ and $V(G_{i+1})$.
\label{def:corner}
\end{definition}

\begin{figure}[h]
\centering
\scalebox{0.7}{\begin{tikzpicture}

\coordinate (s01) at (1.0,4.6) {};
\coordinate (s02) at (3.2,3.2) {};
\coordinate (s03) at (8.3,3.2) {};
\coordinate (s04) at (8.3,6.3) {};
\coordinate (s05) at (0.2,6.3) {};
\coordinate (s11) at (4.0,3.0) {};
\coordinate (s12) at (4.0,-0.8) {};
\coordinate (s13) at (9.3,-0.8) {};
\coordinate (s14) at (9.3,1.2) {};
\coordinate (s15) at (8.3,1.2) {};
\coordinate (s16) at (8.3,3.0) {};
\coordinate (s17) at (4.0,6.3) {};
\coordinate (s21) at (3.8,-0.8) {};
\coordinate (s22) at (-2.0,-0.8) {};
\coordinate (s23) at (-2.0,6.3) {};
\coordinate (s24) at (-0.1,6.3) {};
\coordinate (s25) at (0.8,4.4) {};
\coordinate (s26) at (3.2,3.0) {};
\coordinate (s27) at (3.8,3.0) {};


\draw[color = white!50!blue, fill = white!90!blue] (s01) -- (s02) -- (s03) -- (s04) -- (s05) -- (s01);
\node[scale=1.4, color = blue] at (7.6,3.7) {$S_0$};

\draw[color = white!50!red, fill = white!90!red] (s11) -- (s12) -- (s13) -- (s14) -- (s15) -- (s16) -- (s11);
\node[scale=1.4, color = red] at (8.7,-0.4) {$S_1$};
\draw[color = red, dashed] (s12) -- (s17);

\draw[color = white!50!gray, fill = white!90!gray] (s21) -- (s22) -- (s23) -- (s24) -- (s25) -- (s26) -- (s27) -- (s21);
\node[scale=1.4, color = black!20!gray] at (-1.0,-0.4) {$V(G_{2})$};

\node[scale=2, rotate=-60] at (-0.2,2) {{\bf \vdots}};
\node[scale=2] at (3,0.2) {{\bf \vdots}};
\node[scale=2, rotate=90] at (5.9,2.5) {{\bf \vdots}};
\node[scale=2, rotate=-20] at (3.4,5.2) {{\bf \vdots}};
\node[scale=2, rotate=45] at (-0.3,5.2) {{\bf \vdots}};


\node[draw, circle, minimum height=0.2cm, minimum width=0.2cm, fill=black] (P11) at (1,1) {};
\node[draw, circle, minimum height=0.2cm, minimum width=0.2cm, fill=black] (P12) at (1,2.5) {};

\node[draw, circle, minimum height=0.2cm, minimum width=0.2cm, fill=black] (P21) at (3,1) {};
\node[draw, circle, minimum height=0.2cm, minimum width=0.2cm, fill=black] (P22) at (3,2.5) {};
\node[draw, circle, minimum height=0.2cm, minimum width=0.2cm, fill=black] (P23) at (3,4) {};

\node[draw, circle, minimum height=0.2cm, minimum width=0.2cm, fill=black] (P31) at (5,1) {};
\node[draw, circle, minimum height=0.2cm, minimum width=0.2cm, fill=black] (P32) at (5,2.5) {};
\node[draw, circle, minimum height=0.2cm, minimum width=0.2cm, fill=black] (P33) at (5,4) {};

\node[draw, circle, minimum height=0.2cm, minimum width=0.2cm, fill=black] (P41) at (1.3,3.8) {};
\node[draw, circle, minimum height=0.2cm, minimum width=0.2cm, fill=black] (P42) at (1.3,5.3) {};
\node[draw, circle, minimum height=0.2cm, minimum width=0.2cm, fill=black] (P43) at (-0.7,3.8) {};


\node[draw, circle, minimum height=0.2cm, minimum width=0.2cm, fill=black] (P52) at (6.8,5.9) {};

\coordinate (i1) at (5.8,4.4) {};
\coordinate (i2) at (6.2,5.3) {};


\node[scale=1.5] at (7.6,5.6) {$\umax$};
\node[scale=1.5] at (2.7,2.2) {$v_0$};
\node[scale=1.5,color = blue] at (0.9,4.5) {$E_1$};
\node[scale=1.5,color = red] at (4.0,1.4) {$E_2$};


\draw[line width = 1.4pt] (P11) -- (P12);
\draw[line width = 1.4pt] (P11) -- (P21);
\draw[line width = 1.4pt] (P12) -- (P22);
\draw[line width = 1.4pt] (P21) -- (P22);

\draw[line width = 1.8pt, color = red] (P21) -- (P31);
\draw[line width = 1.8pt, color = red] (P22) -- (P32);
\draw[line width = 1.4pt] (P31) -- (P32);

\draw[line width = 1.8pt, color = blue] (P22) -- (P23);
\draw[line width = 1.8pt, color = red] (P23) -- (P33);
\draw[line width = 1.8pt, color = blue] (P32) -- (P33);

\draw[line width = 1.4pt] (P22) -- (P41);
\draw[line width = 1.4pt] (P12) -- (P43);
\draw[line width = 1.4pt] (P23) -- (P42);
\draw[line width = 1.4pt] (P41) -- (P43);
\draw[line width = 1.8pt, color = blue] (P41) -- (P42);

\draw[line width = 1.4pt, dashed, in = -120] (P33) to (i1);
\draw[line width = 1.4pt, dashed, in = -130, out = 60] (i1) to (i2);
\draw[line width = 1.4pt, dashed, in = -120] (i2) to (P52);

\end{tikzpicture}}
\caption{Illustration of the slice decomposition of some median graph $G$ with $L(G) = \{E_1,E_2\}$. For example, $V(G_1) = V(S_1) \cup V(G_2)$.}
\label{fig:slices_large}
\end{figure}
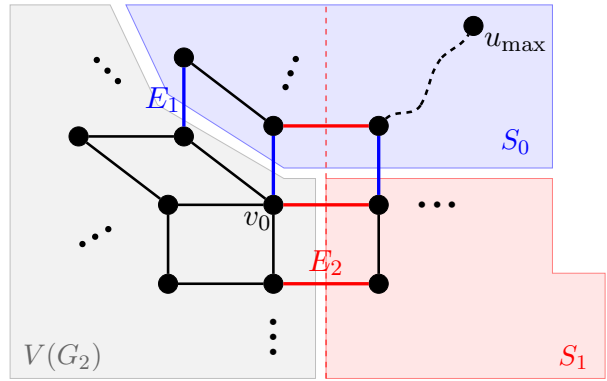

We refer to the expression \textit{slice decomposition} in order to refer to the whole collection of slices. Slices $S_i$ can be rewritten as the intersection of the minority halfspace $H_{i+1}'$ with the current vertex set $V(G_i)$: $S_i = G_i[H_{i+1}']$. Hence, by definition, slices are pairwise vertex-disjoint. In fact,
\begin{equation}
S_i = G\left[H_{i+1}' \setminus \left(H_1'\cup H_2'\cup \ldots H_i'\right)\right].
\label{eq:slices}
\end{equation}

Figure~\ref{fig:slices_large} gives an illustration of the slice decomposition on some example. It represents a part of the graph: only $\umax$ together with the vertices belonging to hypercubes which contain the median vertex $v_0$ are drawn. The wide ladder (ladder set $L_{v_0,\umax}$) contains two $\Theta$-classes $E_1$ and $E_2$: $\ell = 2$. In such case, there are two slices: $S_0$ is exactly the minority halfspace $H_1'$ and $S_1$ is $H_2'$ deprived of the vertices of $H_1'$: $S_1 = G[H_2' \setminus H_1']$. Sets $S_0$, $S_1$, and $V(G_{2})$ are disjoint and cover the graph.

\begin{lemma}
For each $0\le i\le \ell-1$, $S_i$ is a gated subgraph of $G$
\label{le:gated_slices}
\end{lemma}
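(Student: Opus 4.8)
The plan is to show $S_i$ is gated by expressing its vertex set as an intersection of gated sets and invoking Lemma~\ref{le:intersection}. Recall from Definition~\ref{def:corner} that $S_i = G_i \setminus V(G_{i+1})$, and by the reformulation preceding this lemma, $V(S_i) = V(G_i) \cap H_{i+1}'$; equivalently, using Equation~\eqref{eq:slices}, $V(S_i) = H_{i+1}' \setminus (H_1' \cup \cdots \cup H_i') = H_{i+1}' \cap H_1'' \cap \cdots \cap H_i''$ (since the complement of $H_j'$ in $V(G)$ is $H_j''$). I would therefore write $V(S_i)$ as the intersection of the $i+1$ sets $H_1'', H_2'', \ldots, H_i'', H_{i+1}'$.

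The key steps, in order, are: first, recall that each halfspace $H_j''$ and $H_{j+1}'$ is gated by Lemma~\ref{le:halfspaces}; second, observe (or recall from Lemma~\ref{le:gated_large}) that $V(G_i) = H_1'' \cap \cdots \cap H_i''$ is gated, being a finite intersection of gated sets via repeated application of Lemma~\ref{le:intersection}; third, note $V(S_i) = V(G_i) \cap H_{i+1}'$, which is again an intersection of two gated sets, hence gated by Lemma~\ref{le:intersection}. Since gated subgraphs of median graphs coincide with convex (isometric) subgraphs, $S_i = G[V(S_i)]$ is indeed a gated subgraph of $G$, as claimed. One could alternatively give a direct one-line induction mirroring the proof of Lemma~\ref{le:gated_large}: $V(S_i) = V(G_i) \cap H_{i+1}'$ with $V(G_i)$ gated (Lemma~\ref{le:gated_large}) and $H_{i+1}'$ gated (Lemma~\ref{le:halfspaces}), then apply Lemma~\ref{le:intersection}.

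There is essentially no obstacle here: the entire content is bookkeeping about which sets are gated, and every ingredient (gatedness of halfspaces, gatedness of large sets $G_i$, closure of gatedness under intersection) is already available in the excerpt. The only mild care needed is the set-theoretic identity $V(S_i) = V(G_i) \cap H_{i+1}'$ — making sure that removing $V(G_{i+1})$ from $V(G_i)$ really leaves exactly the part of $G_i$ lying in the minority halfspace $H_{i+1}'$ — but this is immediate from Definition~\ref{def:large_sets} together with $V(G) = H_{i+1}' \sqcup H_{i+1}''$. So the proof is short, and I would present it as a two-line argument invoking Lemmas~\ref{le:halfspaces},~\ref{le:intersection}, and~\ref{le:gated_large}.
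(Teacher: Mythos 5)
Your proof is correct and follows the same route as the paper: write $V(S_i) = V(G_i) \cap H_{i+1}'$, invoke Lemma~\ref{le:gated_large} for the gatedness of $V(G_i)$ and Lemma~\ref{le:halfspaces} for that of $H_{i+1}'$, and conclude by Lemma~\ref{le:intersection}. The extra unpacking of $V(G_i)$ as an intersection of majority halfspaces is harmless but not needed.
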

\begin{proof}
Graph $G_i$ is gated subgraph of $G$ (Lemma~\ref{le:gated_large}). Consequently, $S_i$ is gated since it is the intersection between $H_{i+1}'$ and $V(G_i)$ (Lemma~\ref{le:intersection}).
\end{proof}

In terms of cardinality, slices are thus small, $\vert V(S_i) \vert \le \vert H_{i+1}' \vert \le \frac{n}{2\log(n)}$, while even the last large set $G_{\ell}$ contains at least half of the vertices.

\begin{lemma}[Cardinality of large sets]
For any $0\le i\le \ell$, $\vert V(G_i) \vert \ge n(1-\frac{i}{2\log(n)})$.
\label{le:card_slices}
\end{lemma}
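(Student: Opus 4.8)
The statement to prove is Lemma~\ref{le:card_slices}: for any $0 \le i \le \ell$, we have $\vert V(G_i) \vert \ge n(1 - \frac{i}{2\log n})$. The natural approach is induction on $i$, tracking how much we lose at each step of the large-set recursion $G_{i+1} = G_i \setminus H_{i+1}'$.

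\textbf{Base case.} For $i = 0$, we have $G_0 = G$ so $\vert V(G_0) \vert = n = n(1 - 0)$, which trivially satisfies the bound.

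\textbf{Inductive step.} Assume $\vert V(G_i) \vert \ge n(1 - \frac{i}{2\log n})$. By Definition~\ref{def:large_sets}, $V(G_{i+1}) = V(G_i) \setminus H_{i+1}'$, so
\[
\vert V(G_{i+1}) \vert = \vert V(G_i) \vert - \vert V(G_i) \cap H_{i+1}' \vert \ge \vert V(G_i) \vert - \vert H_{i+1}' \vert.
\]
Since $E_{i+1} \in L(G)$ is a $\Theta$-class of $G \in \ulog$, it is $(2\log)$-unbalanced, and $H_{i+1}'$ is its minority halfspace; hence by Definition~\ref{def:balanced} (applied to the ambient graph $G$, whose $\Theta$-classes are all unbalanced), $\vert H_{i+1}' \vert \le \frac{n}{2\log n}$. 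Combining this with the induction hypothesis gives
\[
\vert V(G_{i+1}) \vert \ge n\left(1 - \frac{i}{2\log n}\right) - \frac{n}{2\log n} = n\left(1 - \frac{i+1}{2\log n}\right),
\]
which closes the induction and proves the claim for all $0 \le i \le \ell$.

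\textbf{Main subtlety.} The only point requiring a moment's care is that the bound $\vert H_{i+1}' \vert \le \frac{n}{2\log n}$ refers to the cardinality of the minority halfspace of $E_{i+1}$ \emph{in the full graph $G$}, not in the intermediate graph $G_i$; but this is exactly what unbalancedness of $G$ gives us, and since $V(G_i) \cap H_{i+1}' \subseteq H_{i+1}'$ the estimate $\vert V(G_i) \cap H_{i+1}' \vert \le \vert H_{i+1}' \vert$ is all we need. Everything else is a straightforward telescoping of the per-step loss of at most $\frac{n}{2\log n}$ vertices over $i$ steps. (Note also that this requires $n \ge 3$ so that $\log n \ge 1$ and the expression is meaningful, which is assumed at the start of Section~\ref{subsec:slices}.)
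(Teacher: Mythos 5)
Your proof is correct and follows essentially the same route as the paper: induction on $i$, with the inductive step bounding the loss at each stage by $\vert H_{i+1}'\vert \le \frac{n}{2\log n}$, which holds because $G\in\ulog$ makes every $\Theta$-class unbalanced and $H_{i+1}'$ is the minority halfspace. The remark that the relevant cardinality bound is taken in the ambient graph $G$ (and that $\vert V(G_i)\cap H_{i+1}'\vert\le\vert H_{i+1}'\vert$ suffices) is a correct and slightly more careful articulation of what the paper states implicitly.
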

\begin{proof}
By induction: $i=0$ trivially holds. The large set $G_{i+1}$ corresponds to $G_i$ after withdrawing the slice $S_i$, which contains at most $\frac{n}{2\log n}$ vertices. Therefore by induction, $$\vert V(G_{i+1}) \vert \ge \vert V(G_i)\vert - \frac{n}{2\log n} = n\left(1-\frac{i}{2\log(n)}\right) - \frac{n}{2\log n} = n\left(1-\frac{i+1}{2\log(n)}\right).$$
Which yields the announced inequality for all integers $0\le i\le \ell-1$.
\end{proof}

Indeed, as $\ell \le d\le \log n$, we have $\vert V(G_{\ell})\vert \ge  n(1-\frac{\ell}{2\log(n)}) \ge \frac{n}{2}$. From Equation~\eqref{eq:large_sets}, set $V(G_{\ell})$ contains exactly the vertices which does not belong to any $H_1',\ldots H_{\ell}'$, in other words whose ladder set has no intersection with $L(G)$. As stated in Lemma~\ref{le:ecc_large}, the weighted eccentricities of all vertices $v$ of $V(G_{\ell})$ can be directly deduced. On the other hand, any vertex $v$ with a ladder set $L_{v_0,v}$ intersecting $L(G)$ belongs to one of the slices $S_0,\ldots,S_{\ell-1}$ (Equation~\eqref{eq:slices}). The input graph, by definition of slices and large sets, satisfies $V(G) = V(G_{\ell}) \cup V(S_0) \cup V(S_1) \cup \ldots \cup V(S_{\ell-1})$. Furthermore, the large sets generally satisfy, for any $0\le i\le \ell-1$: $$V(G_i) = V(G_{\ell}) \cup V(S_i) \cup V(S_{i+1}) \cup \ldots \cup V(S_{\ell-1}).$$

\subsection{Peeling the slices} \label{subsec:peeling}

We show that, assuming we know all the weighted eccentricities on slices, which are small-sized induced subgraphs of the input graph $(G,\omega) \in \ulog$, we can deduce all weighted eccentricities of $G$. We begin with a description of how to modify weights on slices so that our recursive calls will enable us to retrieve all eccentricities of $(G,\omega)$.

\begin{definition}
Let $(G,\omega) \in \ulog$ with $\umax \neq v_0$ and an integer $0\le i\le \ell-1$. The weight function $\omega_i^*$ on slice $S_i$ is defined as:
\begin{itemize}
\item for any $x \in \partial H_{i+1}' \cap V(S_i)$, $\omega_i^*(x)$ is equal to 
the maximum distance $d_{\omega}(x,z)$ where $z$ belongs to the fiber $F_{H_{i+1}'}(x)$ of $G_i$.\footnote{Here, we abuse notation: $H_{i+1}'$ refers to the intersection of $V(G_i)$ with $H_{i+1}'$, which is a halfspace of the $\Theta$-class $E_{i+1} \cap E(G_i)$ (Lemma~\ref{le:isometric}).}
\item for any $y \in V(S_i) \setminus \partial H_{i+1}'$, $\omega_i^*(y) = \omega(y)$.
\end{itemize}
\label{def:weights_slices}
\end{definition}

We state now a theorem which introduces our recursive process to compute the weighted eccentricities of $(G,\omega) \in \ulog$. As we already observed it, all weighted eccentricities of the large set $G_{\ell}$ are already known. Therefore, we focus on computing the eccentricities of the vertices which belong to at least one slice. 

Let us now define a labeling function $\mathcal{B}_i$, for any $0\le i\le \ell-1$ on the vertices of slices $S_i \cup S_{i+1} \cup \ldots \cup S_{\ell-1}$. The label of $\mathcal{B}_i$ for some vertex $u$ is denoted by $\mathcal{B}_i(u)$. 

\begin{definition}[Labeling $\mathcal{B}_i$] For any $0\le i\le \ell-1$ and any vertex $u \in V(S_i) \cup V(S_{i+1}) \cup \ldots \cup V(S_{\ell-1})$, label $\mathcal{B}_i(u)$ is equal to the weighted eccentricity of $u$ on graph $(G_i,\omega)$:
$$\mathcal{B}_i(u) = \wecc{x}{(G_i,\omega)}.$$
\label{def:labeling_beta}
\end{definition}

Observe that labeling $\mathcal{B}_0$ would allow us to determine all eccentricities of $(G,\omega)$: for vertices in the slices, the labels give their eccentricity while for vertices in $G_{\ell}$, we can use Lemma~\ref{le:ecc_large}. 

Our proposition is the following. Assume that we already computed some labeling $\mathcal{B}_{i+1}$ and our intention is to obtain $\mathcal{B}_i$. Our idea to achieve this recursive step consists in launching two recursive calls, in order to compute the weighted eccentricities on graph $S_i$ but with two different weight functions $\omega$ and $\omega_i^*$. We prove that given labeling $\mathcal{B}_{i+1}$ and the results of these calls, we can retrieve labeling $\mathcal{B}_{i}$.

\begin{theorem}
Let $0\le i\le \ell-1$. Assume that the following values are known:
\begin{itemize}
\item all weighted eccentricities of $(S_i,\omega)$,
\item all weighted eccentricities of $(S_i,\omega_i^*)$,
\item all labels of $\mathcal{B}_{i+1}$.
\end{itemize}
One can compute in linear time $O(\vert E(G_i)\vert)$ all labels of $\mathcal{B}_{i}$.
\label{th:unbalanced_recursive}
\end{theorem}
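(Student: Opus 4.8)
The goal is to compute, for every vertex $u$ lying in one of the slices $S_i, S_{i+1}, \ldots, S_{\ell-1}$, its weighted eccentricity $\wecc{u}{(G_i,\omega)}$, given the analogous data for $G_{i+1}$ (the labels $\mathcal{B}_{i+1}$) and the weighted eccentricities of $S_i$ under the two weight functions $\omega$ and $\omega_i^*$. The natural case split is on whether $u \in V(S_i)$ or $u \in V(G_{i+1})$. In both cases the starting point is that $V(G_i) = V(S_i) \sqcup V(G_{i+1})$, that $S_i = G_i[H_{i+1}']$ and $G_{i+1} = G_i[H_{i+1}'']$ are both gated subgraphs of $G_i$ (Lemmas~\ref{le:gated_large} and~\ref{le:gated_slices}), and that, being gated, a farthest vertex in the ``other side'' is always reached through the gate. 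So for $u \in V(S_i)$ the farthest vertex (weighted) is either inside $S_i$, giving the term $\wecc{u}{(S_i,\omega)}$, or inside $G_{i+1}$, in which case a shortest path to it factors through the $H_{i+1}''$-gate $g(u)$ of $u$; symmetrically for $u \in V(G_{i+1})$, the farthest vertex is either inside $G_{i+1}$, which is exactly $\mathcal{B}_{i+1}(u)$, or inside $S_i$, reached through the $H_{i+1}'$-gate $g(u) \in \partial H_{i+1}' \cap V(S_i)$.

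The key point is to express each ``cross'' contribution in terms of data we are actually handed. First I would handle $u \in V(G_{i+1})$: the best cross distance is $\max_{x \in V(S_i)} \big(d(u,x) + \omega(x)\big)$, and since $G_{i+1}$ is gated every such $x$ is reached through $g(u) \in \partial H_{i+1}''$; writing $x' = g_{H_{i+1}'}(x)$ for the $E_{i+1}$-partner in $\partial H_{i+1}'$, a shortest $(u,x)$-path goes $u \leadsto g(u) \to g(u)' \leadsto x$ with $g(u)'$ the $E_{i+1}$-partner of $g(u)$ inside $\partial H_{i+1}'$. Actually the cleaner route: $d(u,x) = d(u,g(u)) + 1 + d(g(u)', x)$ is not quite right because $x$ need not be gated by $g(u)'$ inside $S_i$; instead use that $S_i$ is gated in $G_i$, so from $u$'s viewpoint the gate of $u$ in $S_i$ is a single vertex $h(u) \in V(S_i)$, and $d(u,x) = d(u,h(u)) + d(h(u),x)$ for all $x \in V(S_i)$. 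Then $\max_{x\in V(S_i)}(d(u,x)+\omega(x)) = d(u,h(u)) + \wecc{h(u)}{(S_i,\omega)}$, which uses exactly the first input. So $\mathcal{B}_i(u) = \max\{\mathcal{B}_{i+1}(u),\, d(u,h(u)) + \wecc{h(u)}{(S_i,\omega)}\}$, and the gates $h(u)$ together with the distances $d(u,h(u))$ are computed once, in linear time, by the multi-source BFS of Lemma~\ref{le:compute_gates} applied to the gated set $V(S_i)$ inside $G_i$.

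Next, the harder direction, $u \in V(S_i)$: the cross contribution is $\max_{z \in V(G_{i+1})}(d(u,z) + \omega(z))$, and since $G_{i+1}$ is gated this equals $d(u,g(u)) + \wecc{g(u)}{(G_{i+1},\omega)} = d(u,g(u)) + \mathcal{B}_{i+1}(g(u))$, where $g(u) \in \partial H_{i+1}''$ is the $H_{i+1}''$-gate of $u$ --- again furnished by the same BFS, this time with source set $V(G_{i+1})$. But one must be careful: $V(G_{i+1}) \subseteq V(S_i) \cup V(S_{i+1}) \cup \cdots$? No --- $V(G_{i+1}) = V(G_\ell) \cup V(S_{i+1}) \cup \cdots \cup V(S_{\ell-1})$, which is \emph{not} contained in the domain of $\mathcal{B}_{i+1}$ (that domain is only the slices $S_{i+1},\ldots,S_{\ell-1}$, not $G_\ell$). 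This is where the weight function $\omega_i^*$ and the second input come in: the role of $\omega_i^*$ on $\partial H_{i+1}' \cap V(S_i)$ is precisely to pre-absorb, via Definition~\ref{def:weights_slices}, the maximum weighted distance $d_\omega(x,z)$ from a boundary vertex $x$ into its fiber $F_{H_{i+1}'}(x)$ in $G_i$ --- and that fiber is exactly the part of $V(G_{i+1})$ hanging off $x$ across $E_{i+1}$. So for $u \in V(S_i)$ I claim $\mathcal{B}_i(u) = \max\{\wecc{u}{(S_i,\omega)},\ \wecc{u}{(S_i,\omega_i^*)}\}$: the first term covers targets inside $S_i$ and the second covers targets $z \in V(G_{i+1})$ because such a $z$ lies in the fiber $F_{H_{i+1}'}(x)$ of its boundary gate $x = g_{\partial H_{i+1}'}(z) \in \partial H_{i+1}' \cap V(S_i)$, a shortest $(u,z)$-path passes through $x$ (gatedness of the boundary, Lemma~\ref{le:boundaries}, and of the fiber, Lemma~\ref{le:fibers_gated}), hence $d_\omega(u,z) = d(u,x) + d_\omega(x,z) \le d(u,x) + \omega_i^*(x) = d_{\omega_i^*}(u,x) \le \wecc{u}{(S_i,\omega_i^*)}$, and equality is attained by choosing $z$ realizing the fiber-maximum defining $\omega_i^*(x)$.

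\textbf{Main obstacle.} The genuinely delicate step is the second direction: proving that taking weighted eccentricities \emph{within} $S_i$ under $\omega_i^*$ faithfully reproduces weighted eccentricities toward \emph{all} of $V(G_{i+1})$, i.e. that every $z \in V(G_{i+1})$ is ``seen'' through exactly one boundary vertex of $S_i$ and that $d(u,z) = d(u,x) + d(x,z)$ for that vertex. This rests on combining three facts --- $E_{i+1}\cap E(G_i)$ is a $\Theta$-class of the median graph $G_i$ (Lemma~\ref{le:isometric}), its boundary $\partial H_{i+1}'$ (inside $G_i$) is convex/gated (Lemma~\ref{le:boundaries}), and fibers of gated sets are gated (Lemma~\ref{le:fibers_gated}) --- together with the observation that for $u \in H_{i+1}'$ the $H_{i+1}'$-gate of any $z \in H_{i+1}''$ is its $\partial H_{i+1}'$-gate and lies on every shortest $(u,z)$-path. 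One must also double-check the ``$\le$'' and ``$\ge$'' both hold (no over- or under-counting) and that the fibers partition $V(G_{i+1})$ so no target is missed. The running-time claim is then routine: all gates and gate-distances come from two multi-source BFS runs (Lemma~\ref{le:compute_gates}), the weights $\omega_i^*$ are read off from fiber-maxima computed in the same pass, and the final per-vertex maximum is $O(1)$ each, for a total of $O(|E(G_i)|)$.
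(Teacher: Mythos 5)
Your proposal is correct and follows essentially the same route as the paper's proof: the same split into vertices of $V(S_i)$ versus vertices of the later slices, the same multi-source BFS of Lemma~\ref{le:compute_gates} to obtain $S_i$-gates and gate distances in $G_i$, and the same two final formulas, $\mathcal{B}_i(u) = \max\{\wecc{u}{(S_i,\omega)}, \wecc{u}{(S_i,\omega_i^*)}\}$ for $u \in V(S_i)$ and $\mathcal{B}_i(u) = \max\{\mathcal{B}_{i+1}(u),\, d(u,g_{S_i}(u)) + \wecc{g_{S_i}(u)}{(S_i,\omega)}\}$ otherwise. Your identification of the role of $\omega_i^*$ as pre-absorbing the fiber contributions across $E_{i+1}$ is exactly the argument the paper uses.
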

\begin{proof}
We apply the BFS traversal evoked in Lemma~\ref{le:compute_gates} for graph $G_i$ with the gated set $S_i$. We recall that the vertex set of $G_i$ can be partitioned into $V(G_{i+1})$ and $V(S_i)$. We thus obtain in time $O(\vert E(G_i)\vert)$ all the unweighted distances $d(g_{S_i}(v),v)$, for each $v \in V(G_{i+1})$, and as a consequence, also all weighted distances $d_{\omega}(g_{S_i}(v),v) = d(g_{S_i}(v),v) + \omega(v)$. Our objective is to compute the labels $\mathcal{B}_i(x)$. We distinguish two cases: either $x$ belongs to the current slice $S_i$, or in a former slice $S_{i+1} \cup \ldots \cup S_{\ell-1}$.

\textit{Case 1: $x \in V(S_i)$}. We determine the labels $\mathcal{B}_i(x)$ for vertices $x \in S_i$ which are from Definition~\ref{def:labeling_beta} their eccentricity in $(G_i,\omega)$. The weighted distance between some $x \in S_i$ and $v \in V(G_{i+1})$ is:
\begin{equation}d_{\omega}(x,v) = d(x,g_{S_i}(v)) + d_{\omega}(g_{S_i}(v),v) = d(x,g_{S_i}(v)) + d(g_{S_i}(v),v) + \omega(v).
\label{eq:one_step_unbalanced}
\end{equation}
Let us recall that, from Definition~\ref{def:weights_slices}, when $x$ belongs to the boundary of $S_i$, $\omega_i^*(x)$ is equal to the maximum $d_{\omega}(x,v)$ such that $v \in F_{S_i}(x)$ - or, said differently, $x = g_{S_i}(v)$. Otherwise, $\omega_i^*(x) = \omega(x)$.

We consider two cases for vertex $x \in V(S_i)$. If $x \in \partial H_{i+1}'$, its weighted eccentricity is achieved:
\begin{itemize}
    \item either with a vertex $z$ of $S_i$: $d_{\omega}(x,z) = \wecc{x}{(S_i,\omega)}$, known by the assumption on $(S_i,\omega)$,
    \item or with a vertex $z$ of $F_{S_i}(x)$: $d_{\omega}(x,z) = \omega_i^*(x)$, computed with the BFS traversal,
    \item or with a vertex $z$ of $V(G_{i+1})$ not in the fiber $F_{S_i}(x)$: in this case, it will be given by $d_{\omega}(x,z) = d(x,g_{S_i}(z)) + \omega_i^*(g_{S_i}(z))$.
\end{itemize}
If the weighted eccentricity of $x$ is attained for some vertex $z \in V(G_{i+1})$ (which corresponds to the two latter bullets), its value is equal to $\wecc{x}{(S_i,\omega_i^*)}$. Written briefly, 
\begin{equation}
    \wecc{x}{(G_i,\omega)} = \max \left\{\wecc{x}{\left(S_i,\omega\right)}, ~ \wecc{x}{(S_i,\omega_i^*)}\right\}.
    \label{eq:wecc_boundary}
\end{equation} 
From the statement assumptions, all the values needed to compute $\wecc{x}{(G_i,\omega)}$ in Equation~\eqref{eq:wecc_boundary} are supposed to be known.

If $x \in V(S_i) \setminus \partial H_{i+1}'$, the reasoning is relatively similar: either the weighted eccentricity of $x$ is achieved either with some $v \in S_i$, or with $v \in V(G_{i+1})$. In the latter case, from Equation~\eqref{eq:one_step_unbalanced}, the weighted eccentricity of $x$ is $d(x,g_{S_i}(v)) + \omega_i^*(g_{S_i}(v)) = d_{\omega_i^*}(x,g_{S_i}(v))$. As the weight difference between $\omega_i^*$ and $\omega$ lies only on the boundary, we have: $\wecc{x}{(G_i,\omega)} = \max \left\{\wecc{x}{\left(S_i,\omega\right)},~ \wecc{x}{(S_i,\omega_i^*)}\right\}$. Figure~\ref{fig:peeling} illustrates these two possibilities on the example already introduced in Figure~\ref{fig:slices_large}: either the maximum weighted distance from $x$ goes towards $S_i$ or towards $V(G_{i+1})$.  We completed the first part of the proof: the weighted eccentricities of all $x \in S_i$ in graph $(G_i,\omega)$ can be directly deduced from the theorem assumptions: $\mathcal{B}_i(x) = \wecc{x}{(G_i,\omega)}$.

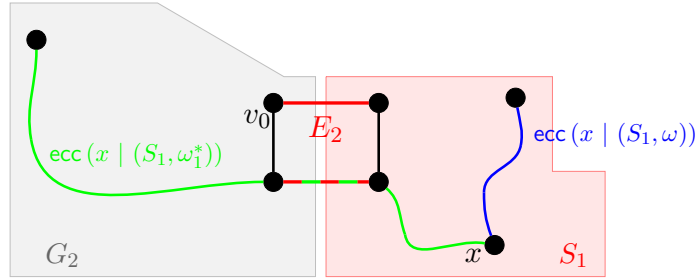
\begin{figure}[h]
\centering
\scalebox{0.7}{\begin{tikzpicture}

\coordinate (s11) at (4.0,3.0) {};
\coordinate (s12) at (4.0,-0.8) {};
\coordinate (s13) at (9.3,-0.8) {};
\coordinate (s14) at (9.3,1.2) {};
\coordinate (s15) at (8.3,1.2) {};
\coordinate (s16) at (8.3,3.0) {};
\coordinate (s17) at (4.0,6.3) {};
\coordinate (s21) at (3.8,-0.8) {};
\coordinate (s22) at (-2.0,-0.8) {};
\coordinate (s23) at (-2.0,4.4) {};
\coordinate (s25) at (0.8,4.4) {};
\coordinate (s26) at (3.2,3.0) {};
\coordinate (s27) at (3.8,3.0) {};
\coordinate (i1) at (7.0,0.8) {};
\coordinate (i2) at (7.7,1.7) {};
\coordinate (i3) at (5.5,0) {};
\coordinate (i4) at (-1.6,1.8) {};


\draw[color = white!50!red, fill = white!90!red] (s11) -- (s12) -- (s13) -- (s14) -- (s15) -- (s16) -- (s11);
\node[scale=1.4, color = red] at (8.7,-0.4) {$S_1$};;

\draw[color = white!50!gray, fill = white!90!gray] (s21) -- (s22) -- (s23) -- (s25) -- (s26) -- (s27) -- (s21);
\node[scale=1.4, color = black!20!gray] at (-1.0,-0.4) {$G_{2}$};


\node[draw, circle, minimum height=0.2cm, minimum width=0.2cm, fill=black] (P21) at (3,1) {};
\node[draw, circle, minimum height=0.2cm, minimum width=0.2cm, fill=black] (P22) at (3,2.5) {};

\node[draw, circle, minimum height=0.2cm, minimum width=0.2cm, fill=black] (P31) at (5,1) {};
\node[draw, circle, minimum height=0.2cm, minimum width=0.2cm, fill=black] (P32) at (5,2.5) {};

\node[draw, circle, minimum height=0.2cm, minimum width=0.2cm, fill=black] (P4) at (7.2,-0.2) {};
\node[draw, circle, minimum height=0.2cm, minimum width=0.2cm, fill=black] (P5) at (7.6,2.6) {};
\node[draw, circle, minimum height=0.2cm, minimum width=0.2cm, fill=black] (P6) at (-1.5,3.7) {};


\node[scale=1.5] at (2.7,2.2) {$v_0$};
\node[scale=1.5] at (6.8,-0.4) {$x$};
\node[scale=1.5,color = red] at (4.0,2.0) {$E_2$};
\node[scale = 1.2, color = blue] at (9.5,1.9) {$\wecc{x}{(S_1,\omega)}$};
\node[scale = 1.2, color = green] at (0.4,1.5) {$\wecc{x}{(S_1,\omega_1^*)}$};


\draw[line width = 1.8pt, bicolor={red}{green}]
  (P21) to (P31);

\draw[line width = 1.4pt] (P21) -- (P22);
\draw[line width = 1.8pt, color = red] (P22) -- (P32);
\draw[line width = 1.4pt] (P31) -- (P32);

\draw[line width = 1.4pt, color = blue, out = 110, in = -90] (P4) to (i1);
\draw[line width = 1.4pt, color = blue, out = 90, in = -100] (i1) to (i2);
\draw[line width = 1.4pt, color = blue, out = 80, in = -80] (i2) to (P5);

\draw[line width = 1.4pt, color = green, out = 170, in = -80] (P4) to (i3);
\draw[line width = 1.4pt, color = green, out = 110, in = -40] (i3) to (P31);
\draw[line width = 1.4pt, color = green, out = -180, in = -80] (P21) to (i4);
\draw[line width = 1.4pt, color = green, out = 100, in = -90] (i4) to (P6);

\end{tikzpicture}}
\caption{Retrieving the eccentricities of $x \in V(S_i)$ in graph $G_i$: example with $\ell = 2$ and $i = 1$. Either the largest weighted distance from $x$ is with a vertex of $S_i$, or with one in the other side $V(G_{i+1})$.}
\label{fig:peeling}
\end{figure}

\textit{Case 2: $x \in V(S_{i+1}) \cup \ldots \cup V(S_{\ell-1})$}. We determine the labels $\mathcal{B}_i(x)$ for vertices $x \in V(S_{i+1}) \cup \ldots \cup V(S_{\ell-1})$. Since we know all labels of $\mathcal{B}_{i+1}$, we have the eccentricity $\mathcal{B}_{i+1}(x) = \wecc{x}{(G_{i+1},\omega)}$ of $x$ in graph $(G_{i+1},\omega)$. By definition, $\mathcal{B}_i(x) \ge \mathcal{B}_{i+1}(x)$ and the only chance to have $\mathcal{B}_i(x) > \mathcal{B}_{i+1}(x)$ is that the weighted eccentricity of $x$ in $G_i$ is the distance from $x$ to some vertex $v \in V(S_i)$, as $V(G_i)\setminus V(G_{i+1}) = V(S_i)$. So, we compute the maximum distance $d_{\omega}(x,v)$ for $v \in V(S_i)$ and verify whether it is larger than $\mathcal{B}_{i+1}(x)$ or not. Thanks to the BFS traversal, we know $g_{S_i}(x)$ and then the maximum distance $D(x) = \max \{d_{\omega}(x,v) : v\in V(S_i)\}$ is given:
$$D(x) = d(x,g_{S_i}(x)) + \wecc{g_{S_i}(x)}{(S_i,\omega)}.$$
In summary, all these values can be computed since not only $d(x,g_{S_i}(x))$ but also all eccentricities of $(S_i,\omega)$ are known: 
$$\mathcal{B}_i(x) = \max \left\{\mathcal{B}_{i+1}(x),~d(x,g_{S_i}(x)) + \wecc{g_{S_i}(x)}{(S_i,\omega)} \right\}$$.
\end{proof}

The latter theorem stands as the recursive step to prove that the weighted eccentricities of $(G,\omega) \in \ulog$ can be computed in linear time, assuming that we already have the weighted eccentricities of all slices. 

\begin{corollary}
    Let $(G,\omega) \in \ulog$ with $\umax \neq v_0$ given with its slice decomposition. Assume that, for each integer $0\le i\le \ell-1$, all weighted eccentricities of $(S_i,\omega)$ but also all weighted eccentricities of $(S_i,\omega_i^*)$ are known. Then, one can compute all weighted eccentricities of $(G,\omega)$ in time $O(n\log^2 n)$.
    \label{co:wecc_slice}
\end{corollary}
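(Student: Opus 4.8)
The plan is to build the labelings $\mathcal{B}_{\ell-1},\mathcal{B}_{\ell-2},\ldots,\mathcal{B}_0$ in decreasing order of the index by repeated application of Theorem~\ref{th:unbalanced_recursive}, and then to read off all weighted eccentricities of $(G,\omega)$ from $\mathcal{B}_0$ together with Lemma~\ref{le:ecc_large}.

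First I would handle the base case $i=\ell-1$. Here the set $V(S_{i+1})\cup\ldots\cup V(S_{\ell-1})$ is empty, so in the proof of Theorem~\ref{th:unbalanced_recursive} only Case~1 occurs, and the hypotheses reduce to knowing the weighted eccentricities of $(S_{\ell-1},\omega)$ and of $(S_{\ell-1},\omega_{\ell-1}^*)$ — both available by assumption of the corollary — with no information needed from $\mathcal{B}_\ell$. Applying the theorem yields all labels of $\mathcal{B}_{\ell-1}$ in time $O(\vert E(G_{\ell-1})\vert)$. Then, for $i$ running from $\ell-2$ down to $0$, the labeling $\mathcal{B}_{i+1}$ is already in hand from the previous step, and the weighted eccentricities of $(S_i,\omega)$ and $(S_i,\omega_i^*)$ are known by hypothesis; hence Theorem~\ref{th:unbalanced_recursive} produces $\mathcal{B}_i$ in time $O(\vert E(G_i)\vert)$. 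Since each $G_i$ is a subgraph of $G$, we have $\vert E(G_i)\vert \le \vert E(G)\vert = O(n\log n)$, and since $\ell \le d \le \log_2 n$ there are at most $\log_2 n$ iterations, so this whole phase costs $O(n\log^2 n)$.

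It remains to assemble the answer. The sets $V(S_0),\ldots,V(S_{\ell-1}),V(G_\ell)$ partition $V(G)$. For every $u \in V(S_0)\cup\ldots\cup V(S_{\ell-1})$, Definition~\ref{def:labeling_beta} gives $\mathcal{B}_0(u) = \wecc{u}{(G_0,\omega)} = \wecc{u}{(G,\omega)}$ because $G_0=G$. For the remaining vertices $u \in V(G_\ell)$, Equation~\eqref{eq:large_sets} shows these are exactly the vertices with $L_{v_0,u}\cap L(G)=\emptyset$, so Lemma~\ref{le:ecc_large} gives $\wecc{u}{(G,\omega)} = d(u,v_0) + d_{\omega}(v_0,\umax)$. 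A single BFS from $v_0$ (Corollary~\ref{co:median_linear} supplies $v_0$ if needed) computes all distances $d(u,v_0)$ and the scalar $d_{\omega}(v_0,\umax) = \wecc{v_0}{(G,\omega)} = \max_{v\in V(G)}\bigl(d(v_0,v)+\omega(v)\bigr)$ in time $O(\vert E(G)\vert) = O(n\log n)$. Adding everything up gives the claimed $O(n\log^2 n)$ total running time.

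The content here is essentially bookkeeping layered on top of Theorem~\ref{th:unbalanced_recursive}; the only points requiring a little care are checking that the hypotheses of that theorem are genuinely met at every level — in particular treating the base case $i=\ell-1$ cleanly, where there is no $\mathcal{B}_\ell$ to invoke — and verifying that $\mathcal{B}_0$ together with Lemma~\ref{le:ecc_large} covers each vertex of $G$ exactly once. There is no delicate estimate: the poly-logarithmic overhead comes for free from $\ell = O(\log n)$ and $\vert E(G)\vert = O(n\log n)$.
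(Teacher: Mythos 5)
Your proposal is correct and follows essentially the same route as the paper: an induction building $\mathcal{B}_{\ell-1},\ldots,\mathcal{B}_0$ via Theorem~\ref{th:unbalanced_recursive}, with each level costing $O(\vert E(G)\vert)=O(n\log n)$ and $\ell\le\log_2 n$ levels, then combining $\mathcal{B}_0$ with Lemma~\ref{le:ecc_large} for the vertices of $G_\ell$. Your treatment of the base case $i=\ell-1$ as a vacuous instance of the theorem (empty Case~2, no $\mathcal{B}_\ell$ needed) is a slightly cleaner packaging of what the paper spells out explicitly, but the argument is the same.
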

\begin{proof}
We proceed by induction in order to prove that labeling $\mathcal{B}_i$ can be determined in time $O((\ell-i)m)$. The base case is the computation of labeling $\mathcal{B}_{\ell-1}$. We focus on graph $G_{\ell-1}$ which is partitioned by $\Theta$-class $E_{\ell}$ into two halfspaces: $S_{\ell-1}$ and $V(G_{\ell})$. The idea is similar to the ones explained in the proof of Theorem~\ref{th:unbalanced_recursive}. We begin by applying the BFS traversal of Lemma~\ref{le:compute_gates} in order to retrieve the $S_{\ell-1}$-gates of any vertex of graph $G_{\ell-1}$. For $x\in S_{\ell - 1}$, its weighted eccentricity in $G_{\ell-1}$ is given by: 
$$\mathcal{B}_{\ell-1}(x) = \max \{ \wecc{x}{\left(S_{\ell-1},\omega\right)},~ \wecc{x}{(S_{\ell-1},\omega_{\ell-1}^*)}\}.$$
The knowledge of the eccentricities of $(S_{\ell-1},\omega)$ and $(S_{\ell-1},\omega_{\ell-1}^*)$, but also of all distances to $S_{\ell-1}$-gates in $G_{\ell-1}$ thanks to the BFS, allows us to obtain the labeling $\mathcal{B}_{\ell-1}$. The running time is made up of the BFS traversal with a linear number of maximum computations, which is $O(m)$ and independent from $\ell$.

The inductive step consists in applying Theorem~\ref{th:unbalanced_recursive}. Indeed, labeling $\mathcal{B}_{i}$ can be determined from labeling $\mathcal{B}_{i+1}$ in time $O(m)$, given that we have the eccentricities of both $(S_i,\omega)$ and $(S_i,\omega_i^*)$. Hence, from the induction hypothesis, the running time of the whole process is $O((\ell-i-1)m + m) = O((\ell-i)m)$.

Consequently, labeling $\mathcal{B}_0$ is obtained in time $O(dm)=O(n\log^2 n)$, as $\ell \le d \le \log_2 n$. For any vertex $x$ belonging to a slice, {\em i.e.} $x \in S_0 \cup \ldots \cup S_{\ell-1}$, value $\mathcal{B}_0(x)$ is its weighted eccentricity on $(G_0,\omega) = (G,\omega)$. Concerning vertices $z \in V(G_{\ell})$, as observed earlier, their ladder set $L_{v_0,z}$ has no intersection with $L(G)$, so their weighted eccentricity can be directly computed (Lemma~\ref{le:ecc_large}). All in all, the weighted eccentricities of $(G,\omega)$ are known.
\end{proof}

\subsection{Summary and analysis for the eccentricities computation}

Given the observations and subroutines presented above, we present now our recursive algorithm which computes all weighted eccentricities of a median graph $(G,\omega)$. We call this algorithm \algo\ (acronym for Median $\Theta$-class Recursive Scheme for Eccentricities) and its pseudocode is given in Algorithm~\ref{alg:wecc}. The base case of \algo\ is when the number of vertices of the graph is at most $2$. Trivially, one can compute the weighted eccentricites of such graph in constant time $O(1)$ (line~\ref{line:base} of Algorithm~\ref{alg:wecc}). We focus now on the recursive calls launched by \algo .

It starts with the computation of $\Theta$-classes and their halfspaces sizes (line~\ref{line:classes}) which can be done in time $O(n\log n)$, according to Lemmas~\ref{le:linear_classes} and~\ref{le:halfspaces}. Then, it distinguishes two cases: either the input graph $(G,\omega)$ contains at least one $f$-balanced $\Theta$-class (with $f = 2\log$) or not.

If $(G,\omega)$ admits a balanced $\Theta$-class $E_i$, then the procedure is relatively simple. We call recursively \algo\ in order to obtain the weighted eccentricities of both halfspaces, {\em i.e.} of $G[H_i']$ and $G[H_i'']$ (lines~\ref{line:call1}-\ref{line:call2}). Thanks to Theorem~\ref{th:balanced_recursion}, we know that we can retrieve all eccentricities of $(G,\omega)$ in linear time (line~\ref{line:retrieve}).

\begin{algorithm}[t]
    \SetKwFor{For}{for}{do}{\nl endfor}
    \SetKwFor{Forall}{for all}{do}{\nl endfor}
    \SetKwIF{If}{ElseIf}{Else}{if}{then}{else if}{else}{}
    \DontPrintSemicolon
    \SetNlSty{}{}{:}
    \SetAlgoNlRelativeSize{0}
    \SetNlSkip{1em}

 	\nl \KwIn{A weighted median graph $(G,\omega)$.}
 	\nl \KwOut{All eccentricities $\wecc{u}{(G,\omega)}$ for each $u \in V(G)$.}
	
 	\nl \If{$\vert V(G) \vert \le 2$}{
 	    \nl \Return{} all eccentricities by enumerating all (at most $2$) shortest paths\label{line:base}\;
 	}
 	\nl \ifend\;
 	\nl Compute all $\Theta$-classes $E_i \in \mathcal{E}(G)$ (Lemma~\ref{le:linear_classes}) and their halfspaces sizes (Lemma~\ref{le:halfspaces}) \label{line:classes}\;
 	\nl \If{there exists a $\Theta$-class $E_i$ which is balanced}{
 	    \nl {\sc list1} $\leftarrow \algo(G[H_i'],\omega)$ \label{line:call1}\;
 	    \nl {\sc list2} $\leftarrow \algo(G[H_i''],\omega)$ \label{line:call2}\;
 	    \nl \Return{} all eccentricities of $(G,\omega)$ retrieved from {\sc list1}/{\sc list2} (Theorem~\ref{th:balanced_recursion}) \label{line:retrieve}\;
 	} \Else {
 	    \nl Compute the unique median vertex $v_0$ as $G \in \ulog$ (Lemma~\ref{le:unique_median})\label{line:median}\;
 	    \nl Launch a BFS starting from $v_0$ to determine all distances from $v_0$, but also $\umax$ and $L(G)$\label{line:bfs}\;
 	    \nl \lIf{$\umax = v_0$}{\Return{} all eccentricities retrieved with Lemma~\ref{le:v0farthest}}
 	    \nl Give a unique (arbitrary) identifier for each class of $L(G)$, from $1$ to $\ell = \vert L(G) \vert$\label{line:reorder}\;
 	    \nl Determine all large sets $G_j$ and slices $S_i$ for each $0\le j\le \ell$ and $0\le i\le \ell - 1$\;
 	    \nl {\sc collec} $\leftarrow$ list of $\ell$ empty lists; {\sc wcollec} $\leftarrow$ list of $\ell$ empty lists;\;
 	    \nl \For{$i$ from 0 to $\ell-1$}{
 	        \nl {\sc collec}[$i$] $\leftarrow \algo(S_i,\omega)$\;
 	        \nl {\sc wcollec}[$i$] $\leftarrow \algo(S_i,\omega_i^*)$\;
 	    }
 	    \nl \Return{} all eccentricities of $(G,\omega)$ retrieved from {\sc collec} and {\sc wcollec} (Theorem~\ref{th:unbalanced_recursive})\; 
 	}
 	\nl \ifend\;
	
 	\caption{Algorithm \algo}
 	\label{alg:wecc}
 \end{algorithm}

Otherwise, $(G,\omega) \in \ulog$, and there is a unique median vertex $v_0$ which belong to all majoritarian halfspaces of $G$ (Lemma~\ref{le:unique_median}). It is computed in linear time (Corollary~\ref{co:median_linear}). Then, we execute a BFS from $v_0$ (line~\ref{line:bfs}) which allows us to determine $\umax$ and $L(G)$, according to Lemma~\ref{le:ladder_linear}. The slice decomposition of $G$ can be directly deduced from this BFS since the vertices of $S_i$ are exactly the vertices $v$ such that $E_{i+1} \in L_{v_0,v}$ but $E_1,\ldots,E_i \notin L_{v_0,v}$, see Equation~\eqref{eq:slices}. Then, we apply recursively \algo\ on all instances $(S_i,\omega)$ and $(S_i,\omega_i^*)$, with $0\le i\le \ell -1$. Eventually, one can retrieve the eccentricites of $(G,\omega)$ in linear time according to Theorem~\ref{th:unbalanced_recursive}.

We show that \algo\ is executed in quasilinear time.

\mainEcc*
\begin{proof}
Both Theorems~\ref{th:balanced_recursion} and~\ref{th:unbalanced_recursive} ensure us that \algo\ computes exactly all eccentricities of the weighted median graph $(G,\omega)$. Our effort consists now in showing that its running time is quasilinear.

We prove the following statement: if $\vert V(G) \vert \ge 3$, the computation of $\algo(G,\omega)$, in addition with a running time $O(\vert V(G)\vert\log^2 \vert V(G)\vert)$, launches recursive calls on a collection $\mathcal{C}$ of weighted subgraphs of $G$ satisfying:
\begin{itemize}
    \item $\sum\limits_{(G',\omega') \in \mathcal{C}} \vert V(G') \vert \le \vert V(G) \vert$
    \item $\max \left\{\vert V(G') \vert : (G',\omega') \in \mathcal{C} \right\} \le \vert V(G) \vert\left(1-\frac{1}{2\log(\vert V(G) \vert)}\right)$
\end{itemize}

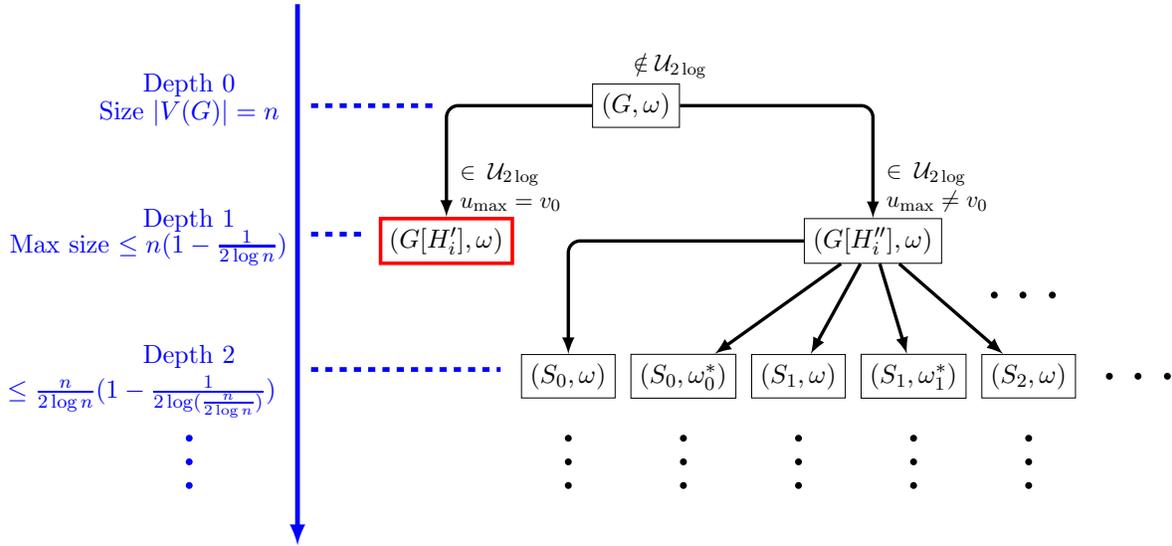
\begin{figure}[h]
\centering
\scalebox{0.9}{\begin{tikzpicture}


\node[draw] (P11) at (7.0,10.5) {$(G,\omega)$};

\node[draw = red, line width = 1.5pt, minimum height=0.2cm, minimum width=0.2cm] (P21) at (4.2,8.5) {$(G[H_i'],\omega)$};
\node[draw, minimum height=0.2cm, minimum width=0.2cm] (P22) at (10.5,8.5) {$(G[H_i''],\omega)$};

\node[draw, minimum height=0.2cm, minimum width=0.2cm] (P31) at (6.0,6.5) {$(S_0,\omega)$};
\node[draw, minimum height=0.2cm, minimum width=0.2cm] (P32) at (7.7,6.5) {$(S_0,\omega_0^*)$};
\node[draw, minimum height=0.2cm, minimum width=0.2cm] (P33) at (9.4,6.5) {$(S_1,\omega)$};
\node[draw, minimum height=0.2cm, minimum width=0.2cm] (P34) at (11.1,6.5) {$(S_1,\omega_1^*)$};
\node[draw, minimum height=0.2cm, minimum width=0.2cm] (P35) at (12.8,6.5) {$(S_2,\omega)$};


\node[color = blue] at (0.4,10.8) {Depth $0$};
\node[color = blue] at (0.4,10.4) {Size $\vert V(G) \vert = n$};
\node[color = blue] at (0.4,8.8) {Depth $1$};
\node[color = blue] at (-0.2,8.4) {Max size $\le n(1-\frac{1}{2\log n})$};
\node[color = blue] at (0.4,6.8) {Depth $2$};
\node[color = blue] at (-0.3,6.2) {$\le \frac{n}{2\log n}(1-\frac{1}{2 \log(\frac{n}{2\log n})})$};

\node[scale=2.5] at (12.8,7.7) {\ldots};
\node[scale=2.5] at (14.5,6.5) {\ldots};

\node[scale=2.5] at (6.0,5.5) {\vdots};
\node[scale=2.5] at (7.7,5.5) {\vdots};
\node[scale=2.5] at (9.4,5.5) {\vdots};
\node[scale=2.5] at (11.1,5.5) {\vdots};
\node[scale=2.5] at (12.8,5.5) {\vdots};
\node[scale=2.5, color = blue] at (0.4,5.5) {\vdots};

\node[scale = 0.9] at (7.5,11.1) {$\notin \ulog$};
\node[scale = 0.9, text width = 2cm] at (5.3,9.3) {$\in \ulog$ $\umax = v_0$};
\node[scale = 0.9, text width = 2cm] at (11.6,9.3) {$\in \ulog$ $\umax \neq v_0$};

\draw[line width = 2pt, color=blue, dashed] (2.2,10.5)--(4.0,10.5);
\draw[line width = 2pt, color=blue, dashed] (2.2,8.6)--(3.0,8.6);
\draw[line width = 2pt, color=blue, dashed] (2.2,6.6)--(5.0,6.6);
\draw[->,>=latex,line width = 2pt, color=blue] (2.0,12.0)--(2.0,4.0);

\draw[->,>=latex,rounded corners=5pt,line width = 1.4pt] (P11) -| (P21);
\draw[->,>=latex,rounded corners=5pt,line width = 1.4pt] (P11) -| (P22);

\draw[->,>=latex,rounded corners=5pt,line width = 1.4pt] (P22) -| (P31);
\draw[->,>=latex,line width = 1.4pt] (P22) -- (P32);
\draw[->,>=latex,line width = 1.4pt] (P22) -- (P33);
\draw[->,>=latex,line width = 1.4pt] (P22) -- (P34);
\draw[->,>=latex,line width = 1.4pt] (P22) -- (P35);

\end{tikzpicture}}
\caption{Tree of recursive calls: an example where the input $(G,\omega)$ admits a balanced $\Theta$-class $E_i$; then $(G[H_i'],\omega) \in \ulog$ with $\umax = v_0$ hence its weighted eccentricities can be directly computed in linear time (Corollary~\ref{co:casev0}); finally $(G[H_i''],\omega) \in \ulog$ but $\umax \neq v_0$ so $2\ell$ recursive calls are launched.}
\label{fig:recurrence}
\end{figure}

If $(G,\omega)$ admits a balanced $\Theta$-class $E_i$, two recursive calls are achieved on its halfspaces which partition the vertex set of $G$. Moreover, from Definition~\ref{def:balanced}, none of these halfspaces has a size greater than $n\left(1-\frac{1}{2\log n}\right)$ with $n = \vert V(G) \vert$. Otherwise, when $(G,\omega) \in \ulog$, the situation is trickier. If $\umax = v_0$, all eccentricities can be computed in linear time (Corollary~\ref{co:casev0}) and there is no recursive call (red node of the tree in Figure~\ref{fig:recurrence}). But, if $\umax \neq v_0$, two recursive calls are launched for each slice $S_i$, $0\le i\le \ell - 1$: one with weight function $\omega$ and one with $\omega_i^*$. Each slice is the subset of a minority halfspace, hence $\vert S_i \vert \le \frac{n}{2\log n} < n\left(1-\frac{1}{2\log n}\right)$. Then, observe that two different slices do not intersect each other by definition. As $\ell \le \log (\vert V(G) \vert)$, we have:
$$\sum\limits_{(G',\omega') \in \mathcal{C}} \vert V(G') \vert = \sum\limits_{i=0}^{\ell-1} 2\vert S_i \vert \le \frac{2\ell \vert V(G) \vert}{2\log(\vert V(G) \vert)} \le \vert V(G) \vert.$$
In both cases, the inequalities claimed above are satisfied. We analyze the tree of recursive calls, admitting the input $(G,\omega)$ as a root and, as leaves, either cases $\umax = v_0$ or very small weighted graphs (at most two vertices). At each depth $k$ of the tree, the total number of vertices involved in the instances is at most $n = \vert V(G) \vert$. The extra running time needed to retrieve the weighted eccentricities of the instances at depth $k$ from the weighted eccentricities of instances at depth $k+1$ is $O(n\log^2 n)$: it is $O(n\log n)$ when a balanced $\Theta$-class is present (Theorem~\ref{th:balanced_recursion}), $O(n\log^2 n)$ otherwise (Corollary~\ref{co:wecc_slice}). Let $s_{k}$ be the maximum number of vertices of some weighted graph at depth $k \ge 0$. Obviously, $s_0 = n$. Sequence $(s_k)$ follows the scheme introduced in Lemma~\ref{le:log_decrease} with $\lambda = 2$: in particular, when $s_k \ge 3$, then $s_{k+1} \le \lfloor s_k(1-\frac{1}{2\log(s_k)}) \rfloor$. Indeed, according to the second inequality above, as $s_k$ is the maximum size of some instance at depth $k$, any instance at depth $k+1$ cannot exceed $\lfloor s_k(1-\frac{1}{2\log(s_k)}) \rfloor$. As a conclusion, the depth of the recursive tree cannot overpass $2\log^2 n$ (Lemma~\ref{le:log_decrease}).

As the running time needed per depth is $O(n\log^2 n)$ and the depth of the tree of recursive calls is upper-bounded by $2\log^2 n$, then the total running time of \algo\ is $O(n\log^4 n)$.
\end{proof}

\section{Distance oracle} \label{sec:do}

We present in this section the second outcome of this paper, which is the design of a distance oracle (DO) for (unweighted) median graphs with a poly-logarithmic size of labels and query time. In this section, the notions of balanced/unbalanced $\Theta$-classes will be associated with a different function $f$ than in Sections~\ref{sec:first_step} and~\ref{sec:unbalanced}. Indeed, we fix $f$ as the constant function: $f : n \rightarrow 3$. We denote by $\ut$ the set of median graphs without any $3$-balanced $\Theta$-classes. In other words, if $G \in \ut$, then for each $E_i \in \mathcal{E}(G)$, either $\vert H_i' \vert < \frac{n}{3}$ or $\vert H_i'' \vert < \frac{n}{3}$. 

The main technique used to achieve this goal is similar to the one used in Sections~\ref{sec:first_step} and~\ref{sec:unbalanced}: we exploit balanced $\Theta$-classes. We propose a recursive scheme where, at each step, a non-negligible number of vertices is withdrawn for each recursive call. When the number of vertices $n = \vert V(G) \vert$ is at most $2$, the label of each vertex has size $0$: it does not contain any bit. Indeed, the distance towards the other vertex is necessarily $1$, as $G$ is connected.

Now we fixed this trivial case, we focus on median graphs $G$ with $n \ge 3$. We distinguish two cases: either $G$ admits a $3$-balanced $\Theta$-class, or $G \in \ut$.

\subsection{Recursive scheme exploiting gated halfspaces}

We propose a distance oracle $\Lambda_G$ for median graphs $G$. Any label $\Lambda_G(u)$ is a sequence of bits. The size of the DO is the maximum size of all sequences $\Lambda_G(u)$: we denote $\vert \Lambda_G \vert = \max_{u\in V(G)} \vert \Lambda_G(u) \vert$. Then, we say that the DO $\Lambda_G$ has a query time $\tau(n)$ if, for any pair $u,v \in V(G)$, the time needed to retrieve $d(u,v)$ thanks to labeling $\Lambda_G$ is at most $\tau(n)$. The DO we propose has both poly-logarithmic size and query time. 

\begin{theorem}
    Let $G$ be a median graph and $E_i \in \mathcal{E}(G)$. Assume that:
    \begin{itemize}
        \item a DO $\Lambda_{H_i'}$ of graph $G[H_i']$ is known with query time $\tau_1(\vert H_i' \vert)$,
        \item a DO $\Lambda_{H_i''}$ of graph $G[H_i'']$ is known with query time $\tau_2(\vert H_i'' \vert)$.
    \end{itemize}
    Then, one can build a DO $\Lambda_G$ of graph $G$ with size at most $3\log_2 n + 1 + \max \left\{\vert \Lambda_{H_i'} \vert, \vert \Lambda_{H_i''} \vert\right\}$ and query time $\tau(n) = O(\log_2 n) + \max \left\{\tau_1(\vert H_i' \vert), \tau_2(\vert H_i'' \vert) \right\}$. The construction takes time $O(n(\log_2 n + \max \{\vert \Lambda_{H_i'} \vert, \vert \Lambda_{H_i''} \vert\}))$.
\label{th:balanced_do}
\end{theorem}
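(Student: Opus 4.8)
The plan is to mimic the structure of Theorem~\ref{th:balanced_recursion}, but store, in each vertex label, exactly the bookkeeping needed to reduce a distance query on $G$ to a distance query inside one halfspace. First, I would recall the decomposition of distances across the cutset $E_i$: for $u \in H_i'$ and $v \in H_i''$, since $H_i''$ is gated (Lemma~\ref{le:halfspaces}), every shortest $(u,v)$-path passes through the $H_i''$-gate $g_i''(u)$ of $u$, so $d(u,v) = d(u, g_i''(u)) + d_{G[H_i'']}(g_i''(u), v)$; and for $u, v$ in the same halfspace, convexity gives $d(u,v) = d_{G[H_i']}(u,v)$ (resp.\ $H_i''$). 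So the label of a vertex $u$, say with $u \in H_i'$, should record: (i) one bit telling which side $u$ lies in; (ii) the identity, as an integer in $\{1,\dots,n\}$, of its gate $g := g_i''(u)$ in the \emph{other} halfspace --- more precisely an index that lets us match it against the gate stored in $v$'s label when $v$ lies on the other side; (iii) the distance $d(u,g)$, an integer $\le n$, i.e.\ $\le \log_2 n$ bits; and (iv) the recursively computed label $\Lambda_{H_i'}(u)$ of $u$ inside its own halfspace. That is three $\log_2 n$-bit fields plus one bit plus the recursive label, giving the claimed size bound $3\log_2 n + 1 + \max\{|\Lambda_{H_i'}|, |\Lambda_{H_i''}|\}$.

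Next I would describe the \textbf{query procedure} for $d(x,y)$. Inspect the side-bits of $\Lambda_G(x)$ and $\Lambda_G(y)$. If both lie in $H_i'$ (resp.\ $H_i''$), strip the $G$-level prefix off both labels and recurse on the DO $\Lambda_{H_i'}$ (resp.\ $\Lambda_{H_i''}$): this costs $O(\log_2 n)$ for the prefix manipulation plus $\tau_1(|H_i'|)$ (resp.\ $\tau_2(|H_i''|)$). If $x \in H_i'$ and $y \in H_i''$: read off $g := g_i''(x)$ and $d(x,g)$ from $\Lambda_G(x)$; then we need $d_{G[H_i'']}(g,y)$. Here one must be slightly careful --- $g$ is a vertex of $H_i''$, so we want to query the DO $\Lambda_{H_i''}$ on the pair $(g, y)$, which requires access to the \emph{recursive label} of $g$, not just $g$'s identifier. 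The clean fix (which I would spell out) is that the index field stored in $\Lambda_G(x)$ for its gate is precisely the index under which $\Lambda_{H_i''}$ was built, so the algorithm is given, as part of the DO data structure, the array $(\Lambda_{H_i''}(w))_{w \in H_i''}$ and simply looks up the recursive label of $g$ by its index; then it queries $d_{G[H_i'']}(g,y)$ and adds $d(x,g)$. Symmetrically if $x \in H_i''$ and $y \in H_i'$. Either way the query time is $O(\log_2 n) + \max\{\tau_1, \tau_2\}$. Correctness is exactly the gatedness/convexity identities recalled above.

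For the \textbf{construction time}: computing the $\Theta$-class $E_i$ and its halfspaces is $O(m) = O(n\log n)$ by Lemmas~\ref{le:linear_classes} and~\ref{le:halfspaces}; computing, for every $u \in H_i'$, its gate $g_i''(u)$ and the distance $d(u, g_i''(u))$ --- and symmetrically for $H_i''$ --- is $O(m) = O(n\log n)$ by Lemma~\ref{le:compute_gates} (two BFS traversals, one with starting queue $H_i'$ and one with $H_i''$). Then we write, for each of the $n$ vertices, its $G$-level prefix ($O(\log_2 n)$ bits) concatenated with the already-built recursive label ($O(\max\{|\Lambda_{H_i'}|, |\Lambda_{H_i''}|\})$ bits), for a total of $O(n(\log_2 n + \max\{|\Lambda_{H_i'}|, |\Lambda_{H_i''}|\}))$, matching the statement (the $O(n\log n)$ from the BFS/$\Theta$-class steps is absorbed since $\log_2 n = \Theta(\log n)$).

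The \textbf{main obstacle}, and the point I would be most careful to get right, is the gate-label retrieval issue in the cross-halfspace case: a DO is not a distance labeling scheme, so to evaluate $d_{G[H_i'']}(g, y)$ we genuinely need $g$'s recursive label, which cannot be reconstructed from $g$'s identifier alone without extra stored data. The resolution --- bundling the full array of recursive labels of each halfspace into the oracle's global data structure and storing in $\Lambda_G(x)$ an \emph{index into that array} rather than an opaque vertex name --- is what makes the size bound ($3\log_2 n+1$ overhead, no blow-up) and the $O(\log_2 n)$ query overhead work out; it also explains the paper's earlier remark that the scheme is not a two-label distance scheme. Everything else (the distance decompositions, the BFS counts, the bit accounting) is routine once this design choice is in place, and the recursion will later be unrolled against the depth bound from Lemma~\ref{le:log_decrease}, with the $\ut$ case handled separately, to get the global $O(\log^3 n)$ size and $O(\log^4 n)$ query time of Theorem~\ref{th:main_do}.
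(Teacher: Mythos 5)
Your proposal is correct and follows essentially the same route as the paper: a $3\log_2 n + 1$-bit prefix (side bit, gate identity, distance to gate) prepended to the recursive halfspace label, with queries resolved by the gatedness/convexity identities and cross-halfspace queries delegated to the other side's oracle via the gate's recursive label. The only cosmetic difference is that the paper explicitly stores the index of the $\Theta$-class $E_i$ as one of the three $\log_2 n$-bit fields, whereas you leave it implicit; your emphasis on needing the full array of recursive labels to look up the gate's label is exactly why the paper calls this a distance oracle rather than a two-label distance labeling scheme.
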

\begin{proof}
Thanks to our assumptions, every vertex of $V(G)$ is initially associated with a label of the halfspaces of $E_i$. The vertices $u' \in H_i'$ are labeled with $\Lambda_{H_i'}(u')$ while the vertices $u'' \in H_i''$ are labeled with $\Lambda_{H_i''}(u'')$.

For the construction of $\Lambda_G$, we begin with the computation of the gate of each vertex thanks to the BFS of Lemma~\ref{le:compute_gates}. As a reminder, for any vertex $u' \in H_i'$, we determine its $H_i''$-gate $g(u') \in H_i''$, and conversely for each vertex $u'' \in H_i''$ we determine its $H_i'$-gate $g(u'')$. This operation is achieved in time $O(n\log n)$ by launching a BFS with a starting queue made up of $H_i'$, and a second one with a starting queue $H_i''$ (as it was done with the eccentricities, see the proof of Theorem~\ref{th:balanced_recursion}). At the end of the execution, for each $u \in V(G)$, we know all distances $d(u,g(u))$.

We are ready to describe the content of the DO $\Lambda_G$. For each $u \in V(G)$, the sequence of bits $\Lambda_G(u)$ contains successively:
\begin{enumerate}
    \item $\lacla(u)$: the index $i$ of the $\Theta$-class $E_i$, encoded by $\log_2 n$ bits, since $q \le n$,
    \item $\laside(u)$: the side of $u$ regarding $E_i$ encoded with one bit: either $u \in H_i'$ or $u \in H_i''$,
    \item $\lagate(u)$: the identity of its gate $g(u)$ through the opposite halfspace encoded with $\log_2 n$ bits,
    \item $\ladist(u)$: the distance $d(u,g(u))$ encoded with $\log_2 n$ bits,
    \item $\larec(u)$: the label of $u$ in its halfspace: if $u\in H_i'$, then we add $\Lambda_{H_i'}(u)$, otherwise $\Lambda_{H_i''}(u)$.
\end{enumerate}
In summary, label $\Lambda_G(u)$ is the concatenation of $3\log_2 n + 1$ preliminary information in addition with the label of $u$ in its $E_i$-halfspace: $\Lambda_G(u) = \lacla(u) \cdot \laside(u) \cdot \lagate(u) \cdot \ladist(u) \cdot \larec(u)$. Hence, its size is at most $O(\log_2 n) + \max \{\vert \Lambda_{H_i'} \vert, \vert \Lambda_{H_i''} \vert\}$. Moreover, the time needed for the construction of $\Lambda_G$ includes the BFS evoked in Lemma~\ref{le:compute_gates} but also the writing of sequence $\Lambda_G(u)$. The latter takes at most $O(\log_2 n + \max \{\vert \Lambda_{H_i'} \vert, \vert \Lambda_{H_i''} \vert\})$ and is executed for each vertex of $G$. 

Let us explain how any distance $d(u,v)$ can be retrieved thanks to the DO $\Lambda_G$. From the $\log_2 n$ first bits of $\lacla(u)$ (or $\lacla(v)$), we obtain the $\Theta$-class which was used as a separator. Then, we look at the next bit $\laside$ for each vertex, it allows us to know into which halfspace $u$ and $v$ are. If they belong to the same halfspace , say $H_i'$ w.l.o.g., then we obtain $d(u,v)$ thanks to the DO $\Lambda_{H_i'}$, given by both $\larec(u)$ and $\larec(v)$. Else, if $u \in H_i'$ and $v \in H_i''$ w.l.o.g., we first retrieve the gate $g(u)$ of $u$ in $H_i''$ thanks to the part $\lagate(u)$, together with the distance $d(u,g(u))$ thanks to $\ladist(u)$. Second, we determine the distance $d(g(u),v)$ with a query on the DO $\Lambda_{H_i''}$, given by both $\larec(g(u))$ and $\larec(v)$. Finally, we obtain $d(u,v) = d(u,g(u)) + d(g(u),v)$.

The query time consists in the analysis of the first $3\log_2 n + 1$ bits to retrieve the different information: $E_i$, halfspace of each vertex, gate of $u$ and $d(u,g(u))$. Then, we necessarily launch a query on either $\Lambda_{H_i'}$ or $\Lambda_{H_i''}$. All in all, the query time is upper-bounded by $O(\log_2 n) + \max \{\tau_1(\vert H_i' \vert), \tau_2(\vert H_i'' \vert) \}$.
\end{proof}

We formulate an observation similar to the one we gave for the eccentricities problem after Theorem~\ref{th:balanced_recursion}. We could apply recursively Theorem~\ref{th:balanced_do} until we find base cases, {\em i.e.} median graphs of at most $2$ vertices. Unfortunately, if we select the $\Theta$-class $E_i$ arbitrarily, such an approach can lead to labels of linear size. However, in an utopian situation, we could select a balanced $\Theta$-class $E_i$ at each recursive step. In this way, the depth of the recursive tree would be logarithmic. The issue is that many median graphs do not admit any balanced $\Theta$-class. For this reason, we provide in the next subsection a labeling procedure for graphs $G \in \ut$.

\subsection{Oracle construction for median graphs without balanced $\Theta$-classes}

In order to handle the case of median graphs without any balanced $\Theta$-class, we label each vertex with not only its distance to the median vertex $v_0$, but also its gates for the ``neighboring'' fibers around it.
In this way, one can retrieve any distance $d(u,v)$ by looking at the successive distances from gate to gate a logarithmic number of times. The query time is thus poly-logarithmic, such as the size of the labeling.

Consider a median graph $G \in \ut$. According to Lemma~\ref{le:unique_median}, there is a unique median vertex $v_0$ for $G$ belonging to all majority halfspaces. Each vertex $v \neq v_0$ admits a ladder set $L_{v_0,v}$, hence the vertices of $G$ can be partitioned in function of their ladder $L_{v_0,v}$. We denote by $V_L \subseteq V(G) \setminus \{v_0\}$ the set of vertices $v$ with ladder set $L_{v_0,v} = L$. Set $V_L$ can also be seen, from Definition~\ref{def:ladder}, as the intersection of all minority halfspaces of $\Theta$-classes in $L$: $$V_L = \bigcap_{E_i \in L} H_i'.$$
For any POF $L$ containing only $\Theta$-classes adjacent to $v_0$, $V_L$ is not only nonempty but also gated, as shown in the next lemma.

\begin{lemma}
    For any POF $L$ adjacent to $v_0$, $V_L$ is nonempty and gated.
    \label{le:ladder_gated}
\end{lemma}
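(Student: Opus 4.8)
The plan is to prove the two assertions of the lemma separately, each by reducing to properties of halfspaces and using the POF/orthogonality structure.

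\textbf{Nonemptiness.} First I would argue that $V_L$ is nonempty. Recall $V_L = \bigcap_{E_i \in L} H_i'$, where $H_i'$ is the minority halfspace of $E_i$, i.e. the halfspace \emph{not} containing $v_0$. Since $L$ is a POF consisting of $\Theta$-classes each adjacent to $v_0$, for each $E_i \in L$ there is an edge of $E_i$ incident to $v_0$; because $v_0 \in H_i''$, that edge crosses from $v_0$ to a neighbor in $H_i'$. Now apply Lemma~\ref{le:pof_adjacent} (POFs and hypercubes): there is an induced hypercube $Q$ containing $v_0$ whose edges adjacent to $v_0$ are exactly the edges of $L$ incident to $v_0$, and the $\Theta$-classes of $Q$ are precisely those of $L$. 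The vertex $w$ of $Q$ antipodal to $v_0$ is separated from $v_0$ by every $E_i \in L$ and by no other $\Theta$-class, so its signature is exactly $L$; equivalently $w \in H_i'$ for every $E_i \in L$, and moreover $L_{v_0,w} = L$ (each $E_i$ is adjacent to $v_0$ and separates $w$ from $v_0$). Hence $w \in V_L$, so $V_L \neq \emptyset$.

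\textbf{Gatedness.} For the second assertion I would express $V_L$ as a finite intersection of gated sets and invoke Lemma~\ref{le:intersection}. Each halfspace $H_i'$ is gated by Lemma~\ref{le:halfspaces}, and $L$ is finite (indeed $|L| \le d \le \log_2 n$ since $L$ is a POF, by Lemma~\ref{le:pof_hypercube}), so $V_L = \bigcap_{E_i \in L} H_i'$ is gated as the intersection of gated sets — applying Lemma~\ref{le:intersection} repeatedly. Since gated sets are convex and $V_L$ is nonempty, this also shows $G[V_L]$ is a (nonempty) convex, hence median, subgraph. One subtlety to double-check: the lemma statement wants the ladder-set description $V_L = \{v : L_{v_0,v} = L\}$ to agree with the halfspace-intersection description used here; a vertex $v$ lies in $\bigcap_{E_i\in L} H_i'$ iff every $E_i\in L$ separates $v$ from $v_0$, and since every $E_i\in L$ is adjacent to $v_0$, this is exactly the condition $L \subseteq L_{v_0,v}$. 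The reverse inclusion $L_{v_0,v}\subseteq L$ need not hold for arbitrary $v$ in the intersection, so strictly speaking $\{v: L_{v_0,v}=L\} \subseteq \bigcap_{E_i\in L}H_i'$; however for the purposes of this lemma it is the set $\bigcap_{E_i\in L}H_i'$ that is being named $V_L$, and it is this set whose gatedness we need downstream, so I would simply note the equality of the two descriptions is used only in the direction that matters.

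The main obstacle I anticipate is not in the gatedness part — that is a one-line application of Lemma~\ref{le:intersection} — but in cleanly establishing nonemptiness, specifically verifying that the antipodal vertex $w$ of the hypercube $Q$ from Lemma~\ref{le:pof_adjacent} has ladder set \emph{exactly} $L$ and not a proper superset. This follows because the only $\Theta$-classes separating $w$ from $v_0$ are those of $Q$ (a shortest $v_0$--$w$ path inside $Q$ uses exactly one edge of each $\Theta$-class of $Q$, by Lemma~\ref{le:signature}), and those are exactly the classes of $L$; since each such class is adjacent to $v_0$, all of $\sigma_{v_0,w}$ lies in $L_{v_0,w}$, giving $L_{v_0,w}=\sigma_{v_0,w}=L$. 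So the only care needed is in stitching together Lemmas~\ref{le:pof_adjacent} and~\ref{le:signature} correctly; no new ideas are required.
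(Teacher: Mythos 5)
Your nonemptiness argument is correct and is essentially the paper's: take the induced hypercube $Q$ at $v_0$ given by Lemma~\ref{le:pof_adjacent}, and use Lemma~\ref{le:signature} to check that its antipodal vertex $w$ satisfies $\sigma_{v_0,w}=L_{v_0,w}=L$, hence $w\in V_L$.

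The gatedness part has a genuine gap. The set the lemma is about, and the one needed downstream, is $V_L=\{v\neq v_0 : L_{v_0,v}=L\}$ with \emph{equality} of ladder sets: these sets must partition $V(G)\setminus\{v_0\}$, since Definition~\ref{def:do_unbalanced} stores for $u\in V_L$ its gates in the neighbouring fibers $V_{L'}$ with $L'\subsetneq L$, and the query algorithm of Theorem~\ref{th:do_query} hops through pairwise-disjoint fibers along the ladder sequence. The set you prove gated, $\bigcap_{E_i\in L}H_i'$, is strictly larger in general: it equals $\{v : L\subseteq L_{v_0,v}\}=\bigcup_{L'\supseteq L}V_{L'}$. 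For instance, in a single square $v_0,a,b,c$ with $v_0a\in E_1$ and $v_0c\in E_2$, one has $V_{\{E_1\}}=\{a\}$ whereas $H_1'=\{a,b\}$. (The displayed identity ``$V_L=\bigcap_{E_i\in L}H_i'$'' appearing in the paper just before the lemma is only an inclusion and is presumably what misled you.) Your remark that the larger set is ``the one we need downstream'' is therefore not right: if $V_{L'}$ contained $V_L$, the $V_{L'}$-gate of $u\in V_L$ would be $u$ itself and the gate-hopping scheme would collapse. Two repairs are available. The paper's: $V_L$ is exactly the fiber $F_{\stz(v_0)}[v_L]$ of the gated star $\stz(v_0)$ (the set of vertices sharing an induced hypercube with $v_0$), and fibers of gated sets are gated by Lemma~\ref{le:fibers_gated}; establishing $V_L=F_{\stz(v_0)}[v_L]$ is the real work and requires a signature argument. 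Alternatively, your intersection idea can be salvaged by writing $V_L=\bigl(\bigcap_{E_i\in L}H_i'\bigr)\cap\bigl(\bigcap_{E_j}H_j''\bigr)$, the second intersection ranging over all classes $E_j$ adjacent to $v_0$ with $E_j\notin L$; this pins down $L_{v_0,v}=L$ exactly and is again a finite intersection of gated halfspaces, so Lemmas~\ref{le:halfspaces} and~\ref{le:intersection} apply (and your nonemptiness witness $w$ does lie in all these majority halfspaces, as you already verified).
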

\begin{proof}
 As all $\Theta$-classes of $L$ are adjacent to $v_0$, there is a hypercube $Q_L$ containing both $v_0$ and edges of $L$ adjacent to $v_0$ (Lemma~\ref{le:pof_adjacent}). We denote by $v_L$ the farthest-to-$v_0$ vertex of $Q_L$, said differently the opposite vertex of $v_0$ in $Q_L$. By definition, the ladder set of $v_L$ is $L$, therefore $V_L$ is nonempty.
 
 Let $\stz(v_0)$ be the subgraph of $G$ made up of the vertices which belong to a common induced hypercube with $v_0$. This notion was defined in~\cite{ChLaRa19} and called the \emph{star} of a vertex. The authors (Proposition 2, \cite{ChLaRa19}) proved that, for any vertex $v$, $\stz(v)$ is a gated subgraph of $G$. Consequently, the fibers of $\stz(v_0)$ are gated (Lemma~\ref{le:fibers_gated}).
 
 We claim that $V_L$ is exactly the fiber of vertex $v_L$: in brief, $V_L = F_{\scriptsize{\stz(v_0)}}[v_L]$. Let $u \in V_L$ and $z \in \stz(v_0)$: we show that $v_L \in I(u,z)$ necessarily. If $\sigma_{v_L,u} \cap \sigma_{v_L,z} = \emptyset$, then our claim holds, according to Lemma~\ref{le:signature}. By contradiction, assume that some $\Theta$-class $E_j$ belongs to both $\sigma_{v_L,u}$ and $\sigma_{v_L,z}$. As $E_j \in \sigma_{v_L,z}$, then by convexity of $\stz(v_0)$, $E_j$ is adjacent to $v_0$. However, at the same time, we have $L = L_{v_0,u} = \sigma_{v_0,v_L} \subseteq \sigma_{v_0,u}$. Hence, $v_L \in I(v_0,u)$ and $\sigma_{v_L,u} \subseteq \sigma_{v_0,u}$ does not contain any $\Theta$-class adjacent to $v_0$ since all of them are present, by definition, in $\sigma_{v_0,v_L}$. So, $E_j$ is not adjacent to $v_0$, which contradicts our first observation. As a consequence, $v_L$ is the $\stz(v_0)$-gate of any $u \in V_L$.
As fibers are gated (Lemma~\ref{le:fibers_gated}), then we conclude that set $V_L$ is gated.
\end{proof}

The idea beyond our DO follows. We do not give any label to the median vertex $v_0$. We exploit the partition $\{ V_L : L\neq \emptyset,~\mbox{POF adjacent to}~ v_0\}$ of $V(G) \setminus \{v_0\}$. We compute recursively the DO of all median subgraphs $V_L$. Observe that, for any $\Theta$-class $E_i \in L$, then $V_L \subseteq H_i'$ and hence each set contains at most $\frac{n}{3}$ vertices, since $G \in \ut$. Then, we complete each label $\Lambda_{V_L}(u)$ with the identity of the ladder set $L = L_{v_0,u}$, the distance $d(v_0,u)$, and the pair gate-distance from $u$ to its gate in any neighboring fiber $V_{L'}$ of $V_L$. The formal (recursive) definition of the DO $\Lambda_G$ for $G\in \ut$ is given below.

\begin{definition}[DO $\Lambda_G$ for $G\in \ut$]
    Let $G \in \ut$ with $n\ge 3$. The label $\Lambda_G(v_0)$ is empty. For $u \in V(G) \setminus \{v_0\}$, sequence $\Lambda_G(u)$ contains:
    \begin{enumerate}
        \item $\laarg{L}(u)$: the ladder set $L_{v_0,u}$, encoded with $(\log_2 n)^2$ bits,
        \item $\ladiste(u)$: the distance $d(v_0,u)$, encoded with $\log_2 n$ bits,
        \item $\lafibe(u)$: each triplet $\left(L \setminus L',g_{L'}(u),d(u,g_{L'}(u))\right)$, for all subsets $L' \subsetneq L = L_{v_0,u}$ with $\vert L' \vert = \vert L \vert - 1$, where $g_{L'}(u)$ is the gate of $u$ for the gated fiber $V_{L'}$. Each triplet is encoded with $3\log_2 n$ bits, so the sequence of all triplets has size $O((\log_2 n)^2)$,
        \item $\larece(u)$: the label $\Lambda_{V_L}(u)$ of the DO $\Lambda_{V_L}$ computed recursively on the median subgraph $G[V_L]$.
    \end{enumerate}
    We have $\Lambda_G(u) = \laarg{L}(u) \cdot \ladiste(u) \cdot \lafibe(u) \cdot \larece(u)$.
    \label{def:do_unbalanced}
\end{definition}

\begin{figure}[h]
\centering
\scalebox{0.7}{\begin{tikzpicture}

\coordinate (s01) at (1.0,4.6) {};
\coordinate (s02) at (3.2,3.2) {};
\coordinate (s03) at (8.3,3.2) {};
\coordinate (s04) at (8.3,6.3) {};
\coordinate (s05) at (0.2,6.3) {};
\coordinate (s11) at (4.0,3.0) {};
\coordinate (s12) at (4.0,-0.8) {};
\coordinate (s13) at (9.3,-0.8) {};
\coordinate (s14) at (9.3,1.2) {};
\coordinate (s15) at (8.3,1.2) {};
\coordinate (s16) at (8.3,3.0) {};
\coordinate (s17) at (4.0,6.3) {};
\coordinate (s21) at (3.8,-0.8) {};
\coordinate (s22) at (-2.0,-0.8) {};
\coordinate (s23) at (-2.0,6.3) {};
\coordinate (s24) at (-0.1,6.3) {};
\coordinate (s25) at (0.8,4.4) {};
\coordinate (s26) at (3.2,3.0) {};
\coordinate (s27) at (3.8,3.0) {};





\node[scale=2, rotate=-60] at (-0.2,2) {{\bf \vdots}};
\node[scale=2, rotate=90] at (5.7,2.3) {{\bf \vdots}};
\node[scale=2, rotate=-40] at (6.0,5.2) {{\bf \vdots}};
\node[scale=2, rotate=60] at (-1.8,4.6) {{\bf \vdots}};

\draw[rounded corners, color = cyan, fill = white!90!cyan, line width = 0.8pt] (1.3,5.3) -- (-0.5,7.4) -- (2.5,7.4) -- (1.3,5.3);
\node[scale=1.3] at (1.0,7.8) {$V_{\{E_1,E_3\}}$};
\draw[rounded corners, color = blue, fill = white!90!blue, line width = 0.8pt] (3,4) -- (3.2,7.0) -- (5.2,6.6) -- (3,4);
\node[scale=1.3] at (4.4,7.3) {$V_{\{E_1\}}$};
\draw[rounded corners, color = green, fill = white!90!green, line width = 0.8pt] (1.3,3.8) -- (0.3,5.6) -- (-2.0,5.7) -- (1.3,3.8);
\node[scale=1.3] at (-1.0,6.1) {$V_{\{E_3\}}$};


\node[draw, circle, minimum height=0.2cm, minimum width=0.2cm, fill=black] (P11) at (1,1) {};
\node[draw, circle, minimum height=0.2cm, minimum width=0.2cm, fill=black] (P12) at (1,2.5) {};

\node[draw, circle, minimum height=0.2cm, minimum width=0.2cm, fill=black] (P21) at (3,1) {};
\node[draw, circle, minimum height=0.2cm, minimum width=0.2cm, fill=black] (P22) at (3,2.5) {};
\node[draw, circle, minimum height=0.2cm, minimum width=0.2cm, fill=black] (P23) at (3,4) {};

\node[draw, circle, minimum height=0.2cm, minimum width=0.2cm, fill=black] (P31) at (5,1) {};
\node[draw, circle, minimum height=0.2cm, minimum width=0.2cm, fill=black] (P32) at (5,2.5) {};
\node[draw, circle, minimum height=0.2cm, minimum width=0.2cm, fill=black] (P33) at (5,4) {};

\node[draw, circle, minimum height=0.2cm, minimum width=0.2cm, fill=black] (P41) at (1.3,3.8) {};
\node[draw, circle, minimum height=0.2cm, minimum width=0.2cm, fill=black] (P42) at (1.3,5.3) {};
\node[draw, circle, minimum height=0.2cm, minimum width=0.2cm, fill=black] (P43) at (-0.7,3.8) {};


\node[draw, circle, minimum height=0.2cm, minimum width=0.2cm, fill=black] (P52) at (1.1,6.8) {};

\node[draw, circle, minimum height=0.2cm, minimum width=0.2cm, fill=black] (P61) at (3.6,6.0) {};
\node[draw, circle, minimum height=0.2cm, minimum width=0.2cm, fill=black] (P62) at (0.1,5.2) {};

\coordinate (i1) at (2.2,7.2) {};
\coordinate (i2) at (1.2,5.7) {};


\node[scale=1.5] at (0.8,7.1) {$u$};
\node[scale=1.5] at (2.7,2.2) {$v_0$};
\node at (4.2,6.4) {$g_{\{E_1\}}(u)$};
\node at (-0.8,5.4) {$g_{\{E_3\}}(u)$};
\node[scale=1.5,color = blue] at (5.4,3.2) {$E_1$};
\node[scale=1.5,color = red] at (4.0,1.4) {$E_2$};
\node[scale=1.5,color = green] at (0,2.8) {$E_3$};

\draw[line width = 1.8pt, dashed, out = -20, in = 160] (P52) to (i1);
\draw[line width = 1.8pt, dashed] (i1) -- (P61);
\draw[line width = 1.8pt, out = -135, in = 90, dashed] (P52) to (i2);
\draw[line width = 2pt, dashed] (i2) -- (P62);


\draw[line width = 1.4pt] (P11) -- (P12);
\draw[line width = 1.4pt] (P11) -- (P21);
\draw[line width = 1.4pt] (P12) -- (P22);
\draw[line width = 1.4pt] (P21) -- (P22);

\draw[line width = 1.8pt, color = red] (P21) -- (P31);
\draw[line width = 1.8pt, color = red] (P22) -- (P32);
\draw[line width = 1.4pt] (P31) -- (P32);

\draw[line width = 1.8pt, color = blue] (P22) -- (P23);
\draw[line width = 1.8pt, color = red] (P23) -- (P33);
\draw[line width = 1.8pt, color = blue] (P32) -- (P33);

\draw[line width = 1.8pt, color = green] (P22) -- (P41);
\draw[line width = 1.8pt, color = green] (P12) -- (P43);
\draw[line width = 1.8pt, color = green] (P23) -- (P42);
\draw[line width = 1.4pt] (P41) -- (P43);
\draw[line width = 1.8pt, color = blue] (P41) -- (P42);

\node[scale=1.5] at (13.5,4) {$\begin{array}{rl}
\Lambda_G(u) = & \{E_1,E_3\} ~~{\color{red} : (\log_2 n)^2 ~\mbox{bits}}\\ 
& d(v_0,u) ~~~{\color{red} : \log_2 n ~\mbox{bits}}\\
& \left(\{E_1\},g_{\{E_1\}}(u), d(u,g_{\{E_1\}}(u)\right)\\
& \left(\{E_3\},g_{\{E_3\}}(u), d(u,g_{\{E_3\}}(u)\right)\\
& \Lambda_{V_{\{E_1,E_3\}}}(u)
\end{array}$};

\node[scale = 3, color = red] at (10.6,3.7) {$\{$};
\node[scale = 1.5, text width = 10mm, color = red] at (8.0,3.7) {$\lafib(u):$ ~~~~$O((\log_2 n)^2)$ bits};

\end{tikzpicture}}
\caption{The label $\Lambda_G(u)$ of some vertex $u \in V(G)$, $G \in \ut$}
\label{fig:do_unbalanced}
\end{figure}
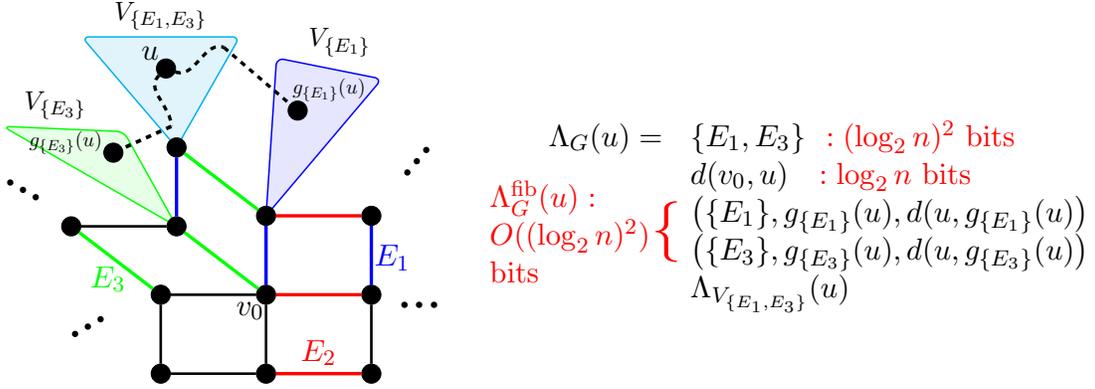

An example of label $\Lambda_G(u)$ is given in Figure~\ref{fig:do_unbalanced}. The definition of our labeling $\Lambda_G$ is now complete. Remember that, if the graph $G$ has at most $2$ vertices, $\vert \Lambda_G \vert = 0$: there is no need to label the vertices since they are necessarily at distance $1$. Else, we distinguish two cases. If $G$ contains a $3$-balanced $\Theta$-class $E_i$, then we compute recursively the DO of each halfspace and label a logarithmic number of bits for each vertex in addition with the DO of its halfspace, as described in Theorem~\ref{th:balanced_do}. If $G$ contains no $3$-balanced $\Theta$-class, {\em i.e.} $G\in \ut$, then we proceed as in Definition~\ref{def:do_unbalanced}: we compute recursively the DO of each fiber $V_L$ and label each vertex with a poly-logarithmic number of bits in addition with the DO of its fiber $V_L$. We thus conclude on the size of the labeling $\Lambda_G$.

\begin{lemma}
  For any median graph $G$, the size of $\Lambda_G$ is poly-logarithmic: $\vert \Lambda_G \vert = O((\log_2 n)^3)$
  \label{le:do_size}
\end{lemma}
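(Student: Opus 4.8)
The plan is to set up a single recurrence capturing both branches of the recursive construction of $\Lambda_G$, and to observe that each branch shrinks the instance by a constant factor, so the recursion has depth $O(\log n)$ while each level adds only $O((\log_2 n)^2)$ bits to the label of any vertex. Concretely, I would let $T(n)$ denote the worst-case value of $\vert\Lambda_G\vert$ over all $n$-vertex median graphs, with $T(n)=0$ for $n\le 2$ (no bits are stored), and first treat the two cases $G\notin\ut$ and $G\in\ut$ separately before merging them.

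For the balanced case I would use that a $3$-balanced $\Theta$-class $E_i$ satisfies $\min\{\vert H_i'\vert,\vert H_i''\vert\}\ge\lceil n/3\rceil$, so both induced halfspaces have at most $\lfloor 2n/3\rfloor$ vertices; Theorem~\ref{th:balanced_do} then gives $\vert\Lambda_G\vert\le 3\log_2 n+1+\max\{\vert\Lambda_{H_i'}\vert,\vert\Lambda_{H_i''}\vert\}\le 3\log_2 n+1+T(\lfloor 2n/3\rfloor)$. For the unbalanced case $G\in\ut$ I would note that every set $V_L$ with $L$ a nonempty POF adjacent to $v_0$ is contained in the minority halfspace $H_i'$ of each $E_i\in L$, whence $\vert V_L\vert<n/3\le\lfloor 2n/3\rfloor$ for $n\ge 3$; since by Lemma~\ref{le:ladder_gated} each $G[V_L]$ is a gated (hence convex, hence median) subgraph, and $v_0$ itself receives the empty label, the recursion only descends into instances of size at most $\lfloor 2n/3\rfloor$. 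Inspecting Definition~\ref{def:do_unbalanced}, the non-recursive prefix $\laarg{L}(u)\cdot\ladiste(u)\cdot\lafibe(u)$ of $\Lambda_G(u)$ has size $(\log_2 n)^2+\log_2 n+O((\log_2 n)^2)=O((\log_2 n)^2)$, so $\vert\Lambda_G\vert\le c_0(\log_2 n)^2+T(\lfloor 2n/3\rfloor)$ for an absolute constant $c_0$.

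Merging the two cases, I would obtain an absolute constant $c$ with $T(n)\le c(\log_2 n)^2+T(\lfloor 2n/3\rfloor)$ for all $n\ge 3$ and $T(n)=0$ for $n\le 2$ (absorbing $3\log_2 n+1$ into $c(\log_2 n)^2$). Unrolling with $n_0=n$ and $n_{j+1}=\lfloor 2n_j/3\rfloor$ gives $n_j\le(2/3)^j n$, so the recursion terminates after $D=O(\log n)$ steps; summing the per-level overheads yields $T(n)\le c\sum_{j=0}^{D-1}(\log_2 n_j)^2\le cD(\log_2 n)^2=O((\log_2 n)^3)$, which is the claimed bound, and this also bounds $\vert\Lambda_G\vert$ for the particular $G$ in the statement.

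There is no genuinely hard step here; the only point that needs care is confirming that in \emph{both} branches the recursive subproblems are a constant factor smaller than the current instance — in the balanced case because the majority halfspace has at most $2n/3$ vertices, and in the unbalanced case because every $V_L$ sits inside a minority halfspace of a graph in $\ut$ and therefore has fewer than $n/3$ vertices. This is exactly what pins the recursion depth at $O(\log n)$ instead of $\Omega(n)$, and thus turns the $O((\log_2 n)^2)$ per-level overhead into the announced $O((\log_2 n)^3)$ total.
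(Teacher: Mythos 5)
Your proof is correct and follows essentially the same route as the paper's: the same case split on whether $G$ admits a $3$-balanced $\Theta$-class, the same per-level overhead of $O((\log_2 n)^2)$ bits, and the same merged recurrence $T(n)\le c(\log_2 n)^2+T(\lfloor 2n/3\rfloor)$ unrolled to logarithmic depth. No substantive difference to report.
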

\begin{proof}
Let $\alpha(n)$ be the maximum size of a DO $\Lambda_G$ on median graphs $G$ satisfying $\vert V(G) \vert = n$. For $n \ge 3$, we show that $\alpha(n) \le 4(\log_2 n)^2 + \log_2 n + \alpha(\lfloor \frac{2n}{3} \rfloor)$. 

If $G$ admits a $3$-balanced $\Theta$-class, as shown in Theorem~\ref{th:balanced_do}, the size of the labeling $\Lambda_G$ is at most $3\log_2 n + 1 + \alpha(\lfloor \frac{2n}{3} \rfloor)$ since each halfspace of a $3$-balanced $\Theta$-class contains at most $\lfloor \frac{2n}{3} \rfloor$ vertices. 

However, if $G \in \ut$, let us justify that, in this case, the size of the labeling $\Lambda_G$ is upper-bounded by $4(\log_2 n)^2 + \log_2 n + \alpha(\lfloor \frac{n}{3} \rfloor)$. As already stated in Definition~\ref{def:do_unbalanced}, the size of the first two parts of the labeling, {\em i.e.} $\laarg{L}$ and $\ladist$, is $(\log_2 n)^2 + \log_2 n$. The third part $\lafib$ is trickier to analyze. Let $u \in V(G)$ and $L = L_{v_0,u}$. The label $\lafib(u)$ contains as many triplets as the number of subsets $L'$ of $L$ with exactly one element left. As $\vert L \vert \le d \le \log_2 n$, there are at most $\log_2 n$ such subsets. The triplet is made up of the singleton $L\setminus L'$,  the gate $g_{L'}(u)$ and the distance $d(u,g_{L'}(u))$. Each component of the triplet can be encoded with $\log_2 n$ bits. Therefore, the size of $\lafib(u)$ is at most $3(\log_2 n)^2$. Finally, $\Lambda_G(u)$ also contains the part $\larec(u)$, which is the label of $u$ for the DO on graph $G[V_L]$. As $\vert V_L \vert < \frac{n}{3}$, we have $\vert \Lambda_G(u) \vert \le 4(\log_2 n)^2 + \log_2 n + \alpha(\lfloor \frac{n}{3} \rfloor)$.

Combining the inequalities obtained for the two possible cases, {\em i.e.} $G\in \ut$ or $G \notin \ut$, we have: $\alpha(n) \le 4(\log_2 n)^2 + \log_2 n + \alpha(\lfloor \frac{2n}{3} \rfloor)$. Furthermore, as a base case, $\alpha(1) = \alpha(2) = 0$. The depth of this recursive sequence is logarithmic since we divide at each recursive step the number of vertices of the graph by a constant factor smaller than $1$. The extra size added into the labeling at each recursive step is $O((\log_2 n)^2)$. We conclude: $\alpha(n) = O((\log_2 n)^3)$.
\end{proof}

The size of labeling $\Lambda_G$ for any median graph $G$ is guaranteed to be poly-logarithmic. We show now that obtaining labels $\Lambda_G(u)$ for each vertex $u \in V(G)$ can be achieved in quasilinear time.

\begin{lemma}
 There is a combinatorial algorithm which, given any median graph $G$, outputs the labeling $\Lambda_G$ in quasilinear time $O(n(\log_2 n)^4)$.
 \label{le:do_construct}
\end{lemma}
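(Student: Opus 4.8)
The plan is to analyze the recursion tree induced by the construction of $\Lambda_G$ and bound the total work, essentially mirroring the running-time analysis of Theorem~\ref{th:main_ecc} but accounting for the extra cost of writing down the labels. First I would establish the per-node cost: given a median graph $H$ with $|V(H)| = k$, one invocation of the construction (excluding the recursive calls) does the following. Compute the $\Theta$-classes and all halfspace sizes in time $O(k\log k)$ by Lemmas~\ref{le:linear_classes} and~\ref{le:enum_halfspaces}, and check in the same time whether a $3$-balanced $\Theta$-class exists. In the balanced case, by Theorem~\ref{th:balanced_do} the local construction (the BFS of Lemma~\ref{le:compute_gates} plus writing the $3\log_2 k + 1$ extra bits, concatenated with the recursive labels) takes time $O(k(\log_2 k + \alpha(k)))$ where $\alpha$ is the label-size bound; since $\alpha(k) = O((\log_2 k)^3)$ by Lemma~\ref{le:do_size}, this is $O(k(\log_2 k)^3)$. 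In the unbalanced case $H \in \ut$, we compute the median vertex $v_0$ (Corollary~\ref{co:median_linear}), run the BFS of Lemma~\ref{le:ladder_linear} to get all ladder sets $L_{v_0,u}$ in time $O(k(\log_2 k)^2)$, which simultaneously partitions $V(H)\setminus\{v_0\}$ into the sets $V_L$; then for each nonempty POF $L$ adjacent to $v_0$ we run the gated-fiber BFS of Lemma~\ref{le:compute_gates} on the star $\stz(v_0)$ (or directly on the gated fibers $V_{L'}$) to obtain all gates $g_{L'}(u)$ and distances $d(u, g_{L'}(u))$ in linear time per BFS. Each vertex $u$ lies in at most $|L_{v_0,u}| \le d \le \log_2 k$ such fibers' worth of triplets, so the total label-writing work, plus concatenating the recursively-computed labels $\Lambda_{V_L}(u)$, is $O(k(\log_2 k)^3)$. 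Hence in both cases the local cost at a node of size $k$ is $O(k(\log_2 k)^3)$.

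Next I would bound the recursion tree. The key structural point, already used in the proofs of Lemma~\ref{le:do_size} and Theorem~\ref{th:main_ecc}, is that at every recursive step the subgraphs passed to the children are vertex-disjoint and each has at most $\lfloor \frac{2n}{3}\rfloor$ vertices: in the balanced case the two halfspaces partition $V(H)$ and each has size $\le \lfloor\frac{2|V(H)|}{3}\rfloor$ by $3$-balancedness; in the unbalanced case the fibers $V_L$ (over nonempty POFs $L$ adjacent to $v_0$) are pairwise disjoint, omit $v_0$, and each satisfies $|V_L| < |V(H)|/3$ since $V_L \subseteq H_i'$ for any $E_i \in L$ and $H$ has no $3$-balanced $\Theta$-class. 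Therefore, for each level $\ell \ge 0$ of the recursion tree, the subgraphs appearing at that level are vertex-disjoint subgraphs of $G$, so their sizes sum to at most $n$; and the maximum size of a subgraph at level $\ell$ is at most $(2/3)^\ell n$. It follows that the recursion depth is $O(\log_2 n)$.

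Finally I would assemble the bound. Since the local cost at a node of size $k$ is $O(k(\log_2 k)^3) = O(k(\log_2 n)^3)$ and the node sizes at each fixed level sum to at most $n$, the total local cost over one level is $O(n(\log_2 n)^3)$. Multiplying by the $O(\log_2 n)$ levels gives total running time $O(n(\log_2 n)^4)$, which is the claimed bound. I would present this as a formal induction on $n$ if a cleaner write-up is wanted: letting $T(n)$ denote the worst-case running time, the recurrence $T(n) \le c\, n(\log_2 n)^3 + \max\big\{\, \sum_j T(n_j) : \sum_j n_j \le n,\ \max_j n_j \le \lfloor \tfrac{2n}{3}\rfloor \big\}$ with base case $T(n) = O(1)$ for $n \le 2$ solves to $T(n) = O(n(\log_2 n)^4)$, since the per-level total cost of $O(n(\log_2 n)^3)$ is incurred over $O(\log_2 n)$ levels.

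The main obstacle is the bookkeeping in the unbalanced case: one must verify that all the gates $g_{L'}(u)$ for the neighboring fibers $V_{L'}$ (with $L' \subsetneq L_{v_0,u}$, $|L'| = |L_{v_0,u}| - 1$) can indeed be extracted from a bounded number of BFS traversals rather than one per vertex, and that identifying which vertices belong to which $V_L$ — together with identifying the "opposite corner" vertex $v_L$ of each relevant hypercube — is doable within the $O(k(\log_2 k)^2)$ budget of the ladder-set BFS of Lemma~\ref{le:ladder_linear}. Here Lemma~\ref{le:ladder_gated} is essential: it tells us $V_L = F_{\stz(v_0)}[v_L]$, so a single gated-fiber BFS on $\stz(v_0)$ (gated by Proposition~2 of~\cite{ChLaRa19}) computes all the $V_L$'s at once, and the neighboring-fiber gates are then read off from gated-fiber BFSs internal to each recursive call on $G[V_L]$ — the cost of which is already absorbed into the recursion. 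Everything else is a routine application of the level-by-level summation used for Theorem~\ref{th:main_ecc}.
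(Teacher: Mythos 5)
Your global structure --- per-node cost $O(k(\log_2 k)^3)$, vertex-disjoint children each of size at most $\lfloor 2n/3\rfloor$, depth $O(\log_2 n)$, hence $O(n(\log_2 n)^4)$ overall --- is exactly the recurrence the paper uses, and your treatment of the balanced case is correct. The gap lies in the one genuinely non-trivial algorithmic step, which you yourself flag as ``the main obstacle'' but then resolve incorrectly: computing the $\lafib$ block in the unbalanced case, i.e., for every $u\in V_L$ and every $L'\subsetneq L$ with $\vert L'\vert=\vert L\vert-1$, the gate $g_{L'}(u)$ of $u$ in the \emph{neighboring} fiber $V_{L'}$ together with $d(u,g_{L'}(u))$. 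You propose that these gates ``are read off from gated-fiber BFSs internal to each recursive call on $G[V_L]$.'' This cannot work: $g_{L'}(u)$ is a vertex of $V_{L'}$, which is disjoint from $V_L$, so no traversal confined to $G[V_L]$ (nor any information computed inside the recursive call on that subgraph) can produce it --- one must use the edges crossing between $V_L$ and $V_{L'}$. The other reading of your sketch, namely one gated-fiber BFS of Lemma~\ref{le:compute_gates} on the whole graph per fiber $V_{L'}$, is correct but unaffordable: the number of nonempty POFs adjacent to $v_0$ can be $\Theta(n)$, so this would cost $\Theta(nm)$ in a single invocation. (The single BFS on $\stz(v_0)$ that you invoke via Lemma~\ref{le:ladder_gated} only identifies the sets $V_L$; it does not yield the cross-fiber gates.)

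The missing ingredient is the paper's restricted-BFS construction: for each pair $(L,L')$ it runs the BFS of Lemma~\ref{le:compute_gates} only on the auxiliary graph $G_{L,L'}=(V_L\cup\partial V_{L,L'},\,E[V_L]\cup\partial E_{L,L'})$, where $\partial V_{L,L'}$ collects the vertices of $V_{L'}$ adjacent to $V_L$ and $\partial E_{L,L'}$ the corresponding crossing edges, and then charges the total cost to edges: every edge of $E[V_L]$ appears in at most $\vert L\vert\le d\le\log_2 n$ such graphs and every crossing edge in exactly one, so all these BFSs together cost $O(m\log_2 n)=O(n(\log_2 n)^2)$ per invocation, and $O(n(\log_2 n)^3)$ after writing the labels. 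Without this argument your claimed per-node bound of $O(k(\log_2 k)^3)$ is unsubstantiated, and the final $O(n(\log_2 n)^4)$ does not follow.
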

\begin{proof}
Let $T(n)$ be the maximum construction time of a labeling $\Lambda_G$ for median graphs $G$ satisfying $\vert V(G) \vert = n$. We almost entierely handled the case where $G \notin \ut$: in Theorem~\ref{th:balanced_do}, we proved that, given a balanced $\Theta$-class $E_i$ and the labeling $\Lambda_{H_i'}$ and $\Lambda_{H_i''}$ of its halfspaces, we build the labeling $\Lambda_G$ in time $O(n(\log_2 n + \max \{\vert \Lambda_{H_i'} \vert, \vert \Lambda_{H_i''} \vert\}))$. Now, we naturally add the construction time for labelings $\Lambda_{H_i'}$ and $\Lambda_{H_i''}$ which were supposed to be known in Theorem~\ref{th:balanced_do}. Moreover, thanks to Lemma~\ref{le:do_size}, values $\vert \Lambda_{H_i'} \vert$ and $\vert \Lambda_{H_i''} \vert$ can be upper-bounded by $O((\log_2 n)^3)$. In summary, we obtain: $T(n) \le O(n(\log_2 n)^3) + T(\lambda n) + T(\mu n)$, where $\lambda = \frac{\vert H_i' \vert}{n}$ and $\mu = \frac{\vert H_i'' \vert}{n}$, hence $\lambda + \mu = 1$ and $\max\{\lambda,\mu\} \le \frac{2}{3}$.

We focus now on the case $G \in \ut$. Let $u \in V(G) \setminus \{v_0\}$. First, a BFS starting from $v_0$ gives us the ladder $L_{v_0,u}$ (Lemma~\ref{le:ladder_linear}) together with the distance $d(v_0,u)$. Second, we present a procedure to obtain all labels $\lafib(u)$. We begin with looking at all edges of the graph: if they connect a vertex of a fiber $V_L$ with another vertex of a fiber $V_{L'}$, where $L' \subsetneq L$, $\vert L' \vert = \vert L \vert - 1$, we associate this edge to the pair $(L,L')$. We denote by $\partial E_{L,L'}$ this kind of edges and by $\partial V_{L,L'}$ the set of vertices of $V_{L'}$ which admit a neighbor in $V_L$.

For any POF $L$ adjacent to $v_0$ and any subset $L' \subsetneq L$, $\vert L' \vert = \vert L \vert - 1$, we launch a BFS on graph $G_{L,L'} = (V_L \cup \partial V_{L,L'}, E[V_L] \cup \partial E_{L,L'})$ in order to determine, for each $u \in V_L$, its $V_{L'}$-gate $g_{L'}(u)$ and also $d(u,g_{L'}(u))$ (Lemma~\ref{le:compute_gates}). The running time of this operation depends on the total number of edges of all graphs $G_{L,L'}$. Any edge of $V_L$ belongs to at most $d \le \log_2 n$ such graphs, as there are $\vert L \vert$ possible subsets $L'$ of $L$. The edges connecting two sets $V_L$ and $V_{L'}$ are thus taken into account in only one graph $G_{L,L'}$. All in all, as each edge is considered in at most $\log_2 n$ such graphs $G_{L,L'}$, the total running time of this procedure is $O(m\log_2 n) = O(n(\log_2 n)^2)$. With the writing of each label, the total time spent to determine blocks $\laarg{L}$, $\ladist$ and $\lafib$ is $O(n(\log_2 n)^3)$. Finally, we need to compute recursively the DO of each fiber $V_L$. In brief, $T(n) \le O(n(\log_2 n)^3) + \sum_{L} T(\vert V_L \vert)$. As each $\vert V_L \vert < \frac{n}{3}$, we can write: $T(n) \le O(n(\log_2 n)^3) + \sum_k T(\alpha_k n)$, where $\sum_k \alpha_k = 1$ and $\max \{\alpha_k\} < \frac{1}{3}$.

Both inequations for $T(n)$ fall into the domain of the Master theorem described in~\cite{CoLeRiSt09}. At each recursive step, an extra time $O(n(\log_2 n)^3)$ is needed and, at the same time, the size of the instances on which recursive calls are launched is decreased by a constant factor. Moreover, the total number of vertices considered at each depth of the recursive tree stays equal to $n$. All in all, we have: $T(n) = O(n(\log_2 n)^4)$.
\end{proof}

The last part of our proof is certainly the most important of course: even if the labeling $\Lambda_G$ has poly-logarithmic size and can be computed in quasilinear time, we do not know yet whether it allows us to retrieve any distance in poly-logarithmic time. The algorithm we present below use a new technique for the case $G \in \ut$: to retrieve distance $d(u,v)$, we compare the ladder sets of both vertices, {\em i.e.} $L_{v_0,u}$ and $L_{v_0,v}$. The idea consists in going from $u$ to $v$ via "neighboring" ladder sets (differing from only one $\Theta$-class), using the information of labeling $\lafib$.

\begin{theorem}
 Let $G$ be a median graph and $u,v \in V(G)$. One can retrieve any distance $d(u,v)$ thanks to $\Lambda_G$ with poly-logarithmic query time $O(\log^4(n))$.
 \label{th:do_query}
\end{theorem}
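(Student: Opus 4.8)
The plan is to decode $d(u,v)$ by descending the recursive construction of $\Lambda_G$ (following Theorem~\ref{th:balanced_do} and Definition~\ref{def:do_unbalanced}), proving correctness by induction on $|V(G)|$ and using freely that the query may inspect the labels of vertices other than $u$ and $v$. If $|V(G)|\le 2$ there is nothing to read, since $d(u,v)\in\{0,1\}$. If $G$ has a $3$-balanced $\Theta$-class $E_i$, I first read $\lacla$ and $\laside$ from the labels of $u$ and $v$ (the same $E_i$ was recorded for every vertex); if $u,v$ lie in the same halfspace I recurse on that halfspace via $\larec(u),\larec(v)$; otherwise, say $u\in H_i'$ and $v\in H_i''$, I read $\lagate(u)$ and $\ladist(u)$, recurse on $G[H_i'']$ with $\larec(g(u))$ and $\larec(v)$ to obtain $d(g(u),v)$, and return $d(u,v)=d(u,g(u))+d(g(u),v)$; correctness here is exactly the gatedness argument used in the proof of Theorem~\ref{th:balanced_do}, and only $O(\log_2 n)$ work precedes the recursive call.

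The substantial case is $G\in\ut$. I first test whether $u$ or $v$ equals $v_0$ (its label at this level is empty); if $v=v_0$ then $d(u,v)=\ladiste(u)$. Otherwise both are distinct from $v_0$, and I read their ladder sets $L_u:=\laarg{L}(u)$ and $L_v:=\laarg{L}(v)$. If $L_u=L_v=:L$, then $u,v\in V_L$, which is gated, hence isometric (Lemma~\ref{le:ladder_gated}), so $d_G(u,v)=d_{G[V_L]}(u,v)$, obtained by recursing on $\Lambda_{V_L}$ with the sublabels $\larece(u),\larece(v)$ (the degenerate case $L=\emptyset$ is $V_L=\{v_0\}$, giving $0$). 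If $L_u\neq L_v$, I ``hop'' towards a common ladder set: pick a $\Theta$-class of $L_u\triangle L_v$, and, exchanging the roles of $u$ and $v$ if needed, assume it is an $E_i\in L_u\setminus L_v$; set $L'=L_u\setminus\{E_i\}$, read from $\lafibe(u)$ the triplet $(\{E_i\},\,g_{L'}(u),\,d(u,g_{L'}(u)))$, put $w:=g_{L'}(u)$, add $d(u,w)$ to an accumulator, and continue the procedure with the pair $(w,v)$. Since $L_w=L'$ drops $E_i$ and contains no class outside $L_u$, the quantity $|L_u\triangle L_v|$ strictly decreases with each hop; as ladder sets are POFs, $|L_u\triangle L_v|\le 2d\le 2\log_2 n$, so after at most $2\log_2 n$ hops the current pair has equal ladder sets (or has reached $v_0$), and the procedure terminates as above.

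Correctness of the $\ut$ case reduces to one claim: \emph{every hop stays on a shortest $u$--$v$ path}, i.e.\ $w=g_{L'}(u)\in I(u,v)$; granting it, the accumulated hop-distances plus the final recursive distance sum to $d(u,v)$ by a further induction on $|L_u\triangle L_v|$, bottoming out in the $L_u=L_v$ branch, which recurses on a strictly smaller graph. To prove the claim I will first observe that $E_i$ separates $u$ from both $v$ and $w$: $v_0$ lies in the majority halfspace of $E_i$ (Lemma~\ref{le:unique_median}), $E_i\in L_u$ puts $u$ in the minority halfspace $H_i'$, and $E_i\notin L_v\cup L'$ keeps $v$ and $w$ on the $v_0$-side $H_i''$; hence $E_i\in\sigma_{u,w}\cap\sigma_{u,v}$ and $E_i\notin\sigma_{w,v}$. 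Now suppose $w\notin I(u,v)$, so some $E_k\in\sigma_{u,w}\cap\sigma_{w,v}$: then $u,v$ are together on one side of $E_k$ and $w$ on the other. Because $w$ is the gate of $u$ in the gated fiber $V_{L'}$, the only $v_0$-adjacent class of $\sigma_{u,w}$ is $L_u\triangle L'=\{E_i\}$, so $E_k$ is not adjacent to $v_0$; moreover $w$ lies on a shortest path from $u$ to every vertex of $V_{L'}$, so $\sigma_{u,w}\subseteq\sigma_{u,x}$ for all $x\in V_{L'}$, and $V_{L'}$ — hence also $v_0$, which is separated from the hypercube-corner $v_{L'}\in V_{L'}$ only by $v_0$-adjacent classes — lies on the $w$-side of $E_k$, which is therefore the majority halfspace; consequently $u$ and $v$ both sit in the minority halfspace of a class not adjacent to $v_0$. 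Excluding this last configuration is the heart of the argument, to be carried out by a convexity argument on the boundaries $\partial H_k'$ and $\partial H_k''$ and their $E_k$-isomorphism (Lemma~\ref{le:boundaries}), in the style of the proof of Lemma~\ref{le:ecc_large}; this is the step I expect to be the main obstacle.

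Finally, the running time. The recursion on $\Lambda_G$ has depth $O(\log n)$, since every recursive call is made on a halfspace of a $3$-balanced class, or on a fiber $V_L$ with $G\in\ut$, each containing at most a constant fraction of the vertices. A balanced node costs $O(\log_2 n)$ before recursing. An unbalanced node performs $O(\log_2 n)$ hops, and each hop reads and compares ladder-set blocks of size $O((\log_2 n)^2)$ and scans the $O(\log_2 n)$ triplets of a $\lafibe$ block, for $O((\log_2 n)^2)$ time per hop and $O((\log_2 n)^3)$ per node. Multiplying by the $O(\log n)$ recursion depth yields the claimed query time $O(\log^4 n)$.
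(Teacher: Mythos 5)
Your decoding procedure is in substance the same as the paper's (hop from fiber to fiber along the ladder sets using the gates stored in $\lafib$, accumulate the gate-to-gate distances, and finish with one recursive query inside the fiber of the common ladder set; balanced case and the easy subcases identical), and your complexity accounting matches. But there is a genuine gap exactly where you flag it: the claim that every hop stays on a shortest $(u,v)$-path, i.e.\ $g_{L'}(u)\in I(u,v)$, is never proved, and it is the entire content of the correctness of the unbalanced case. Your reduction leaves you having to exclude the configuration of a $\Theta$-class $E_k$ not adjacent to $v_0$ with $E_k\in\sigma_{u,w}\cap\sigma_{w,v}$, $u,v$ in its minority halfspace and $w,v_0$ in its majority halfspace; but nothing in that configuration is self-contradictory by itself -- two vertices can perfectly well sit together in the minority halfspace of a class far from $v_0$. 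What has to be exploited is the finer fact that $\sigma_{u,w}\subseteq\sigma_{u,v}$ when $w$ is this particular gate, and the boundary-convexity argument of Lemma~\ref{le:ecc_large} does not transfer as-is: there the doubled class was adjacent to $v_0$ and both endpoints of the offending subpath lay on the same boundary $\partial H_j''$, whereas here $E_k$ is by construction not adjacent to $v_0$. So the step you defer as ``the main obstacle'' is precisely the theorem's core, and the proposal is incomplete.

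The paper closes this gap by a different device: it introduces the median $m=m(u,v,v_0)$ and shows first that $L_{v_0,m}=L_{v_0,u}\cap L_{v_0,v}$, so $m\in V_{L^{(0)}}$; it then constructs a shortest $(u,m)$-path (and symmetrically a $(v,m)$-path) through the successive gates for the \emph{majority halfspaces} of the removed classes, and observes that each such halfspace-gate lies in the corresponding fiber $V_{L^{(-i)}}$ and hence coincides with the fiber gate stored in $\lafib$. Since $m\in I(u,v)$, these gates lie in $I(u,m)\subseteq I(u,v)$, which is exactly your missing claim. If you wish to keep your hop-by-hop induction, the cleanest repair is to prove the single-hop claim by this median argument: $m$ lies in the majority halfspace of the removed class $E_i$, so the gate of $u$ for that halfspace belongs to $I(u,m)\subseteq I(u,v)$, and one checks its ladder set is exactly $L'$, so it equals the stored $g_{L'}(u)$. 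A minor additional point: your degenerate hop to $L'=\emptyset$ invokes a gate for ``$V_\emptyset$'' that the labeling of Definition~\ref{def:do_unbalanced} does not store; the paper avoids this by treating the case $L_{v_0,u}\cap L_{v_0,v}=\emptyset$ separately, where $v_0\in I(u,v)$ and the two $\ladist$ entries suffice.
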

\begin{proof}
For graphs with at most two vertices, one can retrieve distances in constant time. We proceed by induction on the size of $\vert V(G) \vert$ to prove that labeling $\Lambda_G$ is a DO. For median graphs $G \notin \ut$: according to Theorem~\ref{th:balanced_do}, if the DOs $\Lambda_{H_i'}$ and $\Lambda_{H_i''}$ are known, then the labeling $\Lambda_G$ we built is a DO. By induction hypothesis, our labeling is a DO for graphs with a smaller number of vertices, in particular $G[H_i']$ and $G[H_i'']$. So, $\Lambda_G$ is a DO for $G \notin \ut$ with query time $\tau(n) = O(\log_2 n) + \max \{\tau(\vert H_i' \vert), \tau(\vert H_i'' \vert)\} \le O(\log_2 n) + \tau(\lfloor \frac{2n}{3} \rfloor)$.

Let us focus on the more complicated case $G \in \ut$: our objective is to retrieve a distance $d(u,v)$, for some pair $u,v \in V(G)$. A first simple case occurs when one of these vertices is $v_0$, w.l.o.g. $u = v_0$. The label $\ladist(v)$, which is a part of $\Lambda_G(v)$, gives directly distance $d(v_0,v)$. Hence, we retrieve this distance in constant time. 

Assume that both $u$ and $v$ are different from $v_0$. Both admit a ladder set with $v_0$, which can be deduced from $\laarg{L}(u)$ and $\laarg{L}(v)$. Say, w.l.o.g., that $\vert L_{v_0,u}\vert \le \vert L_{v_0,v}\vert$. Another simple case is when $L_{v_0,u} = L_{v_0,v}$. The distance $d(u,v)$ can be computed thanks to a query on the labels $\Lambda_{V_L}(u)$ and $\Lambda_{V_L}(v)$, where $L = L_{v_0,u}$. Indeed, $V_L$ is gated, so $G[V_L]$ is a median subgraph of $G$. These two labels are given by the part $\larec$ of $\Lambda_G$ over each vertex. The query time in this situation is thus $\tau(\lfloor \frac{n}{3} \rfloor)$ by induction hypothesis, as $\vert V_L \vert < \frac{n}{3}$.

Nevertheless, in general, $L_{v_0,u} \neq L_{v_0,v}$. A third case easy to handle is when $L_{v_0,u} \cap L_{v_0,v} = \emptyset$. In this scenario, the signatures $\sigma_{v_0,u}$ and $\sigma_{v_0,v}$ have no $\Theta$-class in common: if they had one, say $E_i$, by convexity of its boundary $\partial H_i''$, then $v_0 \in \partial H_i''$ and $E_i$ must belong to both ladder sets from Definition~\ref{def:ladder}. So, $v_0 \in I(u,v)$ and the distance $d(u,v) = d(u,v_0) + d(v_0,v)$ can be retrieved from labels $\ladist(u)$ and $\ladist(v)$.

The remaining (and hard) case is the following one: $L_{v_0,u} \neq L_{v_0,v}$ and $L_{v_0,u} \cap L_{v_0,v} \neq \emptyset$. First, we compute the \textit{ladder sequence} $\mathcal{S}_{u,v}$ between these two sets $L_{v_0,u}$ and $L_{v_0,v}$: this a finite sequence of POFs, starting from $L_{v_0,u}$ and finishing at $L_{v_0,v}$. The size gap between two consecutive POFs of this sequence is exactly $1$. The first part of $\mathcal{S}_{u,v}$ starts with $L_{v_0,u}$ and goes towards $L_{v_0,u} \cap L_{v_0,v}$. At each step, a $\Theta$-class of $L_{v_0,u} \setminus L_{v_0,v}$ is withdrawn from the current POF. The second part of $\mathcal{S}_{u,v}$ starts with $L_{v_0,u} \cap L_{v_0,v}$ and goes towards $L_{v_0,v}$ by adding to the current POF a $\Theta$-class of $L_{v_0,v} \setminus L_{v_0,u}$ at each step. The elements of sequence $\mathcal{S}_{u,v}$ are denoted by: $$\mathcal{S}_{u,v} = \left(L^{(-r)},L^{(-r+1)},\ldots,L^{(-1)},L^{(0)},L^{(1)},\ldots,L^{(t-1)},L^{(t)}\right),$$
where $r$ and $t$ are nonnegative integers, $L^{(-r)} = L_{v_0,u}$, $L^{(t)} = L_{v_0,v}$, and $L^{(0)} = L_{v_0,u} \cap L_{v_0,v}$. For any integer $0\le i \le r-1$, $L^{(-i+1)}$ is obtained from $L^{(-i)}$ by withdrawing one $\Theta$-class belonging to $L^{(-i)} \setminus L_{v_0,v}$. Similarly, for any $0\le j \le t-1$, $L^{(j)}$ is obtained from $L^{(j+1)}$ by withdrawing one $\Theta$-class of $L^{(j+1)} \setminus L_{v_0,u}$. As an example, if $L_{v_0,u} = \{E_1,E_2,E_3\}$ and $L_{v_0,v} = \{E_3,E_4,E_5\}$, then $\mathcal{S}_{u,v}$ might be: $(\{E_1,E_2,E_3\},\{E_2,E_3\},\{E_3\},\{E_3,E_4\},\{E_3,E_4,E_5\})$. As each ladder set has size at most $d$, the length of sequence $\mathcal{S}_{u,v}$ is upper-bounded by $2d$.

For any integer $0\le i\le r-1$, we denote by $E^{(-i)}$ the $\Theta$-class of the singleton $L^{(-(i+1))} \setminus L^{(-i)}$. Similarly, $E_{(j)}$ denotes the $\Theta$-class of $L^{(j+1)} \setminus L^{(j)}$, for $0\le j\le t-1$. We show that there is a shortest $(u,v)$-path passing through all sets $V_L$ for $L \in \mathcal{S}_{u,v}$. We proceed in three steps, each one characterized by a claim.

\textit{Claim 1}: the median of triplet $u,v,v_0$, vertex $m = m(u,v,v_0)$ (see Definition~\ref{def:median}), belongs to $V_{L^{(0)}}$. As $m \in I(u,v)$, then it belongs, by convexity, to the minority halfspaces of all $\Theta$-classes in $L_{v_0,u} \cap L_{v_0,v}$. But, as $m \in I(v,v_0)$ and both $v$ and $v_0$ are into the majority halfspace of all $\Theta$-classes of $L_{v_0,u} \setminus L_{v_0,v}$, then $m$ is also in the majority halfspace of these $\Theta$-classes. Observe that the same argument holds for $L_{v_0,v} \setminus L_{v_0,u}$, because $m \in I(u,v_0)$. In summary, $m$ belongs to the majority halfspace of all $\Theta$-classes in the symmetric difference between $L_{v_0,u}$ and $L_{v_0,v}$, but to the minority halfspace of all $\Theta$-classes in $L_{v_0,u} \cap L_{v_0,v}$. Therefore, $L_{v_0,m} = L_{v_0,u} \cap L_{v_0,v}$.

\textit{Claim 2}: there exists a shortest $(u,m)$-path passing through all fibers $V_{L^{(-i)}}$, $0\le i\le r$. We show briefly that a way to go from $u$ to $m$ is to traverse fiber $V_{L^{(-r+1)}}$. We will see that, applying the same argument iteratively, leads to the same conclusion than Claim 2. Vertex $u$ belongs to the minority halfspace of $E^{(-r+1)}$ while $m$ belongs to its majority halfspace. So, $u$ admits a gate $g_{-r+1}(u)$ for the majority halfspace of $E^{(-r+1)}$. Vertex $g_{-r+1}(u)$ belongs to $V_{L^{(-r+1)}}$. Then, we can pursue with the same argument for the gate $g_{-r+1}(u)$: it admits a gate $g_{-r+2}(u)$ in the majority halfspace of $E^{(-r+2)}$, and so on. Finally, we produce a sequence of gates, all belonging by definition to $I(u,m)$. The last gate of this sequence, $g_0(u)$, belongs to $V_{L^{(0)}}$, such as $m$, hence the last part of the shortest $(u,m)$-path stays in $V_{L^{(0)}}$, by convexity of fibers.

\textit{Claim 3}: there exists a shortest $(v,m)$-path passing through all fibers $V_{L^{(j)}}$, $0\le j\le t$. The argument is the same as the one for Claim 2: we start from $v$ and determine iteratively the successive gates on the majority halfspaces of respectively $\Theta$-classes $E^{(t-1)},E^{(t-2)},\ldots$

As a conclusion of the three claims, since $m \in I(u,v)$ belongs to $V_{L^{(0)}}$, we ensure that there exists a shortest $(u,v)$-path, consisting first in a section from $u$ to $m$ passing through all fibers indexed negatively (Claim 2), second in a section from $m$ to $v$ passing through all fibers indexed positively (Claim 3). In brief, there is a shortest $(u,v)$-path passing successively through the fibers $V_L$, where $L$ follows the sequence $\mathcal{S}_{u,v}$. An example is provided in Figure~\ref{fig:retrieve_dist}, with a sequence $\mathcal{S}_{u,v} = (\{E_1,E_3\},\{E_1\},\{E_1,E_2\})$. The orange dashed path represents a shortest $(u,v)$-path traversing all successive fiber gates.

\begin{figure}[h]
\centering
\scalebox{0.7}{\begin{tikzpicture}

\coordinate (s01) at (1.0,4.6) {};
\coordinate (s02) at (3.2,3.2) {};
\coordinate (s03) at (8.3,3.2) {};
\coordinate (s04) at (8.3,6.3) {};
\coordinate (s05) at (0.2,6.3) {};
\coordinate (s11) at (4.0,3.0) {};
\coordinate (s12) at (4.0,-0.8) {};
\coordinate (s13) at (9.3,-0.8) {};
\coordinate (s14) at (9.3,1.2) {};
\coordinate (s15) at (8.3,1.2) {};
\coordinate (s16) at (8.3,3.0) {};
\coordinate (s17) at (4.0,6.3) {};
\coordinate (s21) at (3.8,-0.8) {};
\coordinate (s22) at (-2.0,-0.8) {};
\coordinate (s23) at (-2.0,6.3) {};
\coordinate (s24) at (-0.1,6.3) {};
\coordinate (s25) at (0.8,4.4) {};
\coordinate (s26) at (3.2,3.0) {};
\coordinate (s27) at (3.8,3.0) {};





\node[scale=2, rotate=-60] at (-0.2,2) {{\bf \vdots}};
\node[scale=2, rotate=90] at (5.7,2.3) {{\bf \vdots}};
\node[scale=2, rotate=60] at (-1.8,4.6) {{\bf \vdots}};

\draw[rounded corners, color = cyan, fill = white!90!cyan, line width = 0.8pt] (1.3,5.3) -- (-0.5,7.4) -- (2.5,7.4) -- (1.3,5.3);
\node[scale=1.3] at (1.0,7.8) {$V_{\{E_1,E_3\}}$};
\draw[rounded corners, color = blue, fill = white!90!blue, line width = 0.8pt] (3,4) -- (3.2,7.0) -- (5.2,6.6) -- (3,4);
\node[scale=1.3] at (4.4,7.3) {$V_{\{E_1\}}$};
\draw[rounded corners, color = green, fill = white!90!green, line width = 0.8pt] (1.3,3.8) -- (0.3,5.6) -- (-2.0,5.7) -- (1.3,3.8);
\node[scale=1.3] at (-1.0,6.1) {$V_{\{E_3\}}$};
\draw[rounded corners, color = purple, fill = white!90!purple, line width = 0.8pt] (5,4) -- (9.5,4.5) -- (5.9,7.6) -- (5,4);
\node[scale=1.3] at (10.4,4.6) {$V_{\{E_1,E_2\}}$};


\node[draw, circle, minimum height=0.2cm, minimum width=0.2cm, fill=black] (P11) at (1,1) {};
\node[draw, circle, minimum height=0.2cm, minimum width=0.2cm, fill=black] (P12) at (1,2.5) {};

\node[draw, circle, minimum height=0.2cm, minimum width=0.2cm, fill=black] (P21) at (3,1) {};
\node[draw, circle, minimum height=0.2cm, minimum width=0.2cm, fill=black] (P22) at (3,2.5) {};
\node[draw, circle, minimum height=0.2cm, minimum width=0.2cm, fill=black] (P23) at (3,4) {};

\node[draw, circle, minimum height=0.2cm, minimum width=0.2cm, fill=black] (P31) at (5,1) {};
\node[draw, circle, minimum height=0.2cm, minimum width=0.2cm, fill=black] (P32) at (5,2.5) {};
\node[draw, circle, minimum height=0.2cm, minimum width=0.2cm, fill=black] (P33) at (5,4) {};

\node[draw, circle, minimum height=0.2cm, minimum width=0.2cm, fill=black] (P41) at (1.3,3.8) {};
\node[draw, circle, minimum height=0.2cm, minimum width=0.2cm, fill=black] (P42) at (1.3,5.3) {};
\node[draw, circle, minimum height=0.2cm, minimum width=0.2cm, fill=black] (P43) at (-0.7,3.8) {};


\node[draw, circle, minimum height=0.2cm, minimum width=0.2cm, fill=black] (P51) at (8.4,4.8) {};
\node[draw, circle, minimum height=0.2cm, minimum width=0.2cm, fill=black] (P52) at (1.1,6.8) {};

\node[draw, circle, minimum height=0.2cm, minimum width=0.2cm, fill=black] (P61) at (3.6,6.0) {};
\node[draw, circle, minimum height=0.2cm, minimum width=0.2cm, fill=black] (P62) at (3.4,5.1) {};

\coordinate (i1) at (2.2,7.2) {};
\coordinate (i2) at (5.8,5.1) {};

\draw[line width = 1.8pt, dashed, color = orange, out = -20, in = 160] (P52) to (i1);
\draw[line width = 1.8pt, dashed, color = orange] (i1) -- (P61);
\draw[line width = 1.8pt, dashed, color = orange, out = -150, in = 0] (P51) to (i2);
\draw[line width = 1.8pt, dashed, color = orange] (i2) -- (P62);
\draw[line width = 1.8pt, dashed, color = orange] (P61) -- (P62);


\draw[line width = 1.4pt] (P11) -- (P12);
\draw[line width = 1.4pt] (P11) -- (P21);
\draw[line width = 1.4pt] (P12) -- (P22);
\draw[line width = 1.4pt] (P21) -- (P22);

\draw[line width = 1.8pt, color = red] (P21) -- (P31);
\draw[line width = 1.8pt, color = red] (P22) -- (P32);
\draw[line width = 1.4pt] (P31) -- (P32);

\draw[line width = 1.8pt, color = blue] (P22) -- (P23);
\draw[line width = 1.8pt, color = red] (P23) -- (P33);
\draw[line width = 1.8pt, color = blue] (P32) -- (P33);

\draw[line width = 1.8pt, color = green] (P22) -- (P41);
\draw[line width = 1.8pt, color = green] (P12) -- (P43);
\draw[line width = 1.8pt, color = green] (P23) -- (P42);
\draw[line width = 1.4pt] (P41) -- (P43);
\draw[line width = 1.8pt, color = blue] (P41) -- (P42);


\node[scale=1.5] at (0.8,7.1) {$u$};
\node[scale=1.5] at (8.2,5.2) {$v$};
\node[scale=1.5] at (2.7,2.2) {$v_0$};
\node at (4.2,6.4) {$g_{\{E_1\}}(u)$};
\node at (2.4,4.9) {$g_{\{E_1\}}(v)$};
\node[scale=1.5,color = blue] at (5.4,3.2) {$E_1$};
\node[scale=1.5,color = red] at (4.0,1.4) {$E_2$};
\node[scale=1.5,color = green] at (0,2.8) {$E_3$};

\end{tikzpicture}}
\caption{A concrete view of a median graph $G \in \ut$ and its fiber to show how we retrieve distance $d(u,v)$, with $L_{v_0,u} = \{E_1,E_3\}$ and $L_{v_0,v} = \{E_1,E_2\}$}
\label{fig:retrieve_dist}
\end{figure}

In order to retrieve distance $d(u,v)$, we use some labels of each fiber $V_L$, where $L \in \mathcal{S}_{u,v}$. From label $\lafib(u)$, we obtain the gate of $u$ for the set $V_{L^{(-r+1)}}$, which is necessarily $g_{-r+1}(u)$ since $V_{L^{(-r+1)}}$ is a subset of the majority halfspace of $E^{(-r+1)}$. Then, from $\lafib(g_{-r+1}(u))$, we obtain the second gate, which belongs to set $V_{L^{(-r+2)}}$, and so on. In summary, we follow the sequence $\mathcal{S}_{u,v}$ of logarithmic size and obtain from $\lafib$ successively the pair gate/distance in the next majority halfspace considered. By simply summing up all the distances between the different gates, we obtain $d(u,v)$. The distance that remains unknown even after the pick up of pairs gate/distance is the distance between the two gates obtain in fiber $V_{L^{(0)}}$ ({\em e.g.} vertices $g_{\{E_1\}}(u)$ and $g_{\{E_1\}}(v)$ in Figure~\ref{fig:retrieve_dist}). To compute the distance between these two final gates, a query on $\larec$ for these gates suffices. Hence, distance $d(u,v)$ is now determined.


As $u$ and $v$ belong to different fibers but with a nonempty intersection, the whole process consists in (i) retrieving the ladder set of both vertices thanks to $\laarg{L}$, (ii) computing the sequence $\mathcal{S}_{u,v}$ and then the successive gates thanks to $\lafib$, (iii) obtaining the distance between all successive gates thanks to $\lafib$, and (iv) computing the distance between the gates of $V_{L^{(0)}}$ thanks to $\larec$. The cost of (i) is negligible compared to the one of (ii) and (iii) together, which is $O((\log_2 n)^3)$, since the computation of one gate takes $O(\vert \lafib \vert) = O((\log_2 n)^2)$ and there are at most $2d$ of them. So, taking also (iv) into account, we have $\tau(n) = \tau(\lfloor \frac{n}{3}\rfloor) + O((\log_2 n)^3)$.


Considering all possible cases, an upper bound is $\tau(n) = \tau(\lfloor \frac{2n}{3}\rfloor) + O((\log_2 n)^3)$, consequently, by Master theorem, $\tau(n) = O((\log_2 n)^4)$.
\end{proof}

The labeling $\Lambda_G$ thus guarantees all the properties listed in the following statement, which is the second contribution of our paper.

\mainDO*

\bibliographystyle{plain}
\bibliography{median}

\end{document}